\documentclass[11pt]{article}

\usepackage[T1]{fontenc}
\usepackage{lmodern}
\usepackage[margin=1in]{geometry}
\usepackage{microtype}
\usepackage{mathtools,amssymb,amsthm}
\usepackage{booktabs}
\usepackage{array}
\usepackage{enumitem}
\usepackage{xcolor}
\usepackage[
  colorlinks=true,
  linkcolor=blue!45!black,
  citecolor=blue!45!black,
  urlcolor=blue!45!black,
  pdftitle={Lower Bounds for Domination-Type Problems Parameterized by Rank-Width},
  pdfauthor={Chenghua Liu and Boning Meng}
]{hyperref}
\usepackage{graphicx}
\usepackage{tikz}
\usetikzlibrary{shapes.geometric,positioning}
\usepackage{authblk}
\setlist{nosep,leftmargin=*}
\allowdisplaybreaks
\newcolumntype{L}[1]{>{\raggedright\arraybackslash}p{#1}}

\newtheorem{theorem}{Theorem}[section]
\newtheorem{lemma}[theorem]{Lemma}
\newtheorem{corollary}[theorem]{Corollary}
\newtheorem{proposition}[theorem]{Proposition}
\theoremstyle{remark}
\newtheorem{remark}[theorem]{Remark}
\numberwithin{equation}{section}

\newcommand{\F}{\mathbb F_2}
\newcommand{\rk}{\operatorname{rank}_{\F}}
\newcommand{\rw}{\operatorname{rw}}
\newcommand{\lrw}{\operatorname{lrw}}
\newcommand{\width}{\operatorname{width}}
\newcommand{\Zle}{Z_{\leq}}
\newcommand{\Zeq}{Z_{=}}
\newcommand{\IDS}{\textnormal{Independent Dominating Set}}
\newcommand{\CDS}{\textnormal{Connected Dominating Set}}
\newcommand{\TDS}{\textnormal{Total Dominating Set}}

\title{Lower Bounds for Domination-Type Problems\\
Parameterized by Rank-Width}

\author[1]{Chenghua Liu}
\author[2]{Boning Meng}

\affil[1]{Institute of Software, Chinese Academy of Sciences, Beijing, China}
\affil[2]{University of Regensburg, Regensburg, Germany}
\affil[ ]{\texttt{liuch.russell@gmail.com}, \texttt{mengboning2013@gmail.com}}

\date{}

\begin{document}
\maketitle

\begin{abstract}
For graphs of rank-width \(w\), the algorithms of Bui-Xuan, Telle, and Vatshelle (\emph{Theor. Comput. Sci.}, 2013) for fixed finite/cofinite \((\sigma,\rho)\)-problems and of Bergougnoux and Kant\'e (\emph{SIAM J. Discrete Math.}, 2021) for Connected Dominating Set run in \(2^{O(w^2)}n^{O(1)}\) time. Bergougnoux, Korhonen, and Nederlof (STACS 2023) proved a matching lower bound under the Exponential Time Hypothesis (ETH) for \emph{Weighted} Dominating Set, but left the unweighted problem open. We prove that, unless ETH fails, Dominating Set admits no \(2^{o(w^2)}n^{O(1)}\)-time algorithm, even on split graphs and, separately, on bipartite graphs of diameter at most four, and even with a rank-decomposition or witnessing vertex order supplied. The proof replaces the earlier weights by a two-guard gadget and uses a low-rank equality gadget to carry \(k^2\) assignment bits through cuts of rank \(O(k)\). The construction also gives the same lower bound for Independent, Connected, and Total Dominating Set on restricted graph classes and applies to a broad family of \((\sigma,\rho)\)-set problems. This family includes cases in which \(\sigma\) is neither finite nor cofinite and contains the entire nontrivial cofinite--cofinite minimization regime. Every solution within the target budget has target size and corresponds bijectively to a satisfying assignment. Under the counting Exponential Time Hypothesis (\(\#\mathrm{ETH}\)), the same bounds therefore hold for counting solutions of size at most or exactly the target. Together with the known algorithms, our results show that the quadratic dependence on the rank-width \(w\) is optimal up to constant factors in the exponent for the classical problems above and throughout the covered finite/cofinite regime.
\end{abstract}
\section*{Acknowledgments}
The authors used OpenAI's ChatGPT in preparing
this manuscript, including for language editing and \LaTeX{} preparation.
ChatGPT also contributed to the exploratory development of most of the
arguments. All claims and proofs were independently checked and finalized
by the authors, who take full responsibility for the content.
\section{Introduction}
\label{sec:introduction}

Domination-type problems provide a natural meeting point of structural graph theory, exact and parameterized algorithms, and counting complexity. Two kinds of structure matter simultaneously.  The local domination rule determines which neighbor counts are permitted, while the host graph class determines which global adjacency patterns are available.  Width parameters connect these perspectives by measuring the information that must cross a decomposition cut.

This paper studies how much information domination algorithms must retain across cuts of small rank over \(\mathbb F_2\).  Its main conclusion is that a cut of rank \(w\) can encode genuinely quadratic information: the familiar \(2^{O(w^2)}\) algorithms cannot in general be improved to \(2^{o(w^2)}n^{O(1)}\), even for unweighted problems, for restricted graph classes, and for the corresponding problems of counting solutions up to, or exactly at, the target size.  We first review domination theory and the \((\sigma,\rho)\)-framework, then place the result within the algorithmic theory of rank-width.

\subsection{Domination and the \texorpdfstring{\((\sigma,\rho)\)}{(sigma,rho)}
framework}

A set \(D\subseteq V(G)\) is a \emph{dominating set} of a graph \(G\) if every vertex outside \(D\) has a neighbor in \(D\).  Equivalently, \(N_G[v]\cap D\neq\varnothing\) for every \(v\in V(G)\).  The minimum size of such a set is the domination number \(\gamma(G)\).  Domination is one of the central vertex-subset themes in graph theory: it models the placement of facilities so that every location is selected or reaches a selected location.  The subject has developed into a large theory of structural characterizations, algorithms, complexity, and variants.  The companion volumes \emph{Fundamentals of Domination in Graphs} and \emph{Domination in Graphs: Advanced Topics} provide standard accounts of this theory and its terminology \cite{haynes-hedetniemi-slater-1998,haynes-hedetniemi-slater-advanced-1998}.

The basic requirement of a dominating set can be strengthened in several ways.  A solution may be required to be independent or connected, as in Independent Dominating Set and Connected Dominating Set; selected vertices may themselves have to be dominated, as in Total Dominating Set; every vertex may require \(q\) selected representatives, as in \(q\)-Tuple Dominating Set for a fixed integer \(q\geq1\); or an unselected vertex may be required to have exactly one selected neighbor, as in Perfect Dominating Set. These choices lead to different structural parameters and substantially different algorithmic behavior. Independent and total domination are already classical examples \cite{cockayne-hedetniemi-1976,cockayne-dawes-hedetniemi-1980}, while the two companion volumes above survey many further forms \cite{haynes-hedetniemi-slater-1998,haynes-hedetniemi-slater-advanced-1998}.

The associated optimization problem is computationally difficult.  In particular, the decision version of Dominating Set remains NP-complete on split graphs and on bipartite graphs~\cite{bertossi-1984}.  For associated counting problems, the host graph class is especially important, and graph classes that are all highly structured can lie on opposite sides of the counting-complexity boundary.  Kijima, Okamoto, and Uno gave polynomial-time algorithms for counting dominating sets and minimum dominating sets on interval and trapezoid graphs.  On split graphs and chordal bipartite graphs, counting all dominating sets is \(\#\mathrm{P}\)-complete, whereas counting minimum dominating sets is \(\#\mathrm{P}\)-hard~\cite{kijima-okamoto-uno-2011}.  Further exact counting hardness is known under strong restrictions.  Hunt et al.\ proved \(\#\mathrm{P}\)-completeness for counting minimum dominating sets in planar graphs and for counting dominating sets in planar bipartite graphs~\cite{hunt-et-al-1998}.  In a recent preprint, Zheng and Meng proved, under polynomial-time Turing reductions, that counting all dominating sets and counting all total dominating sets are \(\#\mathrm{P}\)-complete even on 3-regular planar bipartite simple graphs~\cite{zheng-meng-2026}.

Many variants differ only in the number of selected neighbors allowed at a vertex. Let \(\mathbb N_{\geq q}:=\{q,q+1,\ldots\}\). Telle's state-based framework, now commonly written as the \((\sigma,\rho)\)-framework, captures these local constraints uniformly~\cite{telle-1994}.  For the nonempty sets considered here, let \(\sigma,\rho\subseteq\mathbb N_{\ge 0}\). A set \(S\subseteq V(G)\) is a \emph{\((\sigma,\rho)\)-set} if
\begin{equation}
  |N_G(v)\cap S|\in
  \begin{cases}
    \sigma,&v\in S,\\
    \rho,&v\notin S.
  \end{cases}
  \label{eq:sigma-rho-definition}
\end{equation}
For example, Dominating Set is the pair \((\mathbb N_{\ge 0},\mathbb N_{\ge 1})\), Independent Dominating Set is \((\{0\},\mathbb N_{\ge 1})\), Total Dominating Set is \((\mathbb N_{\ge 1},\mathbb N_{\ge 1})\), and \(q\)-Tuple Dominating Set for fixed \(q\geq1\) is \((\mathbb N_{\ge q-1},\mathbb N_{\ge q})\).  Perfect Dominating Set and Perfect Code correspond to \((\mathbb N_{\ge 0},\{1\})\) and \((\{0\},\{1\})\), respectively.  The framework also contains induced bounded- or prescribed-degree problems; Bui-Xuan, Telle, and Vatshelle developed decomposition algorithms for its locally checkable vertex-subset and vertex-partitioning generalizations~\cite{bui-xuan-telle-vatshelle-2013}.

The counting complexity of this framework has recently received systematic attention.  Assuming the counting Strong Exponential Time Hypothesis (\(\#\mathrm{SETH}\)), Focke et al.\ determined the optimal exponential base for counting \((\sigma,\rho)\)-sets of prescribed size on graphs supplied with a tree decomposition, for every fixed nontrivial finite/cofinite pair~\cite{focke-et-al-2023}.  Zheng and Meng introduced the broader Counting General Dominating Set (\(\#\)GDS) framework, a weighted counting framework that encompasses \(\#(\sigma,\rho)\)-Set~\cite{zheng-meng-2026}.  For maximum induced matching width (mim-width), for every nonempty \(\sigma,\rho\subseteq\mathbb N_{\ge0}\) with \(0\notin\rho\), the problem of finding a minimum-cardinality \((\sigma,\rho)\)-set is \(\mathrm{W}[1]\)-hard when parameterized jointly by the width \(w\) of a supplied linear branch decomposition and the target solution size; under ETH it admits no \(f(w)n^{o(w/\log w)}\)-time algorithm for any computable \(f\).  For finite/cofinite pairs, this is within a logarithmic factor in the exponent of the known \(n^{O(w)}\) XP upper bound~\cite{bakkane-jaffke-2022}. Our focus is the qualitatively different behavior of rank-width.

\subsection{Rank-width and domination algorithms}

Width parameters expose a graph through a sequence of controlled separations.  Treewidth is especially effective on sparse graphs, whereas clique-width and rank-width also remain bounded on many dense graph classes.  Rank-width was introduced by Oum and Seymour through the rank over \(\mathbb F_2\) of the adjacency matrix across a cut~\cite{oum-seymour-2006}.  It has the same bounded graph classes as clique-width and satisfies
\begin{equation}
  \rw(G)\leq \operatorname{cw}(G)
  \leq 2^{\rw(G)+1}-1.
  \label{eq:rw-cw-relation}
\end{equation}
This relation is due to Oum and Seymour~\cite{oum-seymour-2006}.
Rank-width has particularly strong decomposition algorithms.  For every fixed \(k\), one can in \(O(|V(G)|^3)\) time either construct a rank-decomposition of width at most \(k\) or certify that \(\rw(G)>k\)~\cite{hlineny-oum-2008}.  Oum also gave an \(O(8^k|V(G)|^4)\)-time certifying approximation that either outputs a rank-decomposition of width at most \(3k+1\) or certifies that \(\rw(G)>k\)~\cite{oum-2008,oum-rankwidth-survey-2017}.  The path-like analogue of rank-width is linear rank-width, denoted by \(\lrw\), and satisfies \(\rw(G)\leq\lrw(G)\).

General monadic second-order logic over vertex sets (MSO\(_1\)) gives broad tractability on graph classes of bounded clique-width when a suitable expression can be constructed~\cite{courcelle-makowsky-rotics-2000}.  However, translating a width-\(k\) rank-decomposition into a clique-width \((2^{k+1}-1)\)-expression does not by itself yield the \(2^{O(k^2)}\) dependence obtained by direct neighborhood-union dynamic programming~\cite{oum-seymour-2006,oum-saether-vatshelle-2014}.  Bui-Xuan, Telle, and Vatshelle introduced Boolean-width~\cite{bui-xuan-telle-vatshelle-2011} and developed neighborhood-union dynamic programming on graph decompositions~\cite{bui-xuan-telle-vatshelle-2010,bui-xuan-telle-vatshelle-2013}.  For every fixed pair for which each of \(\sigma\) and \(\rho\) is finite or cofinite, their framework gives a \(2^{O(k^2)}|V(G)|^{O(1)}\)-time algorithm from a supplied width-\(k\) rank-decomposition~\cite{bui-xuan-telle-vatshelle-2013}.  Together with Oum's explicit approximation above, this yields the rank-width bound
\begin{equation}
  2^{O(\rw(G)^2)}|V(G)|^{O(1)}.
  \label{eq:general-rw-upper-bound}
\end{equation}
For connectivity variants, Bergougnoux and Kant\'e used \(d\)-neighbor equivalence and representative sets to obtain the same \(2^{O(\rw(G)^2)}\) dependence for Connected Dominating Set and a wider class of connected or acyclic vertex-subset problems~\cite{bergougnoux-kante-2021}.  Their subsequent erratum affects only the stated polynomial dependence for certain acyclic \(q\)-partition applications, not the Connected Dominating Set result~\cite{bergougnoux-kante-erratum-2024}.

The quadratic exponent is known to be necessary for several problems. Bergougnoux, Korhonen, and Nederlof proved that, under ETH, Independent Set, Weighted Dominating Set, Maximum Induced Matching, and Feedback Vertex Set admit no \(2^{o(\lrw(G)^2)}n^{O(1)}\)-time algorithms~\cite{bergougnoux-korhonen-nederlof-2023}.  Their weighted domination reduction uses prohibitive weights to enforce its normal form.  They left as an explicit open problem whether the upper bound \eqref{eq:general-rw-upper-bound} is optimal for the ordinary \emph{unweighted} Dominating Set problem.  This is the starting point of the present work.

\subsection{Our results}
\label{subsec:introduction-results}

Our results are most compactly stated for a supplied vertex order.  If \(\pi\) is an order of \(V(G)\), let \(\width_G(\pi)\) be the maximum binary cut-rank of a prefix of \(\pi\).  Thus \(\rw(G)\leq\lrw(G)\leq\width_G(\pi)\).

\begin{theorem}[Main theorem, informal]
\label{thm:main-informal}
Assuming ETH, none of the following admits an algorithm with running time
\[
  2^{o(w^2)}|V(G)|^{O(1)},
\]
where \(w\) is the width of a supplied vertex order:
\begin{enumerate}[label=\textup{(\roman*)}]
  \item unweighted Dominating Set on monopolar graphs, on split graphs, or on bipartite graphs of diameter at most four;
  \item Independent Dominating Set on monopolar graphs;
  \item Connected Dominating Set and Total Dominating Set on split graphs, and separately on bipartite graphs of diameter at most four; and
  \item \((\sigma,\rho)\)-Set for every fixed decidable pair \((\sigma,\rho)\) such that \(\rho\) is cofinite and \(0\notin\rho\), in either of the following cases:
  \begin{enumerate}
  	\item \(0\in\sigma\) and \(1\in\rho\) (the result holds on monopolar graphs);
  	\item \(\sigma\) is cofinite (the result holds on split graphs).
  \end{enumerate}
\end{enumerate}
The same conclusions hold when \(w=\lrw(G)\), when \(w=\rw(G)\), or when
\(w\) is the width of a supplied rank-decomposition.
Under \(\#\mathrm{ETH}\), the same lower bounds hold for counting the solutions of size at most (or exactly) the target. In fact, all reductions are parsimonious at the target size.
\end{theorem}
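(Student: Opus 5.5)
We reduce from 3-SAT via a $k\times k$ grid-type intermediate problem, following the strategy of Bergougnoux, Korhonen, and Nederlof but removing weights. By ETH and the Sparsification Lemma, 3-SAT on $n$ variables and $O(n)$ clauses has no $2^{o(n)}$ algorithm. Set $k=\lceil\sqrt n\,\rceil$ and split the variables into $k$ blocks of at most $k$ variables each, so an assignment is a $k\times k$ bit matrix. The goal is a polynomial-time construction of a graph $G$, a budget $t$, and a vertex order $\pi$ of width $O(k)$ such that every dominating set of size at most $t$ has size exactly $t$ and corresponds bijectively to a satisfying assignment. A $2^{o(w^2)}|V(G)|^{O(1)}$ algorithm would then decide 3-SAT in $2^{o(n)}$ time, and the bijection transfers the argument to $\#\mathrm{ETH}$. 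Since $\rw(G)\le\lrw(G)\le\width_G(\pi)$, and $\pi$ converts directly into a caterpillar rank-decomposition of the same width, all the stated parameter variants follow from this one construction.

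The construction has three components.

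\textbf{Two-guard gadget.} This replaces the prohibitive weights. For each variable $x$ we create a pair $x^0,x^1$ and two private pendant-like guard vertices adjacent only to $\{x^0,x^1\}$. The guards force at least one of $x^0,x^1$ into the solution. Setting the budget $t$ to the number of such pairs then forces exactly one per pair and leaves no slack for auxiliary vertices. This slack-free budget is the source of both the exact-size property and parsimony.

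\textbf{Low-rank equality gadget.} Each variable is read by many clauses, and these clauses appear at different positions of $\pi$. We therefore need $k^2$ bits to persist across a prefix cut while contributing only $O(k)$ rank. The plan is to arrange the $k$ blocks as rows. We create $k$ copies of the whole matrix along the order, one copy per "time step". Consecutive copies are linked by equality checkers. Each checker is a vertex adjacent to $x^0_{\text{old}}$ and $x^1_{\text{new}}$, together with a symmetric partner; these force equal choices, since a mismatch leaves a checker undominated. The checkers are arranged so that across any prefix cut the adjacency matrix factors through $O(k)$ row- and column-indicator patterns. I would imitate the rank argument of Bergougnoux--Korhonen--Nederlof: the crossing edges form a matrix of the form $\sum_{i}u_iv_i^{\top}$ with $O(k)$ terms.

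\textbf{Clause vertices.} Each clause gets a vertex adjacent to the literal vertices that satisfy it, placed in the copy where the cut is cheapest. A clause vertex is dominated exactly when the assignment satisfies the clause.

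Next comes graph-class adaptation. For split graphs, make all selectable literal vertices a clique; the guards and checkers form the independent side. Domination is unaffected, because the guards already force selections, and the clique adds at most rank $1$ per cut. For bipartite graphs of diameter at most four, place the literals on one side and the guards, checkers, and clauses on the other. Add a universal-ish hub pair, whose forced selection is absorbed into $t$, to bound the diameter. For Independent Dominating Set on monopolar graphs, the selected literals must be pairwise nonadjacent, so the literal side is kept independent and the clique part holds only the dominated vertices. Connected and Total Dominating Set use the split version, where the clique makes any selected set connected and totally dominated. In the bipartite case we add a forced hub. For general $(\sigma,\rho)$ with $\rho$ cofinite and $0\notin\rho$, replace each guard by enough twin guards to exceed $\max(\mathbb N\setminus\rho)$, using cofiniteness. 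The two cases $0\in\sigma,1\in\rho$ and cofinite $\sigma$ ensure that selected vertices have their neighbor counts in $\sigma$. Each such change multiplies the rank by only a constant.

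The main obstacle is the soundness of the equality gadget under the exact budget combined with a rank of $O(k)$ rather than $O(k^2)$. Copying $k^2$ bits naively gives rank $\Theta(k^2)$ at the cuts between copies. I would try a staircase order that interleaves copy $j$ and copy $j+1$ column by column, so that at any moment only $O(k)$ checker pairs are "open". I would then verify directly that these open pairs span a space of dimension $O(k)$.
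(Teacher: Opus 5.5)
\textbf{Verdict: genuine gap.} Your scaffolding matches the paper: a slack-free budget enforced by two nonadjacent guards, exact size and hence parsimony, clique completion for split graphs, a forced hub for the bipartite case, and conversion of $\pi$ into a caterpillar. But the step you flag as the main obstacle cannot be closed with your encoding.

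\textbf{Why per-variable checkers fail.} You give each variable its own pair $x^0,x^1$ and link consecutive copies by checkers that each see a single variable. Consider any cut with copy $j$ on one side and copy $j+1$ on the other. Choose, for every variable, one crossing edge from its checker or partner to its literal vertex in the other copy. In the basic graph, vertices belonging to different variables are nonadjacent, so these $k^2$ edges form an induced matching. Hence $A_G[X,\overline X]$ contains a $k^2\times k^2$ identity submatrix and the cut-rank is at least $k^2$. No factorization $\sum_i u_iv_i^{\top}$ with $O(k)$ terms exists.

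\textbf{Why interleaving does not rescue it.} The same induced-matching argument shows that at a cut of rank $w$, at most $w$ variables can have copies on both sides. So in any order of width $O(k)$, each variable's copies lie in one short window, and at most $O(k)$ windows overlap at any point. The copies then give clause vertices no locality: a clause reads $x,y,z$ inside the windows of $x,y,z$ whichever copy it sits in. Take a formula whose primal graph is a bounded-degree expander. At some cut, $\Omega(k^2)$ clause vertices join a finished variable to an unstarted one, and bounded degree yields an induced matching of that size. The rank is again $\Omega(k^2)$, so the staircase order cannot work.

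\textbf{The missing idea.} One selected vertex must carry a whole row of $k$ bits, so that disagreements are detected by vertices whose adjacency is an affine function of the row vector.
\begin{itemize}
\item In the paper there is one layer $h$ per clause. For each row $a$, the layer has a clique $A_{h,a}=\{a_{h,a,x}:x\in\F^k\}$ whose two nonadjacent guards have neighbourhood exactly $A_{h,a}$. The budget $km$ forces one vertex per group, so $k$ selected vertices encode the matrix $X_h$.
\item The checker $e_{h,t,p,r}$, with $t,p\in\F^k$ and $r\neq0$, sees $a_{h,a,x}$ iff $\langle x,t\rangle\neq p_a$, and sees $a_{h+1,a,x}$ iff $\langle x,t\rangle\neq p_a+r_a$.
\item Such a checker is undominated exactly when $X_ht=p$ and $X_{h+1}t=p+r$. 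Hence all checkers are dominated iff $X_h=X_{h+1}$: if the matrices differ, take $t$ with $(X_h+X_{h+1})t\neq0$.
\item The interface entries $\langle x,t\rangle+p_a$ lie in the span of $t_1,\dots,t_k,p_1,\dots,p_k$, so each side has rank at most $2k$.
\item The plain block order $\Lambda_1,E_1,\Lambda_2,\dots$ therefore has width $4k+2$, and the size blow-up of $m2^{O(k)}$ is harmless.
\end{itemize}

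\textbf{A secondary gap.} Your $(\sigma,\rho)$ step is not yet a construction: twin guards do nothing for clause vertices and checkers. Every clause vertex and checker needs $q_\rho=\max(\mathbb N_{\ge0}\setminus\rho)$ forced-selected neighbours, so that ``at least one selected assignment neighbour'' becomes ``count $>q_\rho$''.
\begin{itemize}
\item Under condition (a), this comes from an independent reservoir with pendant pairs.
\item Under condition (b), it comes from a forced selected clique reservoir. This clique also puts every selected vertex into the cofinite tail of $\sigma$. In addition, each guard needs $\min\rho-1$ extra selected neighbours.
\end{itemize}
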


\paragraph{Unweighted Dominating Set (Section~\ref{sec:unweighted-ds}).}
This resolves the explicit open problem of Bergougnoux, Korhonen, and Nederlof~\cite{bergougnoux-korhonen-nederlof-2023}: weights do not account for the quadratic exponent.  Combined with the known upper bounds~\cite{bui-xuan-telle-vatshelle-2010,bui-xuan-telle-vatshelle-2013}, our result establishes, under ETH, that the quadratic rank-width exponent is tight.  The lower bound already holds on split graphs and on bipartite graphs of diameter at most four, so neither a dense split structure nor a bipartite structure of small diameter removes the obstruction.  The reduction additionally gives the corresponding \(\#\mathrm{ETH}\) lower bounds.

\paragraph{Independent, connected, and total domination
	(Section~\ref{sec:three-variants}).}
Section~\ref{sec:three-variants} extends the reduction to Independent, Connected, and Total Dominating Set.  In each case, the target budget forces every solution of size at most the target to encode a unique satisfying assignment.  The original monopolar construction yields the lower bound for Independent Dominating Set.  Modified constructions yield the lower bounds for Connected and Total Dominating Set on split graphs and, separately, on bipartite graphs of diameter at most four.  These modifications add adjacencies among the selected vertices, which ensures connectivity and total domination but violates independence.  This is why the Independent Dominating Set result is stated only for monopolar graphs. The resulting lower bounds match the known \(2^{O(\rw(G)^2)}\) upper bounds for Independent and Total Dominating Set~\cite{bui-xuan-telle-vatshelle-2013} and for Connected Dominating Set~\cite{bergougnoux-kante-2021}. Consequently, the quadratic rank-width exponent is tight for all three variants on the stated graph classes.

\paragraph{\((\sigma,\rho)\)-sets with cofinite \(\rho\)
(Section~\ref{sec:sigma-rho}).}
Section~\ref{sec:sigma-rho} extends the quadratic lower bound to fixed pairs \((\sigma,\rho)\) for which \(\rho\) is cofinite and \(0\notin\rho\).  We consider two cases.  If \(0\in\sigma\) and \(1\in\rho\), the lower bound holds on monopolar graphs, without any finiteness assumption on \(\sigma\).  If \(\sigma\) is cofinite, the lower bound holds on split graphs.  Consequently, among pairs in which both sets are finite or cofinite, the theorem covers every cofinite--cofinite pair with \(0\notin\rho\), as well as every finite--cofinite pair satisfying \(0\in\sigma\) and \(1\in\rho\).  For every such finite/cofinite pair, the decision lower bound matches the general \(2^{O(\rw(G)^2)}\) upper bound in \eqref{eq:general-rw-upper-bound}.  Table~\ref{tab:sigma-rho-coverage} records the covered problems and the cases that remain open.
In particular, this settles the entire cofinite--cofinite minimization regime: when \(0\in\rho\), the empty set is feasible, and when \(0\notin\rho\), our lower bound matches the general upper bound.

\paragraph{Quantitative counting lower bounds (Appendix~\ref{app:seth-constants}).}
Under \(\#\mathrm{SETH}\), the basic monopolar construction excludes
\(2^{(1/9-\varepsilon)w^2}n^{O(1)}\) time with a supplied
rank-decomposition for every \(\varepsilon\in(0,1/9)\), and excludes
\(2^{(1/16-\varepsilon)w^2}n^{O(1)}\) time with a supplied vertex order
for every \(\varepsilon\in(0,1/16)\).

\paragraph{Perfect Code on split graphs (Appendix~\ref{app:perfect-code}).}
As a complementary algorithmic result, we completely characterize perfect
codes on split graphs.  This yields, in linear time, an implicit
representation and count of all perfect codes, together with minimum- and
maximum-cardinality solutions.

\begin{table}[!b]
	\centering
	\small
	\begin{tabular}{@{}L{0.30\textwidth}L{0.24\textwidth}
			L{0.24\textwidth}c@{}}
		\toprule
		Problem & \(\sigma\) & \(\rho\) & Coverage \\
		\midrule
		Dominating Set
		& \(\mathbb N_{\ge 0}\)
		& \(\mathbb N_{\geq1}\)
		& \(\checkmark\,\star\) \\
		
		Independent Dominating Set
		& \(\{0\}\)
		& \(\mathbb N_{\geq1}\)
		& \(\checkmark\,\star\) \\
		
		Connected Dominating Set
		& --- & ---
		& \(\diamond\) \\
		
		Total Dominating Set
		& \(\mathbb N_{\geq1}\)
		& \(\mathbb N_{\geq1}\)
		& \(\checkmark\,\star\) \\
		
		\(q\)-Tuple Dominating Set, fixed \(q\geq1\)
		& \(\mathbb N_{\geq q-1}\)
		& \(\mathbb N_{\geq q}\)
		& \(\checkmark\) \\
	
		Perfect Dominating Set
		& \(\mathbb N_{\ge 0}\)
		& \(\{1\}\)
		& \(\circ\) \\
		
		Perfect Code
		& \(\{0\}\)
		& \(\{1\}\)
		& \(\circ\) \\
		\bottomrule
	\end{tabular}
	\caption{Specializations and limits of our results for cofinite
		\(\rho\).  Here, \(\checkmark\) means that the problem is covered by the
		general \((\sigma,\rho)\)-set theorem in
		Section~\ref{sec:sigma-rho}; \(\star\) indicates an additional direct
		treatment in Section~\ref{sec:unweighted-ds} or
		Section~\ref{sec:three-variants}; \(\diamond\) denotes a problem handled
		separately in Section~\ref{sec:three-variants} because it is not expressible by a fixed pair
		\((\sigma,\rho)\); and \(\circ\) denotes a case outside the present
		lower-bound theorems and discussed in Section~\ref{sec:conclusion}.
		For \(q\)-Tuple Dominating Set, we use the convention
		\(|N[v]\cap S|\geq q\) for every vertex \(v\).}
	\label{tab:sigma-rho-coverage}
\end{table}

\subsection{Proof ideas and organization}

The reduction begins with a 3-CNF formula on \(k^2\) variables and views an assignment as a \(k\times k\) Boolean matrix.  For each clause \(C_h\), the construction creates a layer containing \(k\) choice groups, one for each row of the assignment matrix.  Each group contains one vertex for every possible \(k\)-bit assignment to that row.  The tight budget forces a dominating set to choose exactly one vertex from every group, so the \(k\) selected vertices encode a complete \(k\times k\) assignment matrix \(X_h\). 

Two nonadjacent guards per choice group replace the prohibitive weights of the earlier weighted reduction.  Under a tight budget, omitting the assignment vertex forces both guards to be selected, so every group contains exactly one selected assignment vertex.  This rigidity is also what makes the reduction parsimonious.

The central equality checker synchronizes two consecutive matrices: a disagreement is witnessed by a vector \(t\in\F^k\), yet each side of the checker's adjacency interface has rank at most \(2k\).  Thus the construction transports \(k^2\) bits of assignment information through cuts of rank \(O(k)\).

Completing the assignment vertices to a clique yields split instances, whereas deleting the choice-clique edges and adding a forced hub yields bipartite instances. Each modification increases the relevant cut-rank by only a constant. Finally, Section~\ref{sec:sigma-rho} adds a constant number of forced selected vertices to adjust the selected-neighbor counts of the clause vertices and equality checkers.  This converts the requirement of having at least one selected assignment neighbor into a \(\rho\)-constraint. The forced vertices form an independent set in the \(0\in\sigma,1\in\rho\) construction and a clique when \(\sigma\) is cofinite.

Section~\ref{sec:preliminaries} fixes notation, width measures, counting problems, and the ETH assumptions.  Section~\ref{sec:unweighted-ds} proves the basic theorem and its graph-class and counting refinements. Section~\ref{sec:three-variants} treats independent, connected, and total domination.  Section~\ref{sec:sigma-rho} proves the \((\sigma,\rho)\)-set framework theorem for two families of pairs. Section~\ref{sec:conclusion} concludes with the remaining finite/cofinite regimes in the \((\sigma,\rho)\)-set framework.

\section{Preliminaries}
\label{sec:preliminaries}

\subsection{Basic notation and graph classes}

For every integer \(s\geq0\), write \([s]:=\{1,\ldots,s\}\), with \([0]:=\varnothing\), and let \(\mathbb N_{\ge 0}:=\{0,1,2,\ldots\}\).  All graphs are finite, undirected, and simple.  For a graph \(G\), its vertex and edge sets are \(V(G)\) and \(E(G)\), and \(n:=|V(G)|\).  For \(v\in V(G)\), let \(N_G(v)\) and \(N_G[v]:=N_G(v)\cup\{v\}\) denote its open and closed neighborhoods.  For \(X\subseteq V(G)\), write \(\overline X:=V(G)\setminus X\) and \(G[X]\) for the subgraph induced by \(X\).  Subscripts are omitted when the graph is clear.  The distance between two vertices is the length of a shortest path, and the diameter is the maximum distance between two vertices in a connected graph.

A \emph{cluster graph} is a disjoint union of cliques.  A graph is \emph{monopolar} if its vertices can be partitioned into an independent set and a cluster graph.  It is a \emph{split graph} if its vertices can be partitioned into one clique and one independent set.  A graph is \emph{bipartite} if its vertices can be partitioned into two independent sets.

\subsection{Domination problems and their counting versions}

A set \(D\subseteq V(G)\) dominates \(G\) if \(N[v]\cap D\neq\varnothing\) for every \(v\in V(G)\).  A dominating set is \emph{independent} if \(G[D]\) has no edge and \emph{connected} if \(G[D]\) is connected.  It is a \emph{total dominating set} if \(N(v)\cap D\neq\varnothing\) for every \(v\in V(G)\), including vertices of \(D\) itself.  For a fixed integer \(q\geq1\), a \(q\)-tuple dominating set satisfies \(|N[v]\cap D|\geq q\) for every vertex.  Formula \eqref{eq:sigma-rho-definition} formally defines a \((\sigma,\rho)\)-set; throughout Section~\ref{sec:sigma-rho}, the pair is fixed and only the graph and target size belong to the input.

For ordinary domination, define
\begin{align*}
  \Zle(G,\ell)
  &:=\bigl|\{D\subseteq V(G):D\text{ dominates }G,
       |D|\leq\ell\}\bigr|,\\
  \Zeq(G,\ell)
  &:=\bigl|\{D\subseteq V(G):D\text{ dominates }G,
       |D|=\ell\}\bigr|.
\end{align*}
For a fixed pair \((\sigma,\rho)\), define analogously
\begin{align*}
  Z^{\sigma,\rho}_{\leq}(G,\ell)
  &:=\bigl|\{S\subseteq V(G):S\text{ is a }(\sigma,\rho)\text{-set},
       |S|\leq\ell\}\bigr|,\\
  Z^{\sigma,\rho}_{=}(G,\ell)
  &:=\bigl|\{S\subseteq V(G):S\text{ is a }(\sigma,\rho)\text{-set},
       |S|=\ell\}\bigr|.
\end{align*}
The variant-specific symbols in Section~\ref{sec:three-variants} have the same meaning with the corresponding additional property.  We count only solutions up to, or exactly at, the input target; we do not claim that the reductions control all larger solutions.  A reduction is \emph{parsimonious} if it induces a bijection between source and target solutions, and hence preserves their number exactly.

\subsection{Cut-rank, rank-width, and linear rank-width}

Let \(A_G\) be the adjacency matrix of \(G\).  For \(X\subseteq V(G)\), the \emph{cut-rank} of \(X\) is
\begin{equation}
  \rho_G(X):=
  \operatorname{rank}_{\mathbb F_2}
  \bigl(A_G[X,\overline X]\bigr).
  \label{eq:cut-rank}
\end{equation}
It is symmetric: \(\rho_G(X)=\rho_G(\overline X)\).

A \emph{rank-decomposition} of a graph \(G\) with at least two vertices is a pair \((T,\delta)\), where \(T\) is a subcubic tree and \(\delta\) is a bijection from \(V(G)\) to the leaves of \(T\).  Every edge \(e\in E(T)\) partitions the leaves, and therefore \(V(G)\), into sets \(X_e,\overline{X_e}\).  The width of \((T,\delta)\) is
\[
  \max_{e\in E(T)}\rho_G(X_e),
\]
and the \emph{rank-width} \(\rw(G)\) is the minimum width of a rank-decomposition of \(G\)~\cite{oum-seymour-2006}.  We set \(\rw(G)=0\) when \(|V(G)|<2\).

For a vertex order \(\pi=(v_1,\ldots,v_n)\), define
\begin{equation}
  \width_G(\pi):=
  \max_{1\leq i<n}
  \rho_G(\{v_1,\ldots,v_i\}).
  \label{eq:layout-width}
\end{equation}
For \(|V(G)|<2\), we set \(\width_G(\pi)=0\).
The \emph{linear rank-width} is \(\lrw(G):=\min_\pi\width_G(\pi)\), with \(\lrw(G)=0\) when \(|V(G)|<2\).  A vertex order can be converted in polynomial time into a caterpillar rank-decomposition of width at most \(\width_G(\pi)\), and in particular
\begin{equation}
  \rw(G)\leq\lrw(G)\leq\width_G(\pi).
  \label{eq:width-chain}
\end{equation}
All ranks in this paper, including those used to certify the supplied orders, are over \(\mathbb F_2\).

\subsection{ETH, \#ETH, and the source problem}

The Exponential Time Hypothesis (ETH) states that satisfiability of a 3-CNF formula with \(N\) variables and \(m\) clauses cannot be decided in \(2^{o(N)}(N+m)^{O(1)}\) time~\cite{impagliazzo-paturi-2001}.  The sparsification lemma makes this decision formulation robust with respect to the number of clauses~\cite{impagliazzo-paturi-zane-2001}.  Its counting analogue \(\#\mathrm{ETH}\) states the corresponding lower bound for counting satisfying assignments~\cite{dell-et-al-2014}; disjoint counting sparsification yields the corresponding clause-robust form~\cite[Theorem~1.1 and Appendix~A]{dell-et-al-2014}.

We use the following square-variable consequence.

\begin{lemma}[Square-variable ETH and \(\#\mathrm{ETH}\)]
\label{lem:square-eth}
Unless ETH fails, 3-SAT on formulas with \(k^2\) variables and \(m\) clauses cannot be decided in time
\begin{equation}
  2^{o(k^2)}(k+m)^{O(1)}.
  \label{eq:square-eth}
\end{equation}
Assuming \(\#\mathrm{ETH}\), their satisfying assignments cannot be counted within the same time bound.
\end{lemma}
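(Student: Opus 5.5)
The plan is to prove the lemma by padding. We reduce an arbitrary 3-CNF formula to one whose number of variables is a perfect square, and we check that the resulting time bound translates back into a $2^{o(N)}$ algorithm for general 3-SAT (respectively \#3-SAT).

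\emph{Step 1: padding.} Let $\varphi$ be a 3-CNF formula with $N$ variables and $m$ clauses. Set $k:=\lceil\sqrt N\rceil$, so that $N\le k^2<N+2\sqrt N+1\le 3N$ for $N\ge1$. For the decision version, add $k^2-N$ fresh variables that occur in no clause. The clause count is unchanged, and satisfiability is preserved.

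For counting, this padding multiplies the number of satisfying assignments by $2^{k^2-N}$. That is harmless, since dividing by a power of two in polynomial time recovers the original count. If a parsimonious version is preferred, add instead, for each fresh variable $y$, the single clause $(\neg y)$. This is a 3-CNF clause in the usual sense of at most three literals; otherwise write it as $(\neg y\vee\neg y\vee\neg y)$. Either way, $y$ is forced to false, the count is preserved exactly, and at most $k^2-N\le 2N+1$ clauses are added.

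\emph{Step 2: running-time transfer.} Suppose some algorithm decides (or counts) 3-SAT on $k^2$-variable, $m'$-clause formulas in time $2^{g(k)}(k+m')^{c}$, where $g(k)=o(k^2)$. Applied to the padded formula, it runs in time $2^{g(\lceil\sqrt N\rceil)}(\lceil\sqrt N\rceil+m+2N+1)^{c}$. Because $k^2\le 3N$ and $g(k)/k^2\to0$, we get $g(k)=o(N)$. The polynomial factor is $(N+m)^{O(1)}$. Hence general 3-SAT would be solvable in $2^{o(N)}(N+m)^{O(1)}$ time, contradicting ETH as formulated above. Under $\#\mathrm{ETH}$, the same argument contradicts the counting formulation.

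The only delicate point is the little-$o$ bookkeeping in Step 2: a function $o(k^2)$ evaluated at $k=\lceil\sqrt N\rceil$ must be $o(N)$. This follows directly from $k^2=\Theta(N)$, handling small $N$ by absorbing them into constants. The sparsification lemma is not needed for this statement, because the bound already allows polynomial dependence on $m$. It is needed only later, when $m$ appears in the width or size of the constructed graph.
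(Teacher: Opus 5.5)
Your proposal is correct and follows the paper's argument. You pad to \(k=\lceil\sqrt N\rceil\) with \(k^2-N\) fresh variables, each forced false by a unit clause \((\neg y)\) so the padding is parsimonious, and you transfer the bound using \(k^2=N+O(\sqrt N)\). The only slip is the intermediate inequality \(N+2\sqrt N+1\le 3N\), which fails at \(N=1\); the conclusion \(k^2\le 3N\) still holds there, and only \(k^2=O(N)\) is needed.
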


Indeed, let \(k=\lceil\sqrt N\rceil\), add \(k^2-N\) fresh variables, and for each fresh variable \(y\) add the unit clause \((\neg y)\).  Here a 3-CNF formula is allowed to contain clauses of size at most three.  Every satisfying assignment then has a unique extension, so the padding is parsimonious.  The padded formula has
\[
  m'=m+k^2-N=m+O(\sqrt N)
\]
clauses.  Since \(k^2=N+O(\sqrt N)\), either claimed algorithm would contradict the corresponding hypothesis.

\section{The unweighted Dominating Set lower bound}
\label{sec:unweighted-ds}

This section proves the basic reduction for unweighted Dominating Set. We give the construction and all rank calculations explicitly, because the same construction will be modified in later sections. We then record two graph-class refinements that require only local changes: one gives split graphs, and the other gives bipartite graphs of diameter at most four. Finally, we identify the precise counting problems for which the reduction is parsimonious.  We first state the formal conclusions proved in this section.

\begin{theorem}[Unweighted Dominating Set]
\label{thm:unweighted-ds}
Unless ETH fails, no algorithm can take an unweighted graph \(G\), an integer \(d\), and a vertex order \(\pi\), and decide whether \(\gamma(G)\leq d\) in time
\[
  2^{o(\width_G(\pi)^2)}|V(G)|^{O(1)}.
\]
The lower bound holds separately on monopolar graphs, split graphs, and bipartite graphs of diameter at most four.  For formulas with \(k^2\) variables, the respective constructions supply orders of width at most \(4k+2\), \(4k+3\), and \(4k+3\).  The conclusion also holds when the parameter is \(\lrw(G)\), \(\rw(G)\), or the width of a supplied rank-decomposition.
\end{theorem}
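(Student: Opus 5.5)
We reduce from the square-variable 3-SAT problem of Lemma~\ref{lem:square-eth}. Given a formula \(\varphi\) with \(k^2\) variables \(x_{i,j}\) (\(i,j\in[k]\)) and clauses \(C_1,\ldots,C_m\), we build in polynomial time a graph \(G\), a budget \(d\), and an order \(\pi\) with \(\width_G(\pi)\le 4k+2\). We must have \(\gamma(G)\le d\) if and only if \(\varphi\) is satisfiable. Since \(|V(G)|=(k+m)^{O(1)}2^{k}\cdot\mathrm{poly}\) and \(w=O(k)\), a \(2^{o(w^2)}|V(G)|^{O(1)}\) algorithm would give \(2^{o(k^2)}(k+m)^{O(1)}\) for 3-SAT. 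The factor \(2^{k}\) in the vertex count is harmless, because \(2^{O(k)}\cdot\mathrm{poly}\) is still \(2^{o(k^2)}\). The statements for \(\lrw\), \(\rw\) and a supplied rank-decomposition then follow from \eqref{eq:width-chain} and the polynomial conversion of \(\pi\) to a caterpillar decomposition.

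\textbf{Construction.} For each layer \(h\in\{0,\ldots,m\}\) and row \(i\in[k]\), we create a choice group \(A_{h,i}=\{a_{h,i,s}:s\in\F^k\}\), made a clique, and two nonadjacent guards \(g,g'\) adjacent to all of \(A_{h,i}\). The budget is \(d=(m+1)k\), plus the number of any forced vertices. Under this budget every group must contain exactly one selected vertex \(a_{h,i,s}\), because leaving a group empty costs two guards. The selections in layer \(h\) encode a matrix \(X_h\) whose row \(i\) is \(s\).

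Each clause \(C_h\) gets a vertex \(c_h\) adjacent to those \(a_{h,i,s}\) whose row assignment satisfies a literal of \(C_h\) involving row \(i\). Consecutive layers \(h-1,h\) are synchronized by an equality checker with vertices \(e_{h,t,t'}\), for suitable pairs of test vectors \(t,t'\in\F^k\). The vertex \(e\) is adjacent to \(a_{h-1,i,s}\) when \(\langle s,t\rangle=1\) and \(t'_i=1\), and symmetrically on layer \(h\) with the complementary parity. The aim is that \(e_{h,t,t'}\) is dominated for every test exactly when \(X_{h-1}t=X_ht\) for all \(t\). Each checker vertex's neighbourhood in a layer is then determined by one bilinear form in \((s,t)\) and \(i\), so the checker has rank \(\le 2k\) toward each side.

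I expect the main obstacle to be designing this checker so that domination of all checker vertices is equivalent to \(X_{h-1}=X_h\). The difficulty is that domination is an OR condition, not a parity condition. My first attempt is a pair of vertices per \((t,i)\) encoding "\((X_{h-1}t)_i=1\) or \((X_ht)_i=0\)" and its mirror. Together these force equality of all bits \((Xt)_i\), and their adjacency to each layer is given by rows of the form \([\langle s,t\rangle=b]\), which span at most \(k+1\) dimensions per row index pattern. The checker vertices must also not be selected cheaply, so each one gets the guard trick too, or the budget accounting must show that selecting one never pays off.

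\textbf{Correctness and width.} For completeness, a satisfying assignment selects its rows in every layer. This dominates the guards, the clique groups and the checkers by equality, and each \(c_h\) by satisfaction. For soundness, the budget count forces exactly one vertex per group and nothing else. The checkers then force all \(X_h\) equal, and \(c_h\) forces \(X\) to satisfy \(C_h\). The same rigidity gives a bijection between solutions and assignments, which matters for later counting.

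The order \(\pi\) lists the layers in sequence, each followed by its clause vertex, guards and checker. At any prefix cut, the crossing edges go through at most one checker interface (rank \(\le 2k\) on each side) plus a partially listed layer and a constant number of clause or guard vertices. I would bound the rank by submodular-style counting, with rank \(\le\) the sum of ranks of the edge blocks, to reach \(4k+2\).

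For split graphs, we make all \(a\)-vertices one clique (plus one rank). For bipartite graphs of diameter at most four, we remove the group cliques and add a forced hub with a pendant, adjacent to all \(a\)-vertices, so that it dominates them and shortens distances (plus one rank). In both cases we recheck the budget argument.
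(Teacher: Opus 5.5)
Your skeleton matches the paper's: two-guard choice cliques, the tight budget, clause vertices, a layer-by-layer order, clique completion for split graphs, and a hub for the bipartite case. The gap is the component the theorem rests on, an equality checker whose interface with each layer has rank \(O(k)\). Both versions you sketch break the width bound.

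\emph{Per-\((t,i)\) OR-pairs.} Each such checker is adjacent only to \(A_{h-1,i}\cup A_{h,i}\). So at the prefix cut separating layer \(h-1\) from its checker block, the crossing matrix contains a block-diagonal submatrix with one block per row index \(i\). Your test vectors must span \(\F^k\), since otherwise some \(X\neq Y\) agree on all of them. Hence inside block \(i\) the columns \(s\mapsto\langle s,t\rangle\) already have rank \(k\), and the cut-rank is at least \(k^2\). Your ``\(k+1\) dimensions per row index pattern'' adds up over the \(k\) row indices rather than staying \(O(k)\).

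\emph{The \(e_{h,t,t'}\) version.} It fails the same way, because its entries \(t'_i\langle s,t\rangle\) depend on the row index multiplicatively. The columns with \(t=\mathbf e_j\), \(t'=\mathbf e_{i'}\) are the \(k^2\) linearly independent functions \((i,s)\mapsto[i=i']\,s_j\), so ``one bilinear form in \((s,t)\) and \(i\)'' does not give rank at most \(2k\).

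With \(w=\Theta(k^2)\), a \(2^{o(w^2)}\) algorithm is only \(2^{o(k^4)}\), and no contradiction with ETH follows. Separately, a cut inside a layer crosses both adjacent checker interfaces, not one; that is where \(2k+2k+1+1=4k+2\) comes from.

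The missing idea is to let one checker vertex inspect all \(k\) rows with the same test vector, so that the row dependence is additive, and to let the vertex name a complete witness of disagreement. The paper's checker \(e_{h,t,p,r}\), with \(t,p\in\F^k\) and \(r\neq0\), is adjacent to \(a_{h,a,x}\) iff \(\langle x,t\rangle\neq p_a\), and to \(a_{h+1,a,x}\) iff \(\langle x,t\rangle\neq p_a+r_a\). It is undominated exactly when \(Xt=p\) and \(Yt=p+r\).
\begin{itemize}
\item If \(X=Y\), this is impossible because \(r\neq0\).
\item If \(X\neq Y\), it occurs for some \(t\) with \(r:=(X+Y)t\neq0\) and \(p:=Xt\).
\end{itemize}
Thus the OR of domination becomes the failure of a conjunction of \(k\) linear equations, witnessed by a single vertex. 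The interface entries \(\langle x,t\rangle+p_a\) have row span inside \(\operatorname{span}\{t_1,\ldots,t_k,p_1,\ldots,p_k\}\), so each side has rank at most \(2k\). The \(2^{3k}\) checkers per transition are still \(2^{O(k)}\); the standard-basis variant with \(x_j\neq p_a\) works equally well.

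Finally, no guard trick is needed for checkers. The disjoint blocks \(Q_{h,a}\) already absorb the entire budget \(km\), so the tight count excludes checkers automatically.
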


\begin{theorem}[Parsimonious counting consequence]
\label{thm:counting-ds}
Unless \(\#\mathrm{ETH}\) fails, no algorithm can take a graph \(G\), a target \(d\), and a vertex order \(\pi\) and, on any of the three graph classes in Theorem~\ref{thm:unweighted-ds}, compute either \(\Zle(G,d)\) or \(\Zeq(G,d)\) in time
\[
  2^{o(\width_G(\pi)^2)}|V(G)|^{O(1)}.
\]
The conclusion also holds when the parameter is linear rank-width, rank-width, or the width of a supplied rank-decomposition.
\end{theorem}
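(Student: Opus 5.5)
The plan is to derive Theorem~\ref{thm:counting-ds} from the same constructions used for Theorem~\ref{thm:unweighted-ds}, together with the square-variable \(\#\mathrm{ETH}\) statement of Lemma~\ref{lem:square-eth}. Given a 3-CNF formula \(\varphi\) on \(k^2\) variables and \(m\) clauses, we build, in polynomial time, the monopolar graph \(G_\varphi\) (respectively its split or bipartite modification), a target \(d\), and a vertex order \(\pi\) of width at most \(4k+3\). The central claim is a parsimony statement: every dominating set \(D\) with \(|D|\leq d\) has \(|D|=d\), and the map sending \(D\) to the assignment read off its selected assignment vertices is a bijection onto the satisfying assignments of \(\varphi\). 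Granting this, \(\Zle(G_\varphi,d)=\Zeq(G_\varphi,d)=\#\mathrm{SAT}(\varphi)\). An algorithm running in time \(2^{o(\width(\pi)^2)}|V|^{O(1)}\) would then count satisfying assignments in time \(2^{o(k^2)}(k+m)^{O(1)}\). This uses that \(|V(G_\varphi)|\) is polynomial in \(2^k\) and \(m\), and that \(2^{O(k)}\) is absorbed into \(2^{o(k^2)}\) for the construction size. The result contradicts Lemma~\ref{lem:square-eth}.

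The parsimony claim is proved in three steps.

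\textbf{Budget tightness.} The budget \(d\) is exactly the number of choice groups, \(mk\), plus whatever forced vertices the variant adds (for instance the hub in the bipartite version). Each choice group carries two private, nonadjacent guards. These guards can be dominated only by themselves or by vertices inside their group. A counting argument therefore shows that each group contributes at least one vertex to \(D\). Moreover, a group contributing exactly one vertex must contribute an assignment vertex, since any single non-assignment choice leaves a guard undominated. Together with the global budget, this forces exactly one assignment vertex per group, the forced vertices, and nothing else. In particular \(|D|=d\).

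\textbf{Correctness of the encoding.} With this rigid structure, the clause vertices of layer \(h\) are dominated exactly when the matrix \(X_h\) satisfies \(C_h\). The equality-checker vertices between layers \(h\) and \(h+1\), indexed by disagreement witnesses \(t\in\F^k\), are all dominated exactly when \(X_h=X_{h+1}\). Hence all layers encode a common assignment, and that assignment satisfies \(\varphi\).

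\textbf{Bijectivity.} Conversely, each satisfying assignment determines the selected vertex in every group uniquely, which gives injectivity and surjectivity.

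Since this is exactly the structural content established for Theorem~\ref{thm:unweighted-ds}, the work in this proof is mostly to extract the uniqueness, not merely the existence, of solutions from those lemmas.

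For the alternative parameters, we use \eqref{eq:width-chain}: the supplied order converts in polynomial time to a caterpillar rank-decomposition of width at most \(\width(\pi)\). So an algorithm parameterized by \(\lrw\), \(\rw\), or a supplied decomposition width also runs in time \(2^{o(k^2)}\) on our instances, and the contradiction is the same. The one subtlety is that an algorithm in \(\rw\) without a supplied decomposition is still bounded in terms of \(\rw(G)\leq 4k+3\), which suffices.

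The main obstacle I expect is the budget-tightness step in the split and bipartite variants. There, the added clique edges, or the deleted choice-clique edges plus the hub, might let a non-assignment vertex or the hub dominate guards or clause vertices in unexpected ways, or permit alternative solutions of the same size (for example, swapping the hub for some other vertex). I would first verify that the guards' neighborhoods are unchanged by each modification. Then I would show that the hub, or its forced pendant structure, is selected in every solution within budget. That settles uniqueness, since after these two checks the remaining argument is identical to the monopolar case.
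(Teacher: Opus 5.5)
Your proposal is correct and matches the paper's proof. The paper likewise combines the parsimonious identities \eqref{eq:parsimonious-basic}, \eqref{eq:parsimonious-split}, and \eqref{eq:parsimonious-bip}, which rest on the same two-guard normal form, hub forcing, and equality-checker lemmas you re-derive, with the $m2^{O(k)}$ size bound, the $O(k)$ width bound, and Lemma~\ref{lem:square-eth}, and it treats the other parameters via \eqref{eq:width-chain} and the caterpillar conversion. One small slip: the instances are built in $2^{O(k)}(k+m)^{O(1)}$ time, not polynomial time, but your later accounting already absorbs this into the $2^{o(k^2)}$ bound.
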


\subsection{Source problem and assignment matrices}

We apply Lemma~\ref{lem:square-eth}.  Let the input formula \(\varphi\) have \(k^2\) variables and clauses \(C_1,\ldots,C_m\).

Let the variables be
\[
  \{v_{a,b}:a,b\in[k]\}.
\]
An assignment is identified with a matrix \(X\in\F^{k\times k}\). Its \(a\)-th row \(x_a\in\F^k\) assigns \(v_{a,1},\ldots,v_{a,k}\). We may assume that every clause is nonempty and that \(k,m\geq1\), since all excluded cases are decidable in polynomial time.

\subsection{Construction}
\label{subsec:basic-construction}

From a formula \(\varphi\), construct a graph \(G_\varphi\) as follows.

\paragraph{Choice groups.}
For every \(h\in[m]\) and \(a\in[k]\), create
\[
  A_{h,a}:=\{a_{h,a,x}:x\in\F^k\}
\]
and make \(A_{h,a}\) a clique. Create two nonadjacent guards \(g^0_{h,a}\) and \(g^1_{h,a}\), each having open neighborhood exactly \(A_{h,a}\). There are no edges between distinct choice cliques.

\paragraph{Clause vertices.}
For every \(h\in[m]\), create a vertex \(c_h\). Join \(c_h\) to \(a_{h,a,x}\) if and only if \(x\) satisfies at least one literal of \(C_h\) whose variable belongs to row \(a\). In particular, a literal \(v_{a,b}\) is satisfied when \(x_b=1\), and a literal \(\neg v_{a,b}\) is satisfied when \(x_b=0\).

\paragraph{Equality checkers.}
For every \(h\in[m-1]\), \(t,p\in\F^k\), and \(r\in\F^k\setminus\{0\}\), create a vertex \(e_{h,t,p,r}\). All checkers are pairwise nonadjacent. Their only neighbors lie in the two endpoint layers and are specified by
\begin{align}
  e_{h,t,p,r}a_{h,a,x}\in E(G_\varphi)
  &\iff \langle x,t\rangle\neq p_a,
  \label{eq:checker-left}\\
  e_{h,t,p,r}a_{h+1,a,x}\in E(G_\varphi)
  &\iff \langle x,t\rangle\neq p_a+r_a.
  \label{eq:checker-right}
\end{align}
All arithmetic in these expressions is over \(\F\). No other edges are added. 

Figure~\ref{fig:basic-construction-example} illustrates the construction
for a formula with four variables and two clauses.  Since this
small example already has \(48\) equality checkers, the figure
expands only one representative checker.

\begin{figure}[htbp]
	\centering
	\resizebox{\textwidth}{!}{%
		\begin{tikzpicture}[
			font=\small,
			assignment/.style={
				circle,draw=black,fill=black,
				minimum size=5.5pt,inner sep=0pt
			},
			guard/.style={
				rectangle,draw=black,fill=white,
				minimum size=5.2mm,inner sep=0pt
			},
			clause/.style={
				regular polygon,regular polygon sides=3,
				draw=black,fill=black,
				minimum size=7mm,inner sep=0pt
			},
			checker/.style={
				diamond,aspect=1.45,draw=black,fill=white,
				minimum width=14mm,minimum height=8mm,inner sep=1pt
			},
			group/.style={
				rounded corners=2pt,draw=black!65,fill=black!6
			},
			layer/.style={
				rounded corners=3pt,draw=black,densely dotted
			},
			guardedge/.style={
				draw=black!45,line width=.35pt
			},
			clauseedge/.style={
				draw=black,line width=.75pt
			},
			checkeredge/.style={
				draw=black,densely dashed,line width=.65pt
			}
			]
			
			% Layer and choice-group boundaries.
			\draw[layer] (-0.50,-1.05) rectangle (5.70,4.25);
			\draw[layer] (10.30,-1.05) rectangle (16.50,4.25);
			
			\draw[group] (0.95,1.82) rectangle (4.25,2.78);
			\draw[group] (0.95,-.48) rectangle (4.25,.48);
			\draw[group] (11.75,1.82) rectangle (15.05,2.78);
			\draw[group] (11.75,-.48) rectangle (15.05,.48);
			
			% The equality-checker block.
			\draw[group] (6.72,-.62) rectangle (9.28,3.28);
			
			% Assignment vertices in layer 1.
			\foreach \s/\x in {00/1.40,01/2.20,10/3.00,11/3.80}{
				\node[assignment,label=above:{\scriptsize\(\s\)}]
				(a11\s) at (\x,2.30) {};
				\node[assignment,label=below:{\scriptsize\(\s\)}]
				(a12\s) at (\x,0.00) {};
			}
			
			% Assignment vertices in layer 2.
			\foreach \s/\x in {00/12.20,01/13.00,10/13.80,11/14.60}{
				\node[assignment,label=above:{\scriptsize\(\s\)}]
				(a21\s) at (\x,2.30) {};
				\node[assignment,label=below:{\scriptsize\(\s\)}]
				(a22\s) at (\x,0.00) {};
			}
			
			% Guard vertices.
			\node[guard,label=left:{\scriptsize\(g^0_{1,1}\)}]
			(g110) at (.48,2.30) {};
			\node[guard,label=right:{\scriptsize\(g^1_{1,1}\)}]
			(g111) at (4.72,2.30) {};
			
			\node[guard,label=left:{\scriptsize\(g^0_{1,2}\)}]
			(g120) at (.48,0.00) {};
			\node[guard,label=right:{\scriptsize\(g^1_{1,2}\)}]
			(g121) at (4.72,0.00) {};
			
			\node[guard,label=left:{\scriptsize\(g^0_{2,1}\)}]
			(g210) at (11.28,2.30) {};
			\node[guard,label=right:{\scriptsize\(g^1_{2,1}\)}]
			(g211) at (15.52,2.30) {};
			
			\node[guard,label=left:{\scriptsize\(g^0_{2,2}\)}]
			(g220) at (11.28,0.00) {};
			\node[guard,label=right:{\scriptsize\(g^1_{2,2}\)}]
			(g221) at (15.52,0.00) {};
			
			% Each guard is adjacent to every assignment vertex in its group.
			\foreach \s in {00,01,10,11}{
				\draw[guardedge] (g110)--(a11\s);
				\draw[guardedge] (g111)--(a11\s);
				\draw[guardedge] (g120)--(a12\s);
				\draw[guardedge] (g121)--(a12\s);
				
				\draw[guardedge] (g210)--(a21\s);
				\draw[guardedge] (g211)--(a21\s);
				\draw[guardedge] (g220)--(a22\s);
				\draw[guardedge] (g221)--(a22\s);
			}
			
		% Clause vertices placed between the two assignment cliques.
		\node[clause,label=left:{\(c_1\)}] (c1) at (2.60,1.15) {};
		\node[clause,label=left:{\(c_2\)}] (c2) at (13.40,1.15) {};
			
			% C_1=(v_{1,1} or not v_{1,2} or v_{2,1}).
			\foreach \v in {a1100,a1110,a1111,a1210,a1211}{
				\draw[clauseedge] (c1)--(\v);
			}
			
			% C_2=(not v_{1,1} or v_{1,2} or v_{2,2}).
			\foreach \v in {a2100,a2101,a2111,a2201,a2211}{
				\draw[clauseedge] (c2)--(\v);
			}
			
			% One representative equality checker:
			% t=(1,0), p=(0,0), r=(1,0).
			\node at (8.00,2.72) {\(\vdots\)};
			\node[checker] (estar) at (8.00,1.30) {\(e^\star\)};
			\node at (8.00,.05) {\(\vdots\)};
			
		\foreach \v in {a1110,a1111,a1210,a1211,a2100,a2101,a2210,a2211}{%
			\draw[checkeredge] (estar)--(\v);
		}
			% Labels.
			\node[font=\bfseries] at (2.60,3.78)
			{Layer \(1\) for \(C_1\)};
			\node[font=\bfseries] at (13.40,3.78)
			{Layer \(2\) for \(C_2\)};
			
			\node[font=\bfseries,align=center] at (8.00,3.78)
			{\(E_1\)\\48 checkers};
			
		    \node[above] at (2.60,2.68) {\(A_{1,1}\) (clique)};
			\node[below] at (2.60,-.38) {\(A_{1,2}\) (clique)};
			\node[above] at (13.40,2.68) {\(A_{2,1}\) (clique)};
			\node[below] at (13.40,-.38) {\(A_{2,2}\) (clique)};
			
			% Vertex-type legend.
			\node[assignment] (la) at (4.00,-1.78) {};
			\node[right=2pt of la] {assignment};
			
			\node[guard] (lg) at (7.00,-1.78) {};
			\node[right=2pt of lg] {guard};
			
			\node[clause] (lc) at (9.70,-1.78) {};
			\node[right=2pt of lc] {clause};
			
			\node[
			diamond,aspect=1.3,draw=black,fill=white,
			minimum size=5.5mm,inner sep=0pt
			] (le) at (12.60,-1.78) {};
			\node[right=2pt of le] {checker};
			
		\end{tikzpicture}%
	}
	\caption{An example of the construction for \(k=2\) and \(m=2\).
		The variables are \(v_{1,1},v_{1,2},v_{2,1},v_{2,2}\), and
		\(\varphi=(v_{1,1}\lor\neg v_{1,2}\lor v_{2,1})
		\land(\neg v_{1,1}\lor v_{1,2}\lor v_{2,2})\).
		The labels \(00,01,10,11\) denote the four assignments
		\(x\in\F^2\) to a row. Each shaded choice group is a clique, although
		its internal clique edges are suppressed. Each square guard is adjacent
		to all four assignment vertices of its group, as indicated by the thin
		gray edges. Solid edges incident with \(c_h\) represent assignments
		satisfying at least one literal of \(C_h\). The block \(E_1\) contains
		the \(48\) equality checkers between the two layers. Only
		\(e^\star=e_{1,(1,0),(0,0),(1,0)}\) and its dashed incident edges are
		drawn; the remaining checker vertices and their incident edges are
		suppressed.}
	\label{fig:basic-construction-example}
\end{figure}
The number of vertices is
\begin{align*}
  |V(G_\varphi)|
  &=mk2^k+2mk+m+(m-1)2^{2k}(2^k-1)\\
  &=m2^{O(k)}.
\end{align*}
Every checker has possible neighbors only among the \(2k2^k\) assignment vertices of its two endpoint layers. Consequently, the graph and the order used below can be generated explicitly in \(m2^{O(k)}\operatorname{poly}(k)\) time.

The assignment vertices induce a disjoint union of cliques, while the guards, clause vertices, and checkers form an independent set. Thus \(G_\varphi\) is monopolar: its vertex set can be partitioned into a cluster graph and an independent set.

Set
\begin{equation}
  d:=km.
  \label{eq:basic-target}
\end{equation}

\subsection{Tight-budget normal form}

For \((h,a)\in[m]\times[k]\), put
\[
  Q_{h,a}:=A_{h,a}\cup\{g^0_{h,a},g^1_{h,a}\}.
\]

\begin{lemma}[Tight-budget normal form]
\label{lem:basic-normal-form}
Every dominating set of \(G_\varphi\) has size at least \(km\). If \(D\) is a dominating set with \(|D|\leq km\), then
\begin{enumerate}[label=\textup{(\roman*)}]
  \item \(|D|=km\);
  \item \(D\) contains exactly one vertex of every \(A_{h,a}\); and
  \item \(D\) contains no guard, clause vertex, or equality checker.
\end{enumerate}
\end{lemma}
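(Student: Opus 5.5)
The plan is a local counting argument on the $km$ disjoint sets $Q_{h,a}$. The key observation is that, for each $(h,a)$, the closed neighborhoods of the two guards satisfy
\[
  N[g^0_{h,a}]=A_{h,a}\cup\{g^0_{h,a}\}, \qquad N[g^1_{h,a}]=A_{h,a}\cup\{g^1_{h,a}\}.
\]
Both lie inside $Q_{h,a}$, because the guards are nonadjacent to each other and have open neighborhood exactly $A_{h,a}$. Hence a dominating set $D$ must meet $N[g^0_{h,a}]$ and $N[g^1_{h,a}]$, so $|D\cap Q_{h,a}|\geq1$. Moreover, if $D\cap A_{h,a}=\varnothing$, then both guards must belong to $D$, giving $|D\cap Q_{h,a}|\geq2$. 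Since the sets $Q_{h,a}$ are pairwise disjoint (distinct choice groups share no vertices, and each guard is attached to exactly one group), summing over all $(h,a)$ gives $|D|\geq km$.

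Now suppose $|D|\leq km$. The sum $\sum_{h,a}|D\cap Q_{h,a}|\leq|D|\leq km$ has $km$ terms, each at least $1$. Therefore every term equals $1$, and $D$ has no vertices outside $\bigcup Q_{h,a}$. This last fact gives (iii) for clause vertices and checkers, and also yields (i). For each $(h,a)$, a term equal to $1$ rules out the case $D\cap A_{h,a}=\varnothing$, since that case would force two guards. So the single vertex of $D\cap Q_{h,a}$ lies in $A_{h,a}$. This gives (ii), and it also shows that no guard is selected, completing (iii).

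The only delicate point is confirming that the $Q_{h,a}$ are pairwise disjoint and that the guards' closed neighborhoods are contained in them. Both facts follow directly from the construction, since there are no edges between distinct choice cliques and each guard has neighborhood exactly its own $A_{h,a}$. I do not expect a genuine obstacle.
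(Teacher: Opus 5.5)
Your proposal is correct and follows essentially the same route as the paper's proof. Both use the two-guard argument to get $|D\cap Q_{h,a}|\ge 1$ (and $\ge 2$ when $A_{h,a}$ is missed), sum over the $km$ disjoint blocks $Q_{h,a}$, and use equality under the budget to exclude outside vertices and guards. Your framing via the closed neighborhoods $N[g^i_{h,a}]\subseteq Q_{h,a}$ is just a slightly more explicit way of stating the paper's remark that the guards can only be dominated from within their block.
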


\begin{proof}
Fix \((h,a)\). If \(D\cap A_{h,a}=\varnothing\), then both guards must belong to \(D\): the guards are nonadjacent, neither has a neighbor outside \(A_{h,a}\), and hence neither can otherwise be dominated. If \(D\cap A_{h,a}\neq\varnothing\), then \(D\cap Q_{h,a}\neq\varnothing\). In either case,
\[
  |D\cap Q_{h,a}|\geq1.
\]
The \(km\) sets \(Q_{h,a}\) are pairwise disjoint, so \(|D|\geq km\).

Suppose now that \(|D|\leq km\). Equality must hold in every one of the preceding \(km\) inequalities, and no vertex outside \(\bigcup_{h,a}Q_{h,a}\) can belong to \(D\). For a fixed pair \((h,a)\), the unique vertex of \(D\cap Q_{h,a}\) cannot be a guard: if \(D\cap A_{h,a}=\varnothing\), both guards, rather than only one, are necessary. Hence \(D\cap A_{h,a}\) consists of exactly one assignment vertex. This also proves that no guard belongs to \(D\), completing the proof.
\end{proof}

Whenever Lemma~\ref{lem:basic-normal-form} applies, write the unique vertex of \(D\cap A_{h,a}\) as \(a_{h,a,x_{h,a}}\), and let \(X_h\in\F^{k\times k}\) be the matrix whose \(a\)-th row is \(x_{h,a}\).

\subsection{The equality checker}

\begin{lemma}[Exact equality test]
\label{lem:basic-equality}
Suppose the selected assignment vertices in two consecutive layers \(h\) and \(h+1\) encode matrices \(X,Y\in\F^{k\times k}\). All checkers between these layers have a selected assignment neighbor if and only if \(X=Y\).
\end{lemma}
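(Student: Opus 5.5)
The plan is to reduce the statement to a direct computation of when a single checker $e_{h,t,p,r}$ has a selected neighbor. By \eqref{eq:checker-left} and \eqref{eq:checker-right}, the selected neighbors of $e_{h,t,p,r}$ among the rows of $X$ and $Y$ are exactly those with $\langle x_a,t\rangle\neq p_a$ or $\langle y_a,t\rangle\neq p_a+r_a$. Write $u:=Xt\in\F^k$ and $v:=Yt\in\F^k$, so that $u_a=\langle x_a,t\rangle$ and $v_a=\langle y_a,t\rangle$. Then $e_{h,t,p,r}$ has no selected assignment neighbor if and only if
\[
  u=p \quad\text{and}\quad v=p+r,
\]
that is, $p=Xt$ and $r=(X+Y)t$. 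Since the checker's only neighbors are assignment vertices in layers $h$ and $h+1$, this characterizes undominated checkers, with the side condition that $r$ ranges only over nonzero vectors.

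For the "if" direction, assume $X=Y$. Then $(X+Y)t=0$ for every $t$. Since each checker has $r\neq0$, the condition $r=(X+Y)t$ never holds, so every checker has a selected neighbor.

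For the "only if" direction, assume $X\neq Y$. Then there is a $t\in\F^k$ with $(X+Y)t\neq0$; for instance, take $t$ to be a standard basis vector indexing a column in which $X$ and $Y$ differ. Set $p:=Xt$ and $r:=(X+Y)t\neq0$. The checker $e_{h,t,p,r}$ exists because $r\neq0$, and by the computation above it has no selected assignment neighbor.

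I do not expect a serious obstacle; the step that needs care is the translation of the adjacency rules into the vector equalities $u=p$ and $v=p+r$, row by row. In particular, one must use that $D$ contains exactly one vertex per group, which Lemma~\ref{lem:basic-normal-form} guarantees implicitly through the hypothesis that the selected vertices encode matrices. Without this, a second selected vertex in some group could dominate the checker.
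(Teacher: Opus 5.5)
Your proposal is correct and follows essentially the same argument as the paper. Both characterize an undominated checker \(e_{h,t,p,r}\) by the pair of equations \(Xt=p\) and \(Yt=p+r\), rule this out via \(r\neq0\) when \(X=Y\), and exhibit the witness \(p=Xt\), \(r=(X+Y)t\neq0\) when \(X\neq Y\).
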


\begin{proof}
Fix a checker \(e_{h,t,p,r}\). By \eqref{eq:checker-left}, it has no selected neighbor in layer \(h\) if and only if
\[
  \langle x_a,t\rangle=p_a
  \quad\text{for every }a\in[k],
\]
which is equivalent to \(Xt=p\). Similarly, by \eqref{eq:checker-right}, it has no selected neighbor in layer \(h+1\) if and only if \(Yt=p+r\). Therefore
\begin{equation}
  e_{h,t,p,r}\text{ is not dominated by the selected assignment vertices}
  \iff Xt=p\ \text{ and }\ Yt=p+r.
  \label{eq:undominated-checker}
\end{equation}

If \(X=Y\), the two equations on the right-hand side of \eqref{eq:undominated-checker} imply \(r=0\), contrary to the definition of a checker. Thus every checker has a selected neighbor.

Conversely, suppose \(X\neq Y\). The linear map \(X+Y\) is nonzero, so there exists \(t\in\F^k\) for which
\[
  r:=(X+Y)t\neq0.
\]
Set \(p:=Xt\). Since the field has characteristic two,
\[
  p+r=Xt+(X+Y)t=Yt.
\]
The checker \(e_{h,t,p,r}\) therefore satisfies the right-hand side of \eqref{eq:undominated-checker} and has no selected neighbor. This proves the converse.
\end{proof}

Appendix~\ref{app:standard-checker} gives a smaller standard-basis version of this equality checker with the same correctness and rank guarantees.

\subsection{Correctness and parsimony}

For \(X\in\F^{k\times k}\), with rows \(x_1,\ldots,x_k\), define
\begin{equation}
  D_X:=\{a_{h,a,x_a}:h\in[m],\ a\in[k]\}.
  \label{eq:canonical-ds}
\end{equation}

\begin{lemma}[Bijection with satisfying assignments]
\label{lem:basic-bijection}
The map \(X\mapsto D_X\) is a bijection from the satisfying assignments of \(\varphi\) to the dominating sets of \(G_\varphi\) of size at most \(d\). Every such dominating set has size exactly \(d\). Consequently,
\begin{equation}
  \Zle(G_\varphi,d)
  =\Zeq(G_\varphi,d)
  =\#\operatorname{SAT}(\varphi).
  \label{eq:parsimonious-basic}
\end{equation}
In particular, \(\varphi\) is satisfiable if and only if \(\gamma(G_\varphi)\leq d\).
\end{lemma}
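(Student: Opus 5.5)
We prove three things in turn: that \(D_X\) is a dominating set of size \(d\) whenever \(X\) satisfies \(\varphi\); that every dominating set of size at most \(d\) equals \(D_X\) for a unique satisfying \(X\); and that \(X\mapsto D_X\) is injective. The counting identity \eqref{eq:parsimonious-basic} and the decision equivalence then follow at once, because Lemma~\ref{lem:basic-normal-form} already shows that every dominating set of size at most \(d=km\) has size exactly \(d\).

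\emph{Forward direction.} Let \(X\) be satisfying. Then \(|D_X|=km\), since it contains one vertex from each of the \(km\) disjoint cliques \(A_{h,a}\). We check domination by vertex type.
\begin{enumerate}[label=\textup{(\alph*)}]
  \item Every assignment vertex lies in a clique \(A_{h,a}\) containing the selected vertex \(a_{h,a,x_a}\).
  \item Every guard \(g^i_{h,a}\) has open neighborhood \(A_{h,a}\), which meets \(D_X\).
  \item For a clause vertex \(c_h\), the assignment \(X\) satisfies some literal of \(C_h\); say its variable lies in row \(a\). Then \(x_a\) satisfies that literal, so \(c_h\) is adjacent to \(a_{h,a,x_a}\in D_X\).
  \item Every layer encodes the same matrix \(X\), so Lemma~\ref{lem:basic-equality} applied to each consecutive pair of layers gives that every checker has a selected neighbor.
\end{enumerate}

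\emph{Backward direction.} Let \(D\) be dominating with \(|D|\le d\). By Lemma~\ref{lem:basic-normal-form}, \(D\) consists of exactly one assignment vertex per group, so it determines matrices \(X_1,\dots,X_m\). Every checker has neighbors only among assignment vertices and lies outside \(D\). Hence all checkers between layers \(h\) and \(h+1\) are dominated by selected assignment vertices, and Lemma~\ref{lem:basic-equality} gives \(X_h=X_{h+1}\). By induction, all \(X_h\) equal a common matrix \(X\), and therefore \(D=D_X\).

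It remains to show that \(X\) is satisfying. Each \(c_h\notin D\) is dominated, and its neighbors are all assignment vertices of layer \(h\). So some selected vertex \(a_{h,a,x_a}\) is adjacent to \(c_h\). By the definition of the clause edges, \(x_a\) then satisfies a literal of \(C_h\) whose variable lies in row \(a\), and hence \(X\) satisfies \(C_h\). This holds for every \(h\).

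\emph{Injectivity.} If \(X\neq X'\), they differ in some row \(a\). Then \(D_X\) and \(D_{X'}\) select different vertices of \(A_{1,a}\), so \(D_X\neq D_{X'}\). Using \(m\ge1\), this gives injectivity.

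The only delicate point is in the backward direction: the argument needs checkers and clause vertices to be dominated specifically by selected assignment vertices. This is guaranteed by part (iii) of Lemma~\ref{lem:basic-normal-form} together with the fact that these vertices have no neighbors other than assignment vertices. With that in place, the remaining steps are routine.
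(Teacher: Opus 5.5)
Your proposal is correct and follows essentially the same route as the paper. You verify that \(D_X\) dominates for satisfying \(X\), then use Lemma~\ref{lem:basic-normal-form} and Lemma~\ref{lem:basic-equality} to force every dominating set of size at most \(d\) to equal \(D_X\) for a unique satisfying \(X\); the clause vertices supply satisfiability. You spell out a few details more explicitly than the paper: domination of assignment vertices, and injectivity via a differing row in layer~\(1\).
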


\begin{proof}
Let \(X\) satisfy \(\varphi\). The set \(D_X\) has size \(km=d\). In each choice group its selected vertex dominates the whole clique and both guards. Since \(X\) satisfies \(C_h\), at least one selected assignment vertex in layer \(h\) is adjacent to \(c_h\). Consecutive layers encode the same matrix \(X\), so Lemma~\ref{lem:basic-equality} shows that every equality checker is dominated. Thus \(D_X\) dominates \(G_\varphi\).

Conversely, let \(D\) dominate \(G_\varphi\) and satisfy \(|D|\leq d\). Lemma~\ref{lem:basic-normal-form} gives one matrix \(X_h\) in every layer and shows that neither clause vertices nor checkers belong to \(D\). Hence \(c_h\) must have a selected assignment neighbor, so \(X_h\) satisfies \(C_h\). Since every checker must also have a selected neighbor, Lemma~\ref{lem:basic-equality} gives
\[
  X_1=X_2=\cdots=X_m.
\]
The common matrix \(X\) satisfies every clause, and the normal form implies \(D=D_X\). The encoded matrix is uniquely determined by \(D\), so the two maps are inverse bijections. Equation~\eqref{eq:parsimonious-basic} follows.
\end{proof}

\subsection{A supplied order of width at most \texorpdfstring{\(4k+2\)}{4k+2}}

For \(h\in[m]\), let \(\Lambda_h\) be the following order of the vertices belonging to clause layer \(h\):
\[
  c_h,
  g^0_{h,1},g^1_{h,1},A_{h,1},
  \ldots,
  g^0_{h,k},g^1_{h,k},A_{h,k},
\]
where the vertices inside each \(A_{h,a}\) are ordered arbitrarily. Let
\[
  E_h:=\{e_{h,t,p,r}:t,p\in\F^k,
  \ r\in\F^k\setminus\{0\}\}
\]
be ordered arbitrarily. Define
\begin{equation}
  \pi:=\Lambda_1,E_1,\Lambda_2,E_2,\ldots,E_{m-1},\Lambda_m.
  \label{eq:basic-order}
\end{equation}

We first isolate the only two nonconstant-rank interfaces. Let \(M_L\) have rows indexed by \((a,x)\in[k]\times\F^k\) and columns indexed by
\[
  (t,p,r)\in\F^k\times\F^k\times(\F^k\setminus\{0\}).
\]
Its entries are
\[
  M_L[(a,x),(t,p,r)]
  =\langle x,t\rangle+p_a.
\]
Viewing \(t_i,p_i,r_i\) as functions of the column index, we have
\begin{equation}
  \operatorname{rowspan}(M_L)
  \subseteq\operatorname{span}
  \{t_1,\ldots,t_k,p_1,\ldots,p_k\},
  \qquad
  \rk(M_L)\leq2k.
  \label{eq:left-rank}
\end{equation}
For the right interface, the entries are
\[
  M_R[(a,x),(t,p,r)]
  =\langle x,t\rangle+p_a+r_a.
\]
Similarly,
\begin{equation}
  \operatorname{rowspan}(M_R)
  \subseteq\operatorname{span}
  \{t_1,\ldots,t_k,p_1+r_1,\ldots,p_k+r_k\},
  \qquad
  \rk(M_R)\leq2k.
  \label{eq:right-rank}
\end{equation}
The same bounds hold after deleting any rows or columns.

\begin{lemma}[Width of the supplied order]
\label{lem:basic-layout}
The order \(\pi\) in \eqref{eq:basic-order} satisfies
\[
  \width_{G_\varphi}(\pi)\leq4k+2.
\]
\end{lemma}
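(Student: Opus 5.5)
The plan is to bound the cut-rank of every proper prefix \(P\) of \(\pi\) directly. I will use only two facts: subadditivity of \(\rk\) over a partition of the rows or of the columns of \(A_{G_\varphi}[P,\overline P]\), and the bounds \eqref{eq:left-rank} and \eqref{eq:right-rank}, which survive deleting rows and columns. The adjacency rule \eqref{eq:checker-left} puts an edge exactly when \(\langle x,t\rangle+p_a=1\). Hence the biadjacency matrix between layer-\(h\) assignment vertices and checkers of \(E_h\) is a submatrix of \(M_L\). By \eqref{eq:checker-right}, the biadjacency matrix between checkers of \(E_h\) and layer-\((h+1)\) assignment vertices is a submatrix of \(M_R\). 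Every other vertex type has only constant-size or all-ones interactions. There are two shapes of prefix, according to whether the last vertex of \(P\) lies in some \(E_h\) or in some \(\Lambda_h\).

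\emph{Prefix ending inside \(E_h\).} Here \(P\) contains \(\Lambda_1,\dots,\Lambda_h\), the sets \(E_1,\dots,E_{h-1}\), and part of \(E_h\). I first list the edges that cross the cut. Checkers are pairwise nonadjacent. Each checker only sees its two endpoint layers. Clause vertices and guards only see their own layer. Choice cliques are internal to one layer. So the only crossing edges go either between layer-\(h\) assignment vertices and the checkers of \(E_h\setminus P\), or between the checkers of \(E_h\cap P\) and layer-\((h+1)\) assignment vertices. I split the rows of \(A[P,\overline P]\) into the layer-\(h\) rows and the \(E_h\cap P\) rows; all remaining rows are zero. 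The first block is a submatrix of \(M_L\) and the second is a submatrix of \(M_R\). Therefore the rank is at most \(2k+2k=4k\).

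\emph{Prefix ending inside \(\Lambda_h\).} Here \(P\) contains everything up to and including \(E_{h-1}\), if \(h\geq2\), together with an initial segment of \(\Lambda_h\). Since \(c_h\) comes first in \(\Lambda_h\), that segment contains \(c_h\). It also contains the groups \(Q_{h,1},\dots,Q_{h,a-1}\) completely, and possibly the two guards and part of \(A_{h,a}\). I partition the rows in \(P\) into three families and bound each one.
\begin{enumerate}[label=\textup{(\roman*)}]
  \item The rows of \(E_{h-1}\). Their only crossing neighbors are the layer-\(h\) assignment vertices outside \(P\), so this block is a submatrix of \(M_R\) and has rank at most \(2k\). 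Earlier layers and earlier checkers have no crossing edges at all.
  \item The single row \(c_h\), which contributes rank at most \(1\).
  \item The layer-\(h\) assignment vertices and guards in \(P\). I split their columns into \(E_h\) and the remainder of \(A_{h,a}\). Layer-\(h\) vertices see nothing else outside \(P\), because \(c_h\in P\) and the later groups are not adjacent to earlier ones. The \(E_h\)-columns form a submatrix of \(M_L\) with extra zero rows for the guards, which gives rank at most \(2k\). In the columns for \(A_{h,a}\setminus P\), a row is all-ones if it is a guard of \(Q_{h,a}\) or a vertex of \(A_{h,a}\cap P\), and zero otherwise. This uses that \(A_{h,a}\) is a clique and that the guards have neighborhood exactly \(A_{h,a}\). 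So this part has rank at most \(1\).
\end{enumerate}
Summing the three families gives \(2k+1+2k+1=4k+2\).

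The main obstacle is the second shape, where the partial choice group and \(c_h\) each cost an extra \(1\). The care needed there is to check that the clause vertex and the guards do not spill into the \(M_L\) or \(M_R\) blocks and raise their rank. This is exactly why the order places \(c_h\) first and the guards before their clique. Boundary cases, namely \(h=1\) with no \(E_0\) and \(h=m\) with no \(E_m\), only remove blocks and so only lower the bound. Taking the maximum over all prefixes yields \(\width_{G_\varphi}(\pi)\leq4k+2\).
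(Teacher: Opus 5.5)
Your proof is correct and follows the paper's argument. It uses the same two cases: a cut inside a checker block, where the crossing edges are restrictions of \(M_L\) and \(M_R\) and give rank at most \(4k\); and a cut inside a layer, where the two checker interfaces are charged \(2k\) each, plus one for the split choice group and one for \(c_h\). Your explicit row/column partition only spells out the paper's subadditivity bookkeeping, and your checker-row blocks are submatrices of the transpose of \(M_R\), which leaves the rank unchanged.
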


\begin{proof}
Consider a prefix cut of \(\pi\).

Suppose first that the cut lies inside a checker block \(E_h\). The only crossing edges are from assignment vertices in \(\Lambda_h\) to the unprocessed part of \(E_h\), and from the processed part of \(E_h\) to assignment vertices in \(\Lambda_{h+1}\). The two crossing submatrices are restrictions of \(M_L\) and \(M_R\). Rank subadditivity and \eqref{eq:left-rank}--\eqref{eq:right-rank} give cut-rank at most \(4k\).

Suppose next that the cut lies inside \(\Lambda_h\). Edges from \(E_{h-1}\) to unprocessed assignment vertices in \(\Lambda_h\) have rank at most \(2k\), and edges from processed assignment vertices in \(\Lambda_h\) to \(E_h\) have rank at most \(2k\). A missing checker block at either end contributes zero.

It remains to account for edges internal to \(\Lambda_h\). Every choice group is contiguous, so at most one choice group is split. The guards precede its assignment clique. If a cut separates the two guards, the single crossing guard row is constant on the clique. If the cut lies in the assignment clique, all processed guards and assignment vertices have the same all-one pattern toward the unprocessed part of that clique. Thus, in every case, the crossing edges internal to the split choice group have rank at most one. There are no edges between different choice groups. Finally, the single clause vertex \(c_h\) contributes at most one additional crossing row. The cut-rank is therefore at most
\[
  2k+2k+1+1=4k+2.
\]
These cases cover every prefix cut.
\end{proof}

For rank-width, as opposed to linear rank-width, Appendix~\ref{app:rank-decomposition} constructs rank-decompositions of width at most \(3k+O(1)\), including for the split and bipartite refinements that use the smaller checker of Appendix~\ref{app:standard-checker}.

\subsection{Split-graph instances}
\label{subsec:split-refinement}

Let \(G_\varphi^{\mathrm{sp}}\) be obtained from \(G_\varphi\) by adding all missing edges between assignment vertices. Thus
\[
  K:=\bigcup_{h\in[m],\,a\in[k]}A_{h,a}
\]
is a clique, while \(V(G_\varphi^{\mathrm{sp}})\setminus K\) is an independent set. Hence \(G_\varphi^{\mathrm{sp}}\) is a split graph.

\begin{proposition}[Split completion]
\label{prop:split-completion}
The map \(X\mapsto D_X\) is a bijection from the satisfying assignments of \(\varphi\) to the dominating sets of \(G_\varphi^{\mathrm{sp}}\) of size at most \(d\). Moreover, the order \(\pi\) satisfies
\[
  \width_{G_\varphi^{\mathrm{sp}}}(\pi)\leq4k+3.
\]
\end{proposition}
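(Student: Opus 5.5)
The proof has two parts: the bijection, and the width bound. In both, we check how the earlier arguments for \(G_\varphi\) survive the added edges between assignment vertices. The added edges join assignment vertices in different choice groups, possibly in different layers.

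\textbf{Bijection.} We re-run Lemma~\ref{lem:basic-normal-form} in \(G_\varphi^{\mathrm{sp}}\). It uses only the following facts.
\begin{enumerate}[label=\textup{(\alph*)}]
  \item The guards \(g^0_{h,a},g^1_{h,a}\) are nonadjacent.
  \item Each guard has open neighborhood exactly \(A_{h,a}\).
  \item The sets \(Q_{h,a}\) are pairwise disjoint.
\end{enumerate}
No guard gains an edge, so all three facts still hold. The lemma therefore transfers verbatim: every dominating set has size at least \(km\), and one of size at most \(d\) consists of exactly one assignment vertex per group.

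The clause vertices and checkers also keep exactly their old neighborhoods. Hence the domination conditions at \(c_h\) and at the checkers are unchanged, and Lemma~\ref{lem:basic-equality} applies unchanged. For the forward direction, \(D_X\) still dominates every vertex: the extra edges can only help, and assignment vertices are in any case dominated by the clique \(K\). The proof of Lemma~\ref{lem:basic-bijection} then goes through word for word and yields the bijection.

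\textbf{Width.} Take a prefix cut \((P,\overline P)\) of \(\pi\). The adjacency matrix of \(G_\varphi^{\mathrm{sp}}\) across the cut differs from that of \(G_\varphi\) by a matrix \(N\). The entries of \(N\) are the added edges, with rows indexed by assignment vertices in \(P\) and columns by assignment vertices in \(\overline P\). I would like to argue that this contributes rank at most one, giving \(4k+2+1\).

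Naively, \(N\) is the all-ones block minus the old intra-group block. The old intra-group block is nonzero only for the single choice group split by the cut. The cleaner route, however, is to look directly at the full new crossing matrix restricted to assignment-vertex rows and assignment-vertex columns. In \(G_\varphi^{\mathrm{sp}}\), every processed assignment vertex is adjacent to every unprocessed assignment vertex, so this block is all-ones and has rank at most one.

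The crossing matrix then splits into the following parts.
\begin{enumerate}[label=\textup{(\alph*)}]
  \item The assignment-to-assignment block, which is all-ones and has rank at most one.
  \item The edges between assignment vertices and checkers, of rank at most \(4k\), exactly as in Lemma~\ref{lem:basic-layout}.
  \item The guard rows or columns of the split group, of rank at most one.
  \item The clause row \(c_h\), of rank at most one.
\end{enumerate}
Nothing else crosses the cut.

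A little care is needed about which guard edges to count. In Lemma~\ref{lem:basic-layout}, the bound of one for the split group covered both guard edges and intra-clique edges. Here the clique edges have moved to part (a). So I count only the guard-to-assignment crossings of the split group, and these have rank at most one: a processed guard row is constant on the unprocessed part, and an unprocessed guard column is constant on the processed part.

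Subadditivity then gives at most \(1+4k+1+1=4k+3\). Cuts inside a checker block are handled the same way, and there the sum is even smaller.

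The main obstacle is this bookkeeping. One must avoid double-counting the former intra-clique term while still making sure every crossing pair of vertices is assigned to one of the four parts. With the partition above, all parts together cover the whole crossing matrix, and the bound \(4k+3\) follows.
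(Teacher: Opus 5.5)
Your proposal is correct and matches the paper's proof essentially step for step. The normal form transfers because guard, clause, and checker neighborhoods are untouched, and the width bound comes from splitting the crossing matrix directly into the all-ones assignment block, the two checker interfaces, the split group's guard incidences, and the clause row, giving \(1+4k+1+1=4k+3\) (and \(4k+1\) for cuts inside checker blocks). One small tightening: the guard part has rank at most one because in \(\pi\) the guards precede their clique, so only processed guard rows ever cross. Your clause about unprocessed guard columns is vacuous here, and if both kinds of guard crossing occurred at the same cut the rank could be two.
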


\begin{proof}
The added edges do not change the neighborhood of any guard, clause vertex, or equality checker. In particular, the two guards belonging to \(A_{h,a}\) still have neighborhood exactly \(A_{h,a}\). The proof of Lemma~\ref{lem:basic-normal-form} therefore applies verbatim. Under that normal form, every assignment vertex is dominated because all assignment vertices form a clique, while domination of clause vertices and checkers is governed by exactly the same conditions as before. The proof of Lemma~\ref{lem:basic-bijection} consequently remains valid.

For completeness, we bound the width directly; this avoids any assumption that completing a collection of cliques is a rank-one perturbation on an arbitrary cut. If a prefix cut lies inside \(E_h\), the two checker interfaces contribute at most \(4k\), and all crossing edges inside the global assignment clique form one complete bipartite graph, of rank at most one. The cut-rank is at most \(4k+1\).

If the cut lies inside \(\Lambda_h\), the two checker interfaces again contribute at most \(4k\). The global assignment clique contributes one all-one crossing matrix. At most one choice group is split, and its guard edges contribute rank at most one. The clause vertex contributes at most one more row. Thus the cut-rank is at most \(4k+3\).
\end{proof}

It follows in particular that
\begin{equation}
  \Zle(G_\varphi^{\mathrm{sp}},d)
  =\Zeq(G_\varphi^{\mathrm{sp}},d)
  =\#\operatorname{SAT}(\varphi).
  \label{eq:parsimonious-split}
\end{equation}

\subsection{Bipartite instances of diameter at most four}
\label{subsec:bipartite-refinement}

Construct \(G_\varphi^{\mathrm{bip}}\) as follows. Start with \(G_\varphi\), delete every edge whose two endpoints lie in the same choice clique \(A_{h,a}\), and add three new vertices \(z,z^0,z^1\). Join \(z\) to every assignment vertex and to \(z^0,z^1\); the vertices \(z^0,z^1\) have no other neighbors. Set
\[
  d_{\mathrm{bip}}:=km+1.
\]

The graph is bipartite with parts
\begin{align*}
  L&:=\Bigl(\bigcup_{h,a}A_{h,a}\Bigr)\cup\{z^0,z^1\},\\
  R&:=\{z\}\cup
  \{g^i_{h,a}:h\in[m],a\in[k],i\in\{0,1\}\}
  \cup\{c_h:h\in[m]\}
  \cup\bigcup_{h\in[m-1]}E_h.
\end{align*}

\begin{lemma}[Normal form with a forced hub]
\label{lem:bip-normal-form}
Every dominating set of \(G_\varphi^{\mathrm{bip}}\) has size at least \(km+1\). If \(D\) is a dominating set with \(|D|\leq km+1\), then
\[
  D=\{z\}\cup
  \{a_{h,a,x_{h,a}}:h\in[m],a\in[k]\}
\]
for uniquely determined vectors \(x_{h,a}\in\F^k\).
\end{lemma}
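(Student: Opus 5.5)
The argument will mirror the proof of Lemma~\ref{lem:basic-normal-form}, with one extra disjoint ``charging set'' for the hub. The vertex sets used for charging are
\[
  Q_{h,a}:=A_{h,a}\cup\{g^0_{h,a},g^1_{h,a}\}\quad((h,a)\in[m]\times[k]),
  \qquad
  Q_z:=\{z,z^0,z^1\}.
\]
These \(km+1\) sets are pairwise disjoint. For the lower bound, note that \(z^0\) and \(z^1\) are nonadjacent and each has open neighborhood exactly \(\{z\}\). If \(z\notin D\), then both \(z^0,z^1\in D\), so in every case \(|D\cap Q_z|\geq1\). Next, each guard \(g^i_{h,a}\) in \(G_\varphi^{\mathrm{bip}}\) still has open neighborhood exactly \(A_{h,a}\). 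Deleting clique edges does not touch guard edges, and \(z\) is not adjacent to any guard. The guards are also nonadjacent, so if \(D\cap A_{h,a}=\varnothing\) then both guards lie in \(D\); hence \(|D\cap Q_{h,a}|\geq1\). Summing over the disjoint sets gives \(|D|\geq km+1\).

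Now suppose \(|D|\leq km+1\). Then every one of these inequalities is tight, and \(D\) contains no vertex outside \(Q_z\cup\bigcup_{h,a}Q_{h,a}\); in particular \(D\) contains no clause vertex and no checker. In \(Q_z\) the unique selected vertex must be \(z\), since omitting \(z\) forces two vertices. Likewise, in each \(Q_{h,a}\) the unique selected vertex cannot be a guard: if \(D\cap A_{h,a}=\varnothing\), both guards would be needed. So it is a single assignment vertex \(a_{h,a,x_{h,a}}\). This yields exactly the displayed form of \(D\).

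The vectors \(x_{h,a}\) are uniquely determined because \(D\cap A_{h,a}\) is a singleton. The only step needing care, which I take to be the main obstacle, is verifying that the new hub edges do not give the guards or \(z^0,z^1\) any additional neighbors. If they did, the ``two forced vertices'' argument would break. This follows directly from the construction, since \(z\) is joined only to assignment vertices and to \(z^0,z^1\). Domination of the assignment vertices themselves is not needed for the normal form; it will be handled by \(z\) in the subsequent bijection argument.
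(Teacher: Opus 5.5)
Your proposal is correct and follows essentially the same argument as the paper. The paper also charges to the \(km\) disjoint choice blocks \(Q_{h,a}\) and the hub triple \(\{z,z^0,z^1\}\), uses the fact that omitting \(z\) (respectively \(A_{h,a}\)) forces both leaves (respectively both guards), and concludes from tightness that each block contains exactly \(z\) or exactly one assignment vertex.
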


\begin{proof}
The argument for each \(Q_{h,a}\) does not use the clique edges inside \(A_{h,a}\): if \(D\cap A_{h,a}=\varnothing\), both guards must be selected, whereas otherwise \(D\cap Q_{h,a}\neq\varnothing\). Hence every \(Q_{h,a}\) contributes at least one vertex.

Let \(H:=\{z,z^0,z^1\}\). If \(z\notin D\), then both private leaves \(z^0,z^1\) must belong to \(D\); otherwise \(D\cap H\neq\varnothing\). Thus \(|D\cap H|\geq1\). The \(km\) choice sets and \(H\) are pairwise disjoint, proving \(|D|\geq km+1\).

At equality, every one of these \(km+1\) sets contributes exactly one vertex, and no vertex outside their union is selected. The unique vertex in \(H\) must be \(z\), because omitting \(z\) requires both leaves. The unique vertex in \(Q_{h,a}\) must lie in \(A_{h,a}\), because omitting \(A_{h,a}\) requires both guards. This proves the claimed form and its uniqueness.
\end{proof}

\begin{proposition}[Bipartite diameter-four refinement]
\label{prop:bip-refinement}
The following statements hold.
\begin{enumerate}[label=\textup{(\roman*)}]
  \item \(G_\varphi^{\mathrm{bip}}\) is bipartite and has diameter at most four.
  \item The map \(X\mapsto D_X\cup\{z\}\) is a bijection from the satisfying assignments of \(\varphi\) to the dominating sets of \(G_\varphi^{\mathrm{bip}}\) of size at most \(d_{\mathrm{bip}}\). Every such set has size exactly \(d_{\mathrm{bip}}\).
  \item The order
  \[
    \pi_{\mathrm{bip}}
    :=z^0,z^1,z,\Lambda_1,E_1,\ldots,E_{m-1},\Lambda_m,
  \]
  where the assignment sets are now independent, has width at most \(4k+3\).
\end{enumerate}
\end{proposition}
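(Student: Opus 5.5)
I will treat the three parts separately; parts (i) and (ii) are short, and part (iii) is the main work.

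For (i), bipartiteness is a check that every edge of \(G_\varphi^{\mathrm{bip}}\) joins \(L\) to \(R\). Deleting the intra-clique edges leaves only edges from assignment vertices to guards, clause vertices and checkers, all of which lie in \(R\). The new edges join \(z\in R\) to assignment vertices and to \(z^0,z^1\in L\). For the diameter bound, I show that every vertex is within distance two of \(z\):
\begin{itemize}
\item assignment vertices and \(z^0,z^1\) are adjacent to \(z\);
\item each guard is adjacent to all of \(A_{h,a}\), which is nonempty;
\item each clause vertex \(c_h\) has an assignment neighbor, because \(C_h\) is nonempty and some row vector satisfies a literal of \(C_h\);
\item each checker \(e_{h,t,p,r}\) has an assignment neighbor. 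For fixed \(t\) and \(a\), the value \(\langle x,t\rangle\) ranges over \(\F\) as \(x\) varies when \(t\neq0\). When \(t=0\), the condition in \eqref{eq:checker-left} or \eqref{eq:checker-right} requires \(p_a=1\) or \(p_a+r_a=1\). Since \(r\neq0\), some \(a\) has \(r_a=1\), so exactly one of \(p_a\) and \(p_a+r_a\) equals \(1\).
\end{itemize}
Hence any two vertices are within distance four via \(z\), and the graph is connected.

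For (ii), I use Lemma~\ref{lem:bip-normal-form}: every dominating set of size at most \(km+1\) has the form \(\{z\}\cup\{a_{h,a,x_{h,a}}\}\), which fixes matrices \(X_h\). Under this normal form, \(z\) dominates all assignment vertices and \(z^0,z^1\), and each selected \(a_{h,a,x}\) dominates both guards of its group. The clause vertices and checkers have the same neighborhoods as in \(G_\varphi\), and none of them is adjacent to \(z\). So their domination conditions are exactly those in the proof of Lemma~\ref{lem:basic-bijection}. Combining this with Lemma~\ref{lem:basic-equality} gives \(X_1=\cdots=X_m\), a satisfying assignment. The argument then repeats verbatim, giving the bijection \(X\mapsto D_X\cup\{z\}\) and the exact size \(d_{\mathrm{bip}}\).

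For (iii), I redo the case analysis of Lemma~\ref{lem:basic-layout}. The prefixes lying inside \(z^0,z^1,z\) are trivial: they have rank at most one, since \(z^0,z^1\) see only \(z\), and once \(z\) is processed only its row crosses. For every later cut, \(z\) and its leaves lie on the processed side. Their crossing contribution is the single row of \(z\), which is all-ones on the unprocessed assignment vertices, so it adds rank at most one. Inside a checker block \(E_h\), the interfaces \(M_L,M_R\) give at most \(4k\), so the total is at most \(4k+1\). Inside \(\Lambda_h\), the checker interfaces give \(2k+2k\), the clause vertex \(c_h\) gives at most one, and the split choice group gives at most one. In that group, the guards precede the now independent set \(A_{h,a}\). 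Each processed guard has the all-ones row on the unprocessed part of \(A_{h,a}\), and processed assignment vertices no longer have edges into \(A_{h,a}\). The unprocessed guards have no processed neighbors once the cut lies before \(A_{h,a}\) or inside it. Summing gives \(4k+3\).

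The main obstacle is simply the bookkeeping: making sure the hub row and the guard rows are counted once each, and not double-counting their common all-ones pattern.
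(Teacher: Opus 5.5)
Your proposal is correct and follows essentially the same route as the paper. It checks bipartiteness against the displayed partition, shows that every vertex is within distance two of the hub \(z\), and derives the bijection from Lemma~\ref{lem:bip-normal-form} together with Lemma~\ref{lem:basic-equality}. It bounds the width by adding one for the hub star to the case analysis of Lemma~\ref{lem:basic-layout}. The differences are cosmetic: your \(t\neq0\) versus \(t=0\) case split for checkers is equivalent to the paper's argument, and your uniform ``distance at most two to \(z\)'' bound suffices for diameter at most four without the paper's separate \(L\)--\(L\) and \(L\)--\(R\) distance remarks.
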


\begin{proof}
The displayed sets \(L,R\) form a bipartition. Every vertex of \(R\setminus\{z\}\) has an assignment neighbor. This is immediate for a guard. A clause vertex has an assignment neighbor because its clause is nonempty. Finally, a checker also has an assignment neighbor: if it has none in its earlier endpoint layer, then \(t=0\) and \(p=0\); since \(r\neq0\), some coordinate of \(p+r\) equals one, giving neighbors in its later endpoint layer. Every two vertices of \(L\) have the common neighbor \(z\). A vertex of \(R\setminus\{z\}\) is at distance two from \(z\), through an assignment neighbor. It follows that two vertices of \(R\) are at distance at most four, and a vertex of \(L\) and a vertex of \(R\) are at distance at most three. Hence the diameter is at most four.

Let \(X\) satisfy \(\varphi\). The hub \(z\) dominates all assignment vertices and its two leaves. The selected assignment vertices dominate all guards, clause vertices, and checkers exactly as in Lemma~\ref{lem:basic-bijection}. Therefore \(D_X\cup\{z\}\) is a dominating set of size \(km+1\).

Conversely, let \(D\) be a dominating set of size at most \(km+1\). Lemma~\ref{lem:bip-normal-form} gives the hub and one assignment vertex in every choice group, and excludes clause vertices and checkers from \(D\). The clause and equality arguments in Lemmas~\ref{lem:basic-equality} and \ref{lem:basic-bijection} are unchanged. Thus all layers encode one common satisfying matrix \(X\), and \(D=D_X\cup\{z\}\). This proves the bijection.

It remains to bound the width. A cut inside the initial three-vertex hub block has rank at most one. At every later cut, the crossing edges incident with \(z\) form a star and increase the rank by at most one. If a cut lies inside a checker block, the two checker interfaces contribute at most \(4k\), for a total of at most \(4k+1\). If it lies inside a layer, the two checker interfaces contribute at most \(4k\); the split guard--assignment incidence, the clause vertex, and the hub each increase the rank by at most one. The total is at most \(4k+3\).
\end{proof}

Consequently,
\begin{equation}
  \Zle(G_\varphi^{\mathrm{bip}},d_{\mathrm{bip}})
  =\Zeq(G_\varphi^{\mathrm{bip}},d_{\mathrm{bip}})
  =\#\operatorname{SAT}(\varphi).
  \label{eq:parsimonious-bip}
\end{equation}

\subsection{Decision and counting lower bounds}

We now prove the two theorems stated at the beginning of the section.  The phrase \emph{a supplied order of width \(w\)} means that the order \(\pi\) is part of the input and \(w=\width_G(\pi)\).

\begin{proof}[Proof of Theorem~\ref{thm:unweighted-ds}]
Given \(\varphi\) with \(k^2\) variables, use \(G_\varphi\), \(G_\varphi^{\mathrm{sp}}\), or \(G_\varphi^{\mathrm{bip}}\), with the corresponding target and supplied order. The relevant bijection proves that the resulting instance is a yes-instance exactly when \(\varphi\) is satisfiable. In every case the number of vertices is \(m2^{O(k)}\), the construction time is \(2^{O(k)}(k+m)^{O(1)}\), and the supplied width is \(O(k)\). An algorithm with the stated running time would therefore solve \(\varphi\) in \(2^{o(k^2)}m^{O(1)}\) time, contradicting \eqref{eq:square-eth}.
\end{proof}

\begin{corollary}[Rank-width formulation]
\label{cor:unweighted-rw}
Unless ETH fails, unweighted Dominating Set has no
\[
  2^{o(\rw(G)^2)}|V(G)|^{O(1)}
\]
algorithm, even on split graphs or on bipartite graphs of diameter at most four. The statement remains true if a rank-decomposition or the orders constructed above are supplied with the input.
\end{corollary}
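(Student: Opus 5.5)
The corollary is a direct consequence of Theorem~\ref{thm:unweighted-ds} together with the chain \eqref{eq:width-chain}. I will argue by contradiction. Suppose some algorithm \(\mathcal A\) decides unweighted Dominating Set on split graphs (respectively, bipartite graphs of diameter at most four) in time \(2^{o(\rw(G)^2)}|V(G)|^{O(1)}\). The goal is to turn \(\mathcal A\) into an algorithm that solves 3-SAT on formulas with \(k^2\) variables fast enough to contradict Lemma~\ref{lem:square-eth}.

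\textbf{The reduction.} Given \(\varphi\), I build \(G_\varphi^{\mathrm{sp}}\) (or \(G_\varphi^{\mathrm{bip}}\)) together with its target and the order \(\pi\) (or \(\pi_{\mathrm{bip}}\)). This takes \(m2^{O(k)}\operatorname{poly}(k)\) time.
\begin{itemize}
\item By Proposition~\ref{prop:split-completion} (respectively Proposition~\ref{prop:bip-refinement}), the order has width at most \(4k+3\).
\item By \eqref{eq:width-chain}, \(\rw(G)\leq\lrw(G)\leq4k+3\).
\item The instance is a yes-instance exactly when \(\varphi\) is satisfiable, by \eqref{eq:parsimonious-split} or \eqref{eq:parsimonious-bip}.
\end{itemize}

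\textbf{The running time.} Run \(\mathcal A\) on this instance. Write its exponent as \(g(\rw(G))\rw(G)^2\) with \(g(w)\to0\), and assume without loss of generality that \(g\) is nonincreasing (replace \(g\) by its upper envelope). Since \(\rw(G)\leq 4k+3\), the exponent is at most \(\max_{w\leq4k+3} g(w)w^2\), which is \(o(k^2)\): for large \(w\) the term \(g(w)w^2\) is \(o(k^2)\), and small \(w\) contribute only \(O(1)\). The polynomial factor \(|V(G)|^{O(1)}=(m2^{O(k)})^{O(1)}=2^{O(k)}m^{O(1)}\) is absorbed. The total time is therefore \(2^{o(k^2)}(k+m)^{O(1)}\), contradicting \eqref{eq:square-eth}.

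\textbf{Supplied decompositions.} For the variant where the decomposition is part of the input, I supply the constructed order itself. Alternatively, I convert it in polynomial time into a caterpillar rank-decomposition of width at most \(4k+3\), as noted after \eqref{eq:layout-width}. In both cases the parameter is \(O(k)\) and the same calculation applies.

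\textbf{Main obstacle.} The only subtle point is the asymptotic bookkeeping: the parameter is bounded only from above by \(O(k)\), not equal to it. The monotone-envelope argument above handles this. Everything else is inherited from the cited propositions.
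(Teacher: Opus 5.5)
Your proof is correct and matches the paper's argument: bound \(\rw(G)\leq\lrw(G)\leq\width_G(\pi)=O(k)\) on the split and bipartite constructions, reuse the reduction behind Theorem~\ref{thm:unweighted-ds}, and convert the order into a caterpillar rank-decomposition for the supplied-decomposition variant. You also spell out the \(o(\cdot)\) bookkeeping for a parameter that is only bounded above by \(4k+3\), which the paper leaves implicit; that argument is sound, and it does not even need the monotone envelope.
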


\begin{proof}
For every constructed instance,
\[
  \rw(G)\leq\lrw(G)\leq\width_G(\pi)=O(k).
\]
Combining this inequality with the proof of Theorem~\ref{thm:unweighted-ds} gives the parameterized claims.  To obtain the supplied-decomposition formulation, convert \(\pi\) in polynomial time into the caterpillar rank-decomposition described in Section~\ref{sec:preliminaries}; its width is also \(O(k)\).
\end{proof}

\begin{proof}[Proof of Theorem~\ref{thm:counting-ds}]
Equations~\eqref{eq:parsimonious-basic}, \eqref{eq:parsimonious-split}, and \eqref{eq:parsimonious-bip} show that in each construction the number of solutions of size at most the target and the number of solutions of size exactly the target both equal \(\#\operatorname{SAT}(\varphi)\), with no multiplicative factor and no spurious solutions. The width, size, and construction-time calculation in Theorem~\ref{thm:unweighted-ds} therefore yields a \(2^{o(k^2)}m^{O(1)}\)-time algorithm for counting the satisfying assignments of \(\varphi\), contradicting \(\#\mathrm{ETH}\).  The linear-rank-width, rank-width, and supplied-decomposition formulations follow as in Corollary~\ref{cor:unweighted-rw}.
\end{proof}

Appendix~\ref{app:seth-constants} strengthens this qualitative counting lower bound for the basic monopolar construction under \(\#\mathrm{SETH}\), giving explicit constants for both the supplied-rank-decomposition and supplied-order formulations.

\begin{remark}[Meaning of parsimony]
The graph construction gives an actual bijection between satisfying assignments and target-size dominating sets. In particular, this is stronger than a reduction that merely preserves satisfiability. The restriction to \(\Zle\) or \(\Zeq\) is essential: dominating sets larger than the tight target generally exist and are intentionally not controlled by the reduction.
\end{remark}

\section{Independent, connected, and total domination}
\label{sec:three-variants}

The normal-form lemmas from Section~\ref{sec:unweighted-ds} do more than decide ordinary domination.  They identify \emph{every} dominating set up to the tight target.  We can therefore obtain three further lower bounds by checking which additional property is enjoyed by the canonical set in each of the three graph constructions.

\begin{theorem}[Three domination variants]
\label{thm:three-variants}
Unless ETH fails, none of the following problems admits an algorithm with running time
\[
  2^{o(\width_G(\pi)^2)}|V(G)|^{O(1)}
\]
when a vertex order \(\pi\) is supplied with the input:
\begin{enumerate}[label=\textup{(\roman*)}]
  \item \IDS{} on monopolar graphs;
  \item \CDS{} on split graphs, and separately on bipartite graphs of diameter at most four; and
  \item \TDS{} on split graphs, and separately on bipartite graphs of diameter at most four.
\end{enumerate}
For source formulas with \(k^2\) variables, the supplied vertex orders have width at most \(4k+2\) in (i), and at most \(4k+3\) in each split or bipartite construction in (ii)--(iii).  Every conclusion remains true when the parameter is \(\lrw(G)\), \(\rw(G)\), or the width of a supplied rank-decomposition.
\end{theorem}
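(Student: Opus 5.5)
The plan is to reuse the three constructions of Section~\ref{sec:unweighted-ds} unchanged, with the same targets and orders. The key observation is that every independent, connected, or total dominating set is in particular a dominating set. Hence any solution of size at most the target is one of the sets classified by Lemma~\ref{lem:basic-normal-form} or Lemma~\ref{lem:bip-normal-form}, namely $D_X$ or $D_X\cup\{z\}$ with $X$ satisfying $\varphi$, by Lemma~\ref{lem:basic-bijection}, Proposition~\ref{prop:split-completion} and Proposition~\ref{prop:bip-refinement}. So for each variant and each graph class it suffices to verify one thing: the canonical set has the extra property for every $X\in\F^{k\times k}$. This gives a bijection between satisfying assignments and variant-solutions of size at most the target, all of size exactly the target. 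The ETH argument and the width bounds ($4k+2$ and $4k+3$) then transfer verbatim from the proof of Theorem~\ref{thm:unweighted-ds} and Corollary~\ref{cor:unweighted-rw}, and the counting versions follow as well.

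The verifications are as follows.

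For (i), in $G_\varphi$ the set $D_X$ contains one vertex from each choice clique $A_{h,a}$. There are no edges between distinct choice cliques, so $D_X$ is independent.

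For (ii) and (iii) on split graphs, $D_X\subseteq K$, and $K$ is a clique. Since $km\ge 1$, we only need $|D_X|\ge 2$ for total domination, or else handle the degenerate case. If $km=1$, then $D_X$ is a single vertex; it is connected but not total, since it has no neighbor inside $D_X$. I would avoid this by padding the formula so that $k\ge 2$, or $m\ge 2$ (duplicate a clause). Duplicating clauses preserves $\#\mathrm{SAT}$ and only changes $m$ by a constant, so the hypotheses of Lemma~\ref{lem:square-eth} still hold. With $|D_X|\ge2$, every vertex of $D_X$ has a neighbor in $D_X$. Every vertex outside $D_X$ already has a neighbor in $D_X$, because it is dominated and is not in the set. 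So $D_X$ is a total dominating set, and connectivity is immediate.

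For the bipartite instances, $z$ is adjacent to every assignment vertex. Therefore $G[D_X\cup\{z\}]$ is a star centered at $z$, which is connected. It is also total: each assignment vertex has neighbor $z$, and $z$ has an assignment neighbor in the set.

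The main obstacle is bookkeeping rather than mathematics: confirming that the reduction's normal form really controls all variant-solutions of size at most the target, which follows because the variants are subfamilies of dominating sets, together with the tiny-instance edge case for total domination. Because the bijection holds, counting parsimony is inherited as well. Finally, the rank-width and supplied-decomposition formulations follow by the caterpillar conversion, exactly as in Corollary~\ref{cor:unweighted-rw}.
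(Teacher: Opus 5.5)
Your proposal is correct and follows essentially the paper's proof. You restrict the normal-form bijections of Section~\ref{sec:unweighted-ds} to the subfamilies of independent, connected, and total dominating sets, verify the extra property on the canonical sets (\(D_X\) independent in \(G_\varphi\), a clique in \(G_\varphi^{\mathrm{sp}}\) with the same clause-duplication fix when \(km=1\), and \(D_X\cup\{z\}\) a star in \(G_\varphi^{\mathrm{bip}}\)), and then transfer the width bounds, the ETH argument, and the rank-width formulations exactly as the paper does.
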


\begin{theorem}[Counting the three variants]
\label{thm:counting-three-variants}
Assume \(\#\mathrm{ETH}\).  For each problem and graph class listed in Theorem~\ref{thm:three-variants}, no algorithm taking a graph \(G\), a target \(\ell\), and a vertex order \(\pi\) can compute either \(Z^{\mathcal P}_{\leq}(G,\ell)\) or \(Z^{\mathcal P}_{=}(G,\ell)\), for the corresponding \(\mathcal P\in\{\mathrm I,\mathrm C,\mathrm T\}\), in time
\[
  2^{o(\width_G(\pi)^2)}|V(G)|^{O(1)}.
\]
The conclusion also holds when the parameter is linear rank-width, rank-width, or the width of a supplied rank-decomposition.
\end{theorem}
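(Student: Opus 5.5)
Theorem~\ref{thm:counting-three-variants} follows from the parsimonious identities already established, combined with the variant-specific observations needed for Theorem~\ref{thm:three-variants}. For each variant \(\mathcal P\) and graph class, we use the same graph, target and order as in Section~\ref{sec:unweighted-ds}: \(G_\varphi\) with \(d=km\) for independent domination; \(G_\varphi^{\mathrm{sp}}\) with \(d\), and \(G_\varphi^{\mathrm{bip}}\) with \(d_{\mathrm{bip}}\), for connected and total domination. The key point is that every \(\mathcal P\)-dominating set is in particular a dominating set. Hence, by Lemma~\ref{lem:basic-bijection}, Proposition~\ref{prop:split-completion} and Proposition~\ref{prop:bip-refinement}, every \(\mathcal P\)-solution of size at most the target is a canonical set \(D_X\) (or \(D_X\cup\{z\}\)) for a satisfying \(X\), and its size equals the target. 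It therefore suffices to show that each canonical set has property \(\mathcal P\). Then \(Z^{\mathcal P}_{\le}=Z^{\mathcal P}_{=}=\#\operatorname{SAT}(\varphi)\).

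The checks are local. In \(G_\varphi\), the set \(D_X\) contains one vertex from each choice clique, and there are no edges between distinct cliques, so \(D_X\) is independent. In \(G_\varphi^{\mathrm{sp}}\), the set \(D_X\) lies inside the clique \(K\), so it induces a complete graph. It is therefore connected, and each of its vertices has a neighbor in \(D_X\) provided \(km\ge2\). Vertices outside \(D_X\) are dominated because \(D_X\) dominates. So \(D_X\) is both connected and total.

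In \(G_\varphi^{\mathrm{bip}}\), the set \(D_X\cup\{z\}\) induces a star centred at \(z\), since \(z\) is adjacent to every assignment vertex. This star is connected, and every vertex in it has a neighbor in it. The remaining vertices are dominated because \(D_X\cup\{z\}\) dominates.

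The only care needed is the degenerate case \(km=1\) in the split construction, where the single selected vertex has no neighbor in the solution, so total domination fails. We avoid this by padding to \(k\ge2\), as Lemma~\ref{lem:square-eth} permits, or by noting that such small instances are solvable directly. This is the only step I expect to need attention. The width bounds \(4k+2\) and \(4k+3\) are inherited verbatim from Lemma~\ref{lem:basic-layout} and the two propositions, since neither the graphs nor the orders change.

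Finally, as in the proof of Theorem~\ref{thm:counting-ds}, a \(2^{o(w^2)}|V(G)|^{O(1)}\)-time counting algorithm would count the satisfying assignments of \(\varphi\) in \(2^{o(k^2)}m^{O(1)}\) time. Here we use that \(w=O(k)\), \(|V(G)|=m2^{O(k)}\), and the construction time is \(2^{O(k)}(k+m)^{O(1)}\). This contradicts \(\#\mathrm{ETH}\) via Lemma~\ref{lem:square-eth}. The linear rank-width, rank-width and supplied rank-decomposition formulations follow through \eqref{eq:width-chain} and the caterpillar conversion, exactly as in Corollary~\ref{cor:unweighted-rw}.
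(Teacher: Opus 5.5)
Your proposal is correct and follows essentially the same route as the paper: every \(\mathcal P\)-solution up to the target is an ordinary dominating set, hence canonical by Lemma~\ref{lem:basic-bijection} and Propositions~\ref{prop:split-completion} and~\ref{prop:bip-refinement}; the canonical sets have the required property (the paper's Lemma~\ref{lem:canonical-geometry}); and the size and width calculation then gives the contradiction with \(\#\mathrm{ETH}\). The only difference is cosmetic: for the degenerate case \(km=1\) the paper duplicates the unique clause, whereas you pad to \(k\ge2\) or solve the instance directly, and all of these fixes work.
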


We use the following standard definitions.  A dominating set \(D\) is \emph{independent} if \(G[D]\) has no edge and \emph{connected} if \(G[D]\) is connected.  A set \(D\) is a \emph{total dominating set} if every vertex, including every vertex of \(D\), has a neighbor in \(D\):
\[
  N_G(v)\cap D\neq\varnothing
  \qquad\text{for every }v\in V(G).
\]
In particular, every total dominating set is an ordinary dominating set.

For \(\mathcal P\in\{\mathrm I,\mathrm C,\mathrm T\}\), let \(\mathcal D_{\mathcal P}(G)\) denote, respectively, the families of independent, connected, and total dominating sets of \(G\), and define
\begin{align*}
  Z^{\mathcal P}_{\leq}(G,\ell)
  &:=
  \bigl|\{D\in\mathcal D_{\mathcal P}(G):|D|\leq\ell\}\bigr|,\\
  Z^{\mathcal P}_{=}(G,\ell)
  &:=
  \bigl|\{D\in\mathcal D_{\mathcal P}(G):|D|=\ell\}\bigr|.
\end{align*}

For total domination we may assume
\begin{equation}
  d=km\geq2.
  \label{eq:target-at-least-two}
\end{equation}
Indeed, if \(km=1\), duplicate the unique clause.  This preserves the set and number of satisfying assignments and affects neither the asymptotic construction time nor any \(O(k)\) width bound.

\subsection{The canonical sets have the required properties}

\begin{lemma}[Structure of the canonical solutions]
\label{lem:canonical-geometry}
For every matrix \(X\in\F^{k\times k}\), the following hold.
\begin{enumerate}[label=\textup{(\roman*)}]
  \item The set \(D_X\) is independent in \(G_\varphi\).
  \item The graph \(G_\varphi^{\mathrm{sp}}[D_X]\) is the clique \(K_d\).  Consequently, \(D_X\) is connected and, under \eqref{eq:target-at-least-two}, is a total dominating set whenever it dominates \(G_\varphi^{\mathrm{sp}}\).
  \item The graph \(G_\varphi^{\mathrm{bip}}[D_X\cup\{z\}]\) is the star with center \(z\) and leaf set \(D_X\).  Consequently, \(D_X\cup\{z\}\) is connected and is a total dominating set whenever it dominates \(G_\varphi^{\mathrm{bip}}\).
\end{enumerate}
\end{lemma}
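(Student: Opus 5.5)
The plan is to read off the induced subgraph on the canonical set directly from the edge lists of the three constructions, and then derive connectivity and total domination from the shape of that subgraph. By \eqref{eq:canonical-ds}, \(D_X\) consists of exactly one assignment vertex \(a_{h,a,x_a}\) from each choice group \(A_{h,a}\), so \(|D_X|=km=d\). Every claim concerns only edges among assignment vertices, plus edges at \(z\) in (iii). Guards, clause vertices and checkers play no role, since none of them lies in \(D_X\).

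For (i), recall that in \(G_\varphi\) the only edges between assignment vertices are clique edges inside a single group \(A_{h,a}\); there are no edges between distinct choice cliques, and checker and clause edges always have an endpoint outside the assignment vertices. Since \(D_X\) meets each \(A_{h,a}\) in exactly one vertex, it spans no edge. For (ii), \(G_\varphi^{\mathrm{sp}}\) makes \(K\) a clique and \(D_X\subseteq K\), so \(G_\varphi^{\mathrm{sp}}[D_X]\cong K_d\). A complete graph is connected. Under \eqref{eq:target-at-least-two} we have \(d\ge2\), so each vertex of \(D_X\) has a neighbour in \(D_X\). If \(D_X\) dominates, then every vertex outside \(D_X\) also has a neighbour in \(D_X\), and hence \(D_X\) is total.

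For (iii), \(G_\varphi^{\mathrm{bip}}\) deletes all edges inside each \(A_{h,a}\), so the assignment vertices become independent and \(D_X\) is again independent. The hub \(z\) is adjacent to every assignment vertex, so \(G_\varphi^{\mathrm{bip}}[D_X\cup\{z\}]\) is the star with centre \(z\) and leaves \(D_X\). A star is connected, and because \(d\ge1\) every vertex of the star has a neighbour inside it. Together with domination, this gives total domination. Here the bound \(d\ge 1\) already suffices, since the centre \(z\) always has a leaf.

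There is no serious obstacle. The one point needing care is (ii): a single-vertex clique would fail total domination, which is exactly why the standing assumption \eqref{eq:target-at-least-two} is used there. I would also note that domination of \(D_X\) is an assumption of the lemma in (ii) and (iii), not something the lemma asserts.
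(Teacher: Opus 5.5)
Your proposal is correct and follows essentially the same argument as the paper: in each construction you identify the induced subgraph on the canonical set (edgeless, the clique \(K_d\), or a star centred at \(z\)), and you obtain connectivity and total domination from that shape together with the domination hypothesis, using \(d\ge2\) only in the split case. Your domination step assumes that \(D_X\) dominates, rather than that \(X\) satisfies \(\varphi\) as the paper's proof says, and this matches the lemma statement slightly more precisely.
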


\begin{proof}
The set \(D_X\) contains one assignment vertex from each choice group. In \(G_\varphi\), assignment vertices from different choice groups are nonadjacent, and no two vertices of \(D_X\) belong to the same choice clique.  This proves (i).

In the split completion all assignment vertices form one clique, proving the first assertion of (ii).  A clique is connected, and if it has at least two vertices then each of its vertices has a neighbor in the clique. Every vertex outside \(D_X\) is already dominated when \(X\) satisfies the formula, so in that case the total-domination condition holds for all vertices.

In the bipartite construction the assignment vertices are pairwise nonadjacent, while the hub \(z\) is adjacent to every assignment vertex. Thus the indicated induced subgraph is a star.  It is connected; its center has a selected leaf as a neighbor, every selected leaf has the center as a neighbor, and ordinary domination guarantees that every vertex outside the selected star has a selected neighbor.  This proves (iii).
\end{proof}

\subsection{Exact bijections}

\begin{proposition}[Independent domination]
\label{prop:independent-bijection}
The map \(X\mapsto D_X\) is a bijection from the satisfying assignments of \(\varphi\) to the independent dominating sets of \(G_\varphi\) of size at most \(d\).  Every such set has size exactly \(d\), and hence
\begin{equation}
  Z^{\mathrm I}_{\leq}(G_\varphi,d)
  =Z^{\mathrm I}_{=}(G_\varphi,d)
  =\#\operatorname{SAT}(\varphi).
  \label{eq:parsimonious-independent}
\end{equation}
\end{proposition}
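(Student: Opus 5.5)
The plan is to sandwich the independent dominating sets of size at most \(d\) between the canonical sets \(D_X\) and the ordinary dominating sets of size at most \(d\), and then apply Lemma~\ref{lem:basic-bijection} in both directions.

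First, the forward direction. Let \(X\) be a satisfying assignment. Lemma~\ref{lem:basic-bijection} shows that \(D_X\) dominates \(G_\varphi\) and has size exactly \(d\). Lemma~\ref{lem:canonical-geometry}(i) shows that \(D_X\) is independent in \(G_\varphi\). Hence \(D_X\) is an independent dominating set of size \(d\), so the map \(X\mapsto D_X\) lands in the claimed family. Injectivity is inherited from Lemma~\ref{lem:basic-bijection}: the matrix \(X\) can be read off from \(D_X\) row by row, since \(D_X\) has exactly one vertex in each \(A_{1,a}\).

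Second, surjectivity. Let \(D\) be an independent dominating set of \(G_\varphi\) with \(|D|\leq d\). In particular \(D\) is an ordinary dominating set of size at most \(d\). Lemma~\ref{lem:basic-bijection} then gives \(D=D_X\) for a unique satisfying assignment \(X\), with \(|D|=d\). The independence condition therefore imposes no extra restriction within the budget. The two families coincide: every ordinary dominating set of size at most \(d\) is independent, and conversely.

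Finally, the counting identity \eqref{eq:parsimonious-independent} follows immediately. All sets in the family have size exactly \(d\), so \(Z^{\mathrm I}_{\leq}\) and \(Z^{\mathrm I}_{=}\) count the same family, and this family is in bijection with the satisfying assignments. I do not expect a real obstacle. The only point to check is that independence of \(D_X\) uses the absence of edges between distinct choice cliques in \(G_\varphi\). This fails in the split and bipartite variants, which is exactly why the statement is restricted to \(G_\varphi\).
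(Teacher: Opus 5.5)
Your proposal is correct and matches the paper's proof: both directions go through Lemma~\ref{lem:basic-bijection}, and Lemma~\ref{lem:canonical-geometry}(i) supplies the independence of \(D_X\). One small correction to your closing aside: in \(G_\varphi^{\mathrm{bip}}\) the set \(D_X\) is still independent, since no two assignment vertices are adjacent there; independence fails because the forced hub \(z\) lies in every target-size solution and is adjacent to every vertex of \(D_X\).
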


\begin{proof}
An independent dominating set is, in particular, a dominating set. Lemma~\ref{lem:basic-bijection} therefore says that every independent dominating set of size at most \(d\) is \(D_X\) for a unique satisfying assignment \(X\).  Conversely, Lemma~\ref{lem:basic-bijection} says that \(D_X\) dominates whenever \(X\) satisfies \(\varphi\), and Lemma~\ref{lem:canonical-geometry}(i) says that it is independent.  The two directions are inverse to one another, proving the proposition.
\end{proof}

\begin{proposition}[Connected and total domination on split graphs]
\label{prop:split-connected-total}
For each \(\mathcal P\in\{\mathrm C,\mathrm T\}\), the map \(X\mapsto D_X\) is a bijection from the satisfying assignments of \(\varphi\) to the members of \(\mathcal D_{\mathcal P}(G_\varphi^{\mathrm{sp}})\) of size at most \(d\). All these sets have size exactly \(d\), and
\begin{equation}
  Z^{\mathcal P}_{\leq}(G_\varphi^{\mathrm{sp}},d)
  =Z^{\mathcal P}_{=}(G_\varphi^{\mathrm{sp}},d)
  =\#\operatorname{SAT}(\varphi).
  \label{eq:parsimonious-split-variants}
\end{equation}
\end{proposition}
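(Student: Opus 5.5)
The argument mirrors the proof of Proposition~\ref{prop:independent-bijection}, with Proposition~\ref{prop:split-completion} in place of Lemma~\ref{lem:basic-bijection} and Lemma~\ref{lem:canonical-geometry}(ii) in place of part (i). Fix \(\mathcal P\in\{\mathrm C,\mathrm T\}\) and assume \eqref{eq:target-at-least-two}. Two points carry the argument. First, every member of \(\mathcal D_{\mathcal P}(G_\varphi^{\mathrm{sp}})\) is an ordinary dominating set. This is immediate for connected dominating sets by definition. For total dominating sets it holds because \(N(v)\subseteq N[v]\).

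\emph{Surjectivity onto the solutions, and uniqueness.} Let \(D\in\mathcal D_{\mathcal P}(G_\varphi^{\mathrm{sp}})\) with \(|D|\le d\). Since \(D\) dominates \(G_\varphi^{\mathrm{sp}}\), Proposition~\ref{prop:split-completion} applies. It shows that \(D=D_X\) for a unique satisfying assignment \(X\), and that \(|D|=d\). Uniqueness of \(X\) comes from the normal form: the selected vertex in each \(A_{h,a}\) determines row \(a\) of \(X_h\), and all layers coincide.

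\emph{Well-definedness of the map.} Let \(X\) satisfy \(\varphi\). By Proposition~\ref{prop:split-completion}, \(D_X\) dominates \(G_\varphi^{\mathrm{sp}}\) and has size \(d\). By Lemma~\ref{lem:canonical-geometry}(ii), \(G_\varphi^{\mathrm{sp}}[D_X]\cong K_d\), so \(D_X\) is connected. Since \(d\ge2\), \(D_X\) is also a total dominating set. Hence \(D_X\in\mathcal D_{\mathcal P}(G_\varphi^{\mathrm{sp}})\).

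The two maps are mutually inverse, and every solution of size at most \(d\) has size exactly \(d\). Equation \eqref{eq:parsimonious-split-variants} follows. The only delicate point is the degenerate case \(d=1\) for total domination, where a single selected vertex has no selected neighbor. That case is excluded by the clause-duplication convention \eqref{eq:target-at-least-two}, which preserves \(\#\operatorname{SAT}(\varphi)\). Nothing else needs new work.
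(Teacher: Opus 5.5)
Your proposal is correct and follows the paper's own proof. Connected and total dominating sets are in particular dominating sets, so Proposition~\ref{prop:split-completion} identifies every solution of size at most \(d\) as \(D_X\) for a unique satisfying \(X\). Conversely, Lemma~\ref{lem:canonical-geometry}(ii), together with the convention \(d\geq 2\) from \eqref{eq:target-at-least-two}, shows that each such \(D_X\) is connected and totally dominating.
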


\begin{proof}
Every connected or total dominating set is an ordinary dominating set. Proposition~\ref{prop:split-completion} therefore confines every solution of size at most \(d\) to the form \(D_X\) for a unique satisfying assignment \(X\).  Conversely, every satisfying assignment produces a dominating set \(D_X\), and Lemma~\ref{lem:canonical-geometry}(ii) supplies both additional properties.  This proves the claimed bijections and counts.
\end{proof}

\begin{proposition}[Connected and total domination on bipartite graphs]
\label{prop:bip-connected-total}
For each \(\mathcal P\in\{\mathrm C,\mathrm T\}\), the map \(X\mapsto D_X\cup\{z\}\) is a bijection from the satisfying assignments of \(\varphi\) to the members of \(\mathcal D_{\mathcal P}(G_\varphi^{\mathrm{bip}})\) of size at most \(d_{\mathrm{bip}}\).  All these sets have size exactly \(d_{\mathrm{bip}}\), and
\begin{equation}
  Z^{\mathcal P}_{\leq}
  (G_\varphi^{\mathrm{bip}},d_{\mathrm{bip}})
  =Z^{\mathcal P}_{=}
  (G_\varphi^{\mathrm{bip}},d_{\mathrm{bip}})
  =\#\operatorname{SAT}(\varphi).
  \label{eq:parsimonious-bip-variants}
\end{equation}
\end{proposition}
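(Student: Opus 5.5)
The proof will mirror that of Proposition~\ref{prop:split-connected-total}, with Proposition~\ref{prop:bip-refinement} and Lemma~\ref{lem:canonical-geometry}(iii) taking the places of Proposition~\ref{prop:split-completion} and Lemma~\ref{lem:canonical-geometry}(ii).

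We begin by confining candidate solutions. Fix \(\mathcal P\in\{\mathrm C,\mathrm T\}\) and let \(D\in\mathcal D_{\mathcal P}(G_\varphi^{\mathrm{bip}})\) with \(|D|\leq d_{\mathrm{bip}}\). For \(\mathcal P=\mathrm C\) this holds because a connected dominating set is by definition dominating. For \(\mathcal P=\mathrm T\) it holds because every total dominating set is dominating, as noted after the definitions. In either case \(D\) is an ordinary dominating set of size at most \(km+1\). Proposition~\ref{prop:bip-refinement}(ii), which rests on Lemma~\ref{lem:bip-normal-form}, then gives \(|D|=d_{\mathrm{bip}}\) and \(D=D_X\cup\{z\}\) for a unique satisfying assignment \(X\). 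This shows that the map \(X\mapsto D_X\cup\{z\}\) is injective and that its image contains every member of \(\mathcal D_{\mathcal P}(G_\varphi^{\mathrm{bip}})\) of size at most \(d_{\mathrm{bip}}\).

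Next we show that every satisfying assignment lands in the family. Let \(X\) satisfy \(\varphi\). By Proposition~\ref{prop:bip-refinement}(ii), \(D_X\cup\{z\}\) dominates \(G_\varphi^{\mathrm{bip}}\) and has size \(d_{\mathrm{bip}}\). By Lemma~\ref{lem:canonical-geometry}(iii) it induces a star centered at \(z\), so it is connected and is a total dominating set. Hence it belongs to \(\mathcal D_{\mathcal P}(G_\varphi^{\mathrm{bip}})\).

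Combining the two directions, the map is a bijection between satisfying assignments and the stated family, and every member of the family has size exactly \(d_{\mathrm{bip}}\). Equation~\eqref{eq:parsimonious-bip-variants} then follows by counting.

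No step poses a real obstacle. The only point to check is that total domination needs no analogue of \eqref{eq:target-at-least-two}, and this holds because the star always has at least two vertices: \(z\) together with \(km\geq1\) leaves. Lemma~\ref{lem:canonical-geometry}(iii) already uses exactly this fact.
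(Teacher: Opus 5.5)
Your proposal is correct and follows the paper's proof: it confines all connected or total dominating sets up to the target via Proposition~\ref{prop:bip-refinement}(ii), then uses Lemma~\ref{lem:canonical-geometry}(iii) to verify the extra property of \(D_X\cup\{z\}\). Your remark also holds: the star always has at least two vertices, so no analogue of \eqref{eq:target-at-least-two} is needed for total domination.
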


\begin{proof}
Use Proposition~\ref{prop:bip-refinement}(ii) to identify all ordinary dominating sets up to the target, and then use Lemma~\ref{lem:canonical-geometry}(iii) to verify the additional connectedness or total-domination requirement.  Both steps are bidirectional, so no solution is lost or added.
\end{proof}

\subsection{Lower bounds and the precise graph classes}

\begin{proof}[Proof of Theorem~\ref{thm:three-variants}]
For (i), use \(G_\varphi\), the target \(d\), and the order \(\pi\) from \eqref{eq:basic-order}.  Proposition~\ref{prop:independent-bijection} gives correctness, the graph is monopolar by Subsection~\ref{subsec:basic-construction}, and Lemma~\ref{lem:basic-layout} gives the width bound.

For either problem in (ii)--(iii), use \(G_\varphi^{\mathrm{sp}}\) and target \(d\), or use \(G_\varphi^{\mathrm{bip}}\) and target \(d_{\mathrm{bip}}\). Propositions~\ref{prop:split-connected-total} and \ref{prop:bip-connected-total} give correctness.  The graph-class and width statements are exactly Propositions~\ref{prop:split-completion} and \ref{prop:bip-refinement}.

In every case the source has \(k^2\) variables, the output has \(m2^{O(k)}\) vertices, the construction time is \(2^{O(k)}(k+m)^{O(1)}\), and the width is \(O(k)\).  An algorithm with the displayed running time would decide \(\varphi\) in \(2^{o(k^2)}m^{O(1)}\) time, contradicting ETH.  Finally, \(\rw(G)\leq\lrw(G)\leq\width_G(\pi)\) proves the two parameterized formulations, and converting \(\pi\) into a caterpillar rank-decomposition of width \(O(k)\) proves the supplied-decomposition formulation.
\end{proof}

\begin{proof}[Proof of Theorem~\ref{thm:counting-three-variants}]
The bijections underlying \eqref{eq:parsimonious-independent}, \eqref{eq:parsimonious-split-variants}, and \eqref{eq:parsimonious-bip-variants} are parsimonious: each satisfying assignment produces exactly one target-size solution and every solution up to the target arises in this way.  Combining any of these bijections with the size and width calculation from Theorem~\ref{thm:three-variants} would otherwise count the satisfying assignments of \(\varphi\) in \(2^{o(k^2)}m^{O(1)}\) time, contradicting \(\#\mathrm{ETH}\).
\end{proof}

\begin{remark}[Why the restrictions differ]
The split completion cannot be reused for independent domination, because all \(d\) canonical assignment vertices become pairwise adjacent.  In the bipartite construction the forced hub is adjacent to every canonical assignment vertex, so its canonical solution is not independent either. Conversely, precisely these added edges make the canonical solutions connected and total.  Thus the original monopolar construction is tailored to independence, whereas the split and bipartite refinements are tailored to connectivity and total domination.  These are precisely the graph-class conclusions stated in Theorem~\ref{thm:three-variants}.
\end{remark}

\section{The \texorpdfstring{\((\sigma,\rho)\)}{(sigma,rho)} framework}
\label{sec:sigma-rho}

We now extend the quadratic lower bound to two families of \((\sigma,\rho)\)-sets.  A set of nonnegative integers is \emph{cofinite} if its complement in \(\mathbb N_{\ge 0}\) is finite.  In both families, \(\rho\) is cofinite and \(0\notin\rho\).  The latter condition is necessary for a nontrivial minimization problem in the present setting: if \(0\in\rho\), the empty set is always feasible.

There are two ways to ensure that selected vertices satisfy their \(\sigma\)-constraints.  If \(0\in\sigma\) and \(1\in\rho\), we force an independent reservoir and keep the entire canonical solution independent; this places no finiteness assumption on \(\sigma\).  If \(\sigma\) is cofinite, we instead force a large selected clique whose vertices lie in the allowed tail of \(\sigma\).  The second construction produces split graphs.

\begin{theorem}[Cofinite-\(\rho\) \((\sigma,\rho)\)-sets]
\label{thm:sigma-rho-cofinite}
Fix nonempty decidable sets \(\sigma,\rho\subseteq\mathbb N_{\ge 0}\) such that \(\rho\) is cofinite and \(0\notin\rho\).  Suppose that at least one of the following conditions holds:
\begin{enumerate}[label=\textup{(\alph*)}]
  \item \(0\in\sigma\) and \(1\in\rho\);
  \item \(\sigma\) is cofinite.
\end{enumerate}
Unless ETH fails, deciding whether \(G\) has a \((\sigma,\rho)\)-set of size at most \(\ell\) has no
\[
  2^{o(\width_G(\pi)^2)}|V(G)|^{O(1)}
\]
algorithm, even when the order \(\pi\) is supplied.  Under condition \textup{(a)}, the lower bound holds on monopolar graphs.  Under condition \textup{(b)}, it holds on split graphs.  The conclusion also holds when the parameter is linear rank-width, rank-width, or the width of a supplied rank-decomposition.

Assuming \(\#\mathrm{ETH}\), the same lower bounds hold for computing either \(Z^{\sigma,\rho}_{\leq}(G,\ell)\) or \(Z^{\sigma,\rho}_{=}(G,\ell)\).
\end{theorem}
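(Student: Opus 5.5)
The plan is to reuse $G_\varphi$ (for (a)) and $G_\varphi^{\mathrm{sp}}$ (for (b)) and change only the local counts, so that Lemma~\ref{lem:basic-equality} and the width analysis of Lemma~\ref{lem:basic-layout} carry over. Since $\rho$ is cofinite and $0\notin\rho$, the finite set $\mathbb N_{\ge0}\setminus\rho$ is nonempty. Let $c:=\max(\mathbb N_{\ge0}\setminus\rho)$. Then $c\notin\rho$ but $c+j\in\rho$ for all $j\ge1$. The idea is to give every clause vertex and every equality checker exactly $c$ extra neighbours that are forced into every solution up to the target. A clause vertex or checker then has $c+j$ selected neighbours, where $j$ is its number of selected assignment neighbours. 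So it satisfies $\rho$ if and only if $j\ge1$, which is exactly the domination condition used in Lemmas~\ref{lem:basic-equality} and~\ref{lem:basic-bijection}. Because $j\le 2k$ is otherwise unconstrained, we need all of $\{c+1,c+2,\dots\}\subseteq\rho$, and this is what the choice of $c$ provides. The $c$ forced vertices $f_1,\dots,f_c$ are common to all layers. Each of them is adjacent to every clause vertex and every checker, so at every prefix cut they add at most $c$ further rows to the crossing matrix. Placing them at the start of $\pi$ keeps the width at $4k+O(1)$. The same applies to a caterpillar rank-decomposition, and to $\lrw$ and $\rw$ via \eqref{eq:width-chain}.

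Case (a), with $0\in\sigma$ and $1\in\rho$. Give each $f_i$ two private pendant leaves, and keep $\{f_i\}$ independent and nonadjacent to assignment vertices. The target is $\ell:=km+c$. The normal form is proved as in Lemma~\ref{lem:basic-normal-form}. An unselected vertex with no selected neighbour violates $\rho$ because $0\notin\rho$. Hence each disjoint set $Q_{h,a}$ and each $\{f_i\}\cup\{\text{its two leaves}\}$ contains a selected vertex. If the selected vertex is not the assignment vertex or $f_i$, then both guards or both leaves must be selected. It follows that $|S|\ge km+c$, and at equality $S=D_X\cup\{f_1,\dots,f_c\}$. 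Conversely, we check every count for such a set when $X$ is satisfying. Selected vertices form an independent set, which is fine since $0\in\sigma$. Guards, unselected assignment vertices and leaves each have exactly one selected neighbour, and $1\in\rho$. Clause vertices and checkers have $c+j$ selected neighbours with $j\ge1$, which lies in $\rho$. The new vertices keep the graph monopolar: the $f_i$ and the leaves join the independent side.

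Case (b), with $\sigma\supseteq\mathbb N_{\ge s}$. Here we work in the split completion and put the forced vertices into the clique. Take a forced clique $F$ of constant size $N\ge\max(s,c+1)$, joined completely to $K$. Then every selected vertex has at least $N-1+km\ge s$ selected neighbours, and every unselected assignment vertex has at least $N\ge c+1$. The guards and the pendant gadgets for $F$ must now avoid the problematic counts in $\{0,\dots,c\}\setminus\rho$. For this I would give each guard and each private leaf exactly $c$ fixed neighbours in $F$, in addition to its usual neighbours. An unselected guard then has count $c+1\in\rho$ exactly when its group contributes an assignment vertex, and count $c\notin\rho$ otherwise. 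The same holds for a leaf according to whether its own $F$-vertex is selected.

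I expect the main obstacle to be the normal-form step in case (b). The forcing argument there is circular: it presumes that all of $F$ is selected. If some $F$-vertices are omitted, guard counts can drop below $c$, and those smaller values may lie in $\rho$. My first attempt would be a charging argument showing that each omitted $F$-vertex forces at least two of its private leaves into $S$. This uses that such a leaf has at most $c$ selected neighbours, and if the leaf pattern is chosen so that its count is then not in $\rho$, the leaf must be selected. The budget $km+N$ would then be exceeded. If this fails, I would instead attach each $f\in F$ to $c+1$ private leaves whose only other neighbours are the remaining $F$-vertices, so that a single omission already costs more than one vertex. Once $F\subseteq S$ is established, the guard argument is the earlier one verbatim. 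The remaining step is to confirm parsimony at the target size, which gives $Z^{\sigma,\rho}_{\le}=Z^{\sigma,\rho}_{=}=\#\operatorname{SAT}(\varphi)$ and hence the $\#\mathrm{ETH}$ statement, as in Theorem~\ref{thm:counting-ds}.
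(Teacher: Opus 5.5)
\textbf{Case (a).} This is essentially the paper's construction: an independent reservoir in which each vertex has two pendant leaves, and \(q_\rho\) of the reservoir vertices are joined to every clause vertex and checker. Your normal-form and feasibility checks there are sound. There is one slip. The \(f_i\) cannot join the independent side, because they are adjacent to their leaves and to all clause vertices and checkers. Put them on the cluster side as singleton cliques instead.

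\textbf{Case (b) has a genuine gap.} The circularity you flag cannot be repaired by charging. Your guards and leaves receive \(c=\max(\mathbb N_{\ge0}\setminus\rho)\) neighbours in \(F\). So whether a guard with no selected assignment neighbour is satisfied depends on \(S\cap F\). Omitting \(O(1)\) vertices of \(F\) can therefore release all \(km\) choice blocks at once, and no constant charge per omitted vertex pays for that. The budget never forces \(F\subseteq S\).

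Concretely, take \(\sigma=\mathbb N_{\ge0}\), \(\rho=\{1\}\cup\mathbb N_{\ge3}\) (so \(c=2\)) and \(N=3\), and attach all guards, clause vertices and checkers to \(\{f_1,f_2\}\). Each leaf is then adjacent to all of \(F\). Now \(S=\{f_1\}\) is a \((\sigma,\rho)\)-set of size \(1\) for every \(\varphi\): every unselected vertex (other \(F\)-vertices, leaves, assignment vertices, guards, clause vertices, checkers) has exactly one selected neighbour. Your fallback, with \(c+1\) leaves adjacent to all of \(F\), fails on the same example.

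\textbf{The missing idea} is to use \(r:=\min\rho\) instead of \(c\) in every test that must work before the reservoir is known to be selected. The paper does this with two gadgets:
\begin{itemize}
  \item each vertex \(u\) of the reservoir clique \(U\) gets two nonadjacent forcing vertices whose neighbourhood is exactly an \(r\)-set \(R_u\subseteq U\) containing \(u\);
  \item every guard is joined to a fixed set \(P\subseteq U\) of size \(r-1\).
\end{itemize}
If \(u\notin S\), an unselected forcing vertex of \(u\) has at most \(r-1\) selected neighbours. If a choice block has no selected assignment vertex, an unselected guard also has at most \(r-1\). These counts lie outside \(\rho\) whatever else is selected.

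Hence \(|S\cap(U\cup F)|\ge b+|U\setminus S|\), and every choice block contributes at least one vertex, both unconditionally. The budget \(b+km\) then forces \(U\subseteq S\) and the rest of the normal form. Only after that is the \(q_\rho\)-padding of clause vertices and checkers used, and there its dependence on \(U\subseteq S\) is harmless. In the canonical solution, forcing vertices and guards have exactly \(r\in\rho\) selected neighbours.

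When \(\rho=\mathbb N_{\ge c+1}\) one has \(r=c+1\), and your gadgets coincide with these. That is why the defect appears only when \(\rho\) contains some value below \(c\).
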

Both branches exploit the cofinite tail of \(\rho\) in the same way. Each clause vertex and equality checker is given
\(q_\rho:=\max(\mathbb N_{\geq0}\setminus\rho)\) neighbors that are forced to be selected. Consequently, its \(\rho\)-constraint fails when none of its assignment neighbors is selected, since it then has exactly \(q_\rho\notin\rho\) selected neighbors, and is satisfied as soon as at least one assignment neighbor is selected, since every count greater than \(q_\rho\) belongs to \(\rho\).
\subsection{The \texorpdfstring{\(0\in\sigma,\ 1\in\rho\)}
{0 in sigma, 1 in rho} branch}
\label{subsec:sigma-rho-zero-one}

Assume throughout this subsection that \(\rho\) is cofinite, \(0\notin\rho\), \(0\in\sigma\), and \(1\in\rho\).  Define
\begin{equation}
  q_\rho:=\max(\mathbb N_{\ge 0}\setminus\rho).
  \label{eq:zero-one-q}
\end{equation}
The maximum exists and is nonnegative because \(\rho\) is cofinite and \(0\notin\rho\).  Hence
\begin{equation}
  q_\rho\notin\rho
  \qquad\text{and}\qquad
  \{q_\rho+1,q_\rho+2,\ldots\}\subseteq\rho.
  \label{eq:zero-one-tail}
\end{equation}

\subsubsection{Independent reservoir and construction}

Choose a constant \(b_0\geq\max\{q_\rho,1\}\).  Create an independent set
\[
  U_0:=\{u_1,\ldots,u_{b_0}\}.
\]
For every \(u\in U_0\), add two vertices \(f_u^0,f_u^1\), each adjacent only to \(u\).  Choose a fixed set \(Q_0\subseteq U_0\) of size \(q_\rho\).

Start with the monopolar graph \(G_\varphi\) of Subsection~\ref{subsec:basic-construction}.  Add the vertices just described, and join every vertex of \(Q_0\) to every clause vertex and every equality checker.  There are no other new edges.  Denote the resulting graph by \(H_\varphi^{0,1}\) and set
\begin{equation}
  d_{0,1}:=b_0+km.
  \label{eq:zero-one-target}
\end{equation}

The graph \(H_\varphi^{0,1}\) is monopolar.  Indeed, the assignment vertices together with the vertices of \(U_0\) induce a cluster graph: the choice groups are its nontrivial clique components and the vertices of \(U_0\) are singleton components.  The guards, clause vertices, checkers, and all \(f_u^i\) form an independent set.

\begin{lemma}[Independent-reservoir normal form]
\label{lem:zero-one-normal-form}
If \(S\) is a \((\sigma,\rho)\)-set of \(H_\varphi^{0,1}\) with \(|S|\leq d_{0,1}\), then
\begin{enumerate}[label=\textup{(\roman*)}]
  \item \(U_0\subseteq S\), and no vertex \(f_u^i\) belongs to \(S\);
  \item exactly one vertex of every \(A_{h,a}\) belongs to \(S\);
  \item no guard, clause vertex, or equality checker belongs to \(S\); and
  \item \(|S|=d_{0,1}\).
\end{enumerate}
\end{lemma}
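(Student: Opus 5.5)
The plan is to reduce to the same disjoint-block counting argument as in Lemma~\ref{lem:basic-normal-form}, using only the hypothesis \(0\notin\rho\). The constraints on \(\sigma\), the tail property \eqref{eq:zero-one-tail}, and the edges from \(Q_0\) to clause vertices and checkers will not be needed. The starting observation is this: since \(0\notin\rho\), every vertex \(v\notin S\) satisfies \(|N(v)\cap S|\in\rho\) and hence \(|N(v)\cap S|\geq1\). Thus \(S\) is a dominating set of \(H_\varphi^{0,1}\), and in particular every unselected vertex needs a selected neighbor.

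Next I introduce \(b_0+km\) pairwise disjoint blocks. These are the sets \(Q_{h,a}=A_{h,a}\cup\{g^0_{h,a},g^1_{h,a}\}\) for \((h,a)\in[m]\times[k]\), together with the sets \(P_u:=\{u,f_u^0,f_u^1\}\) for \(u\in U_0\). For \(Q_{h,a}\), I check that the neighborhood of each guard in \(H_\varphi^{0,1}\) is still exactly \(A_{h,a}\), because no new edges touch the guards. The argument of Lemma~\ref{lem:basic-normal-form} then applies verbatim. If \(S\cap A_{h,a}=\varnothing\), then both nonadjacent guards must be selected, so \(|S\cap Q_{h,a}|\geq2\); otherwise \(|S\cap Q_{h,a}|\geq1\). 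For \(P_u\), the leaves \(f_u^0,f_u^1\) are nonadjacent and have \(u\) as their only neighbor. Hence if \(u\notin S\), both leaves must lie in \(S\), giving \(|S\cap P_u|\geq2\); otherwise \(|S\cap P_u|\geq1\).

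Summing over the disjoint blocks gives \(|S|\geq b_0+km=d_{0,1}\). Under \(|S|\leq d_{0,1}\), equality holds throughout. Each block therefore contains exactly one vertex of \(S\), and no vertex outside the union of the blocks is selected. That excluded set consists precisely of the clause vertices and checkers, which gives part of (iii) and all of (iv). In a block \(P_u\), the unique selected vertex must be \(u\), since omitting \(u\) costs two; this yields (i). In a block \(Q_{h,a}\), the unique selected vertex must be an assignment vertex, since omitting \(A_{h,a}\) costs two. This yields (ii) and shows that no guard is selected, completing (iii).

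I do not expect a real obstacle. The only point needing care is to verify that the added construction leaves the guards' and the leaves' neighborhoods untouched, and that the blocks are genuinely disjoint. Both facts are immediate from the construction, since the new edges join only \(Q_0\) to clause vertices and checkers, and \(f_u^i\) to \(u\).
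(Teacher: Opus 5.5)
Your proposal is correct and follows essentially the same route as the paper. Both arguments use \(0\notin\rho\) to force both leaves (respectively, both guards) into \(S\) whenever \(u\) (respectively, all of \(A_{h,a}\)) is unselected, sum over the \(b_0+km\) disjoint forcing triples and choice blocks, and read off (i)--(iv) from equality; you are also right that \(\sigma\), the tail property, and the \(Q_0\)-edges play no role here.
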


\begin{proof}
Fix \(u\in U_0\).  If \(u\in S\), the triple \(\{u,f_u^0,f_u^1\}\) contributes at least one selected vertex.  If \(u\notin S\), then an unselected \(f_u^i\) would have no selected neighbor.  Since \(0\notin\rho\), this is impossible, and therefore both \(f_u^0\) and \(f_u^1\) must be selected.  Thus every one of the \(b_0\) disjoint forcing triples contributes at least one vertex, and it contributes exactly one only by selecting \(u\).

Now fix a choice group \((h,a)\).  If \(S\cap A_{h,a}=\varnothing\), then an unselected guard \(g^i_{h,a}\) has no selected neighbor.  Again \(0\notin\rho\), so both guards must be selected.  Otherwise \(A_{h,a}\cup\{g^0_{h,a},g^1_{h,a}\}\) contributes at least one selected vertex.  Hence every one of the \(km\) disjoint choice blocks contributes at least one vertex, and a block with no selected assignment vertex contributes at least two.

The forcing triples and choice blocks are mutually disjoint, so
\[
  |S|\geq b_0+km=d_{0,1}.
\]
The assumed upper bound forces equality.  Every forcing triple must therefore select its center \(u\) and no leaf.  Every choice block must select exactly one assignment vertex; selecting no assignment vertex would require both guards.  These vertices exhaust the budget, excluding all guards, clause vertices, and checkers.  This proves all four claims.
\end{proof}

For a matrix \(X\in\F^{k\times k}\), recall the canonical set \(D_X\) from \eqref{eq:canonical-ds} and put
\begin{equation}
  S_X^0:=U_0\cup D_X.
  \label{eq:zero-one-canonical}
\end{equation}

\begin{lemma}[Bijection for the \(0\in\sigma,\ 1\in\rho\) branch]
\label{lem:zero-one-bijection}
The map \(X\mapsto S_X^0\) is a bijection from the satisfying assignments of \(\varphi\) to the \((\sigma,\rho)\)-sets of \(H_\varphi^{0,1}\) of size at most \(d_{0,1}\).  Every such set has size exactly \(d_{0,1}\), and therefore
\begin{equation}
  Z^{\sigma,\rho}_{\leq}(H_\varphi^{0,1},d_{0,1})
  =
  Z^{\sigma,\rho}_{=}(H_\varphi^{0,1},d_{0,1})
  =\#\operatorname{SAT}(\varphi).
  \label{eq:zero-one-parsimonious}
\end{equation}
\end{lemma}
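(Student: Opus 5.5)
The proof follows the template of Lemma~\ref{lem:basic-bijection}. Lemma~\ref{lem:zero-one-normal-form} confines every candidate, and a direct count of selected neighbors verifies that each canonical set is feasible.

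\emph{Forward direction.} Let \(X\) satisfy \(\varphi\); we check every vertex of \(H_\varphi^{0,1}\) against \eqref{eq:sigma-rho-definition} for \(S_X^0\).

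The set \(S_X^0\) is independent. The vertices of \(U_0\) have no assignment neighbors, and \(D_X\) is independent by Lemma~\ref{lem:canonical-geometry}(i). Hence every selected vertex has \(0\in\sigma\) selected neighbors.

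For unselected vertices we count selected neighbors case by case:
\begin{itemize}
\item A leaf \(f_u^i\) has exactly one, namely \(u\), and \(1\in\rho\).
\item A guard \(g^i_{h,a}\) has exactly one, the chosen vertex of \(A_{h,a}\).
\item An unchosen assignment vertex of \(A_{h,a}\) has exactly one. Its clique neighbor is selected, its other neighbors are guards, clause vertices and checkers, none of which is selected, and it has no neighbor in \(U_0\).
\item A clause vertex \(c_h\) has \(q_\rho\) selected neighbors in \(Q_0\) plus at least one selected assignment neighbor, because \(X\) satisfies \(C_h\). Its count is therefore at least \(q_\rho+1\), which lies in \(\rho\) by \eqref{eq:zero-one-tail}.
\item A checker has \(q_\rho\) neighbors in \(Q_0\) plus at least one selected assignment neighbor, by Lemma~\ref{lem:basic-equality} applied to equal matrices. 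So its count also lies in \(\rho\).
\end{itemize}
The only edges added to \(G_\varphi\) join \(Q_0\) to clause vertices and checkers, so no other neighborhoods change. Finally, \(|S_X^0|=b_0+km=d_{0,1}\).

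\emph{Converse.} Let \(S\) be a \((\sigma,\rho)\)-set with \(|S|\le d_{0,1}\). By Lemma~\ref{lem:zero-one-normal-form}, \(S=U_0\cup\{a_{h,a,x_{h,a}}\}\), none of the guards, clause vertices, checkers or leaves is selected, and \(|S|=d_{0,1}\). Define \(X_h\) layerwise as before.

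Each clause vertex \(c_h\) is unselected and already has exactly \(q_\rho\) selected neighbors, all from \(Q_0\subseteq U_0\). Suppose it had no selected assignment neighbor. Its count would then be \(q_\rho\notin\rho\), a contradiction. So \(X_h\) satisfies \(C_h\). The same argument shows that every checker has a selected assignment neighbor. Lemma~\ref{lem:basic-equality} then gives \(X_1=\cdots=X_m=:X\), and \(X\) is satisfying with \(S=S_X^0\).

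\emph{Bijection and counts.} The map \(X\mapsto S_X^0\) is injective, since \(X\) can be read off the selected assignment vertices of any layer. Together with the two directions above, this gives the bijection. Since every solution of size at most \(d_{0,1}\) has size exactly \(d_{0,1}\), both counts in \eqref{eq:zero-one-parsimonious} equal \(\#\operatorname{SAT}(\varphi)\).

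The main care point is the exact neighbor bookkeeping. We must confirm that \(Q_0\) is adjacent only to clause vertices and checkers, so that it does not disturb the counts of assignment vertices, guards or leaves. This is exactly what keeps those counts at \(0\) or \(1\), which lie in \(\sigma\) and \(\rho\) respectively.
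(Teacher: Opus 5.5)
Your proof is correct and follows the paper's argument essentially step by step. Independence of \(S_X^0\) gives every selected vertex \(0\in\sigma\) selected neighbors; leaves, guards and unchosen assignment vertices each see exactly one selected vertex; and the \(Q_0\)-padding plus at least one selected assignment neighbor puts clause vertices and checkers above \(q_\rho\). The converse uses Lemma~\ref{lem:zero-one-normal-form}, the \(j=0\) contradiction, and Lemma~\ref{lem:basic-equality}, as the paper does. One wording slip: the edges \(uf_u^i\) are also added to \(G_\varphi\), but they touch only new vertices, so your neighbor counts are unaffected.
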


\begin{proof}
Suppose that \(X\) satisfies \(\varphi\).  No two vertices of \(S_X^0\) are adjacent.  This holds within \(D_X\) by Lemma~\ref{lem:canonical-geometry}(i), within \(U_0\) by construction, and between the two sets because no such edges were added.  Consequently every selected vertex has zero selected neighbors, and its local constraint holds because \(0\in\sigma\).

We next check every unselected vertex.  Each forcing leaf has its center as its unique selected neighbor.  Each guard has the unique selected vertex of its choice group as its neighbor in \(S_X^0\), and every unselected assignment vertex has that same selected vertex as its unique selected neighbor inside the choice clique.  These counts belong to \(\rho\) because \(1\in\rho\).

Finally, a clause vertex or checker has all \(q_\rho\) vertices of \(Q_0\) as selected neighbors.  It also has at least one selected assignment neighbor: for a clause vertex this follows from the satisfaction of its clause, and for a checker it follows from Lemma~\ref{lem:basic-equality}.  Its total selected-neighbor count is therefore larger than \(q_\rho\), and belongs to \(\rho\) by \eqref{eq:zero-one-tail}.  Thus \(S_X^0\) is a \((\sigma,\rho)\)-set of size \(d_{0,1}\).

Conversely, let \(S\) be a \((\sigma,\rho)\)-set of size at most the target.  Lemma~\ref{lem:zero-one-normal-form} gives
\[
  S=U_0\cup
  \{a_{h,a,x_{h,a}}:h\in[m],a\in[k]\}
\]
and excludes all clause vertices and checkers.  Let \(j\) be the number of selected assignment neighbors of one such test vertex.  Its total number of selected neighbors is \(q_\rho+j\).  If \(j=0\), this number is \(q_\rho\notin\rho\), a contradiction.  Hence all clause vertices and checkers have a selected assignment neighbor.  Exactly as in Lemma~\ref{lem:basic-bijection}, the clause vertices show that the matrix of layer \(h\) satisfies \(C_h\), and Lemma~\ref{lem:basic-equality} shows that all layer matrices are equal. Their common value \(X\) satisfies \(\varphi\), and the normal form gives \(S=S_X^0\).  The selected assignment vertices uniquely determine \(X\), so the correspondence is bijective.
\end{proof}

Order the constant-size block
\[
  \Omega_0:=U_0\cup\{f_u^i:u\in U_0,\ i\in\{0,1\}\}
\]
arbitrarily before the order \(\pi\) of \eqref{eq:basic-order}; call the resulting order \(\pi_{0,1}\).

\begin{lemma}[Width of the independent-reservoir construction]
\label{lem:zero-one-width}
The supplied order satisfies
\[
  \width_{H_\varphi^{0,1}}(\pi_{0,1})
  \leq \max\{3b_0,4k+3\}.
\]
\end{lemma}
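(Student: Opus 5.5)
We split the prefix cuts of \(\pi_{0,1}\) into two kinds: those falling inside the initial constant block \(\Omega_0\), and those falling after it, i.e.\ inside some \(\Lambda_h\) or \(E_h\). In both cases we bound the cut-rank by rank subadditivity, decomposing the crossing matrix into contributions from reservoir edges and from edges of the old graph \(G_\varphi\).

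\textbf{Cuts inside \(\Omega_0\).} Here the processed side \(P\) is a subset of \(\Omega_0\), which has \(3b_0\) vertices. The cut-rank is at most the number of rows, hence at most \(|P|\leq 3b_0\). This gives the \(3b_0\) term of the bound.

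\textbf{Cuts after \(\Omega_0\).} All of \(\Omega_0\) is now on the processed side. The reservoir vertices \(u\), \(f_u^i\) are adjacent only to each other and, for \(u\in Q_0\), to clause vertices and checkers. So the only edges from \(\Omega_0\) across the cut run from \(Q_0\) to the unprocessed clause vertices and checkers.

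Every vertex of \(Q_0\) has the same neighborhood on the unprocessed side, namely all unprocessed clause vertices and checkers. The rows of \(Q_0\) are therefore identical, and this block has rank at most one.

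The remaining crossing edges are exactly the crossing edges of \(G_\varphi\) for the corresponding prefix of \(\pi\). By the case analysis of Lemma~\ref{lem:basic-layout}, these have rank at most \(4k+2\). Subadditivity then gives cut-rank at most \(4k+3\) for these cuts.

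Taking the maximum over the two kinds of cuts yields \(\max\{3b_0,4k+3\}\).

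\textbf{Point needing care.} The reduction to Lemma~\ref{lem:basic-layout} is sound because, after removing the \(Q_0\)-rows, the crossing matrix coincides with the \(G_\varphi\) crossing matrix; the \(Q_0\)-rows are all-ones on a fixed column set. We also note that the cut immediately after \(\Omega_0\) has rank at most one, so it is covered.
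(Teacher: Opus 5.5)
Your proof is correct and matches the paper's argument essentially step for step. Cuts inside \(\Omega_0\) are bounded by \(|\Omega_0|=3b_0\). Every later cut decomposes into the \(G_\varphi\)-crossing matrix, of rank at most \(4k+2\) by Lemma~\ref{lem:basic-layout}, plus the identical \(Q_0\)-rows, of rank at most one, and these are combined by rank subadditivity. Your separate treatment of the cut immediately after \(\Omega_0\), which corresponds to the empty prefix of \(\pi\) and is not literally covered by Lemma~\ref{lem:basic-layout}, is a small detail that the paper leaves implicit.
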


\begin{proof}
A cut inside \(\Omega_0\) has rank at most \(|\Omega_0|=3b_0\).  At every later cut, all forcing leaves have no crossing edge.  Restricted to the original graph \(G_\varphi\), the crossing matrix has rank at most \(4k+2\) by Lemma~\ref{lem:basic-layout}.  The only remaining crossing edges join the already processed set \(Q_0\) to unprocessed clause vertices or checkers. Their crossing matrix has rank at most one.  Rank subadditivity gives the claimed bound.
\end{proof}

\subsection{The cofinite--cofinite split-graph branch}
\label{subsec:sigma-rho-cofinite-branch}

We now assume that both \(\sigma\) and \(\rho\) are cofinite and that \(0\notin\rho\).  Here the canonical solution will be a large clique, so the tail of \(\sigma\) replaces the assumption \(0\in\sigma\).

\subsubsection{A forced clique reservoir}

Fix cofinite \(\sigma,\rho\subseteq\mathbb N_{\ge 0}\) with \(0\notin\rho\).  Put
\begin{equation}
  q_\sigma:=
  \begin{cases}
    \max(\mathbb N_{\ge 0}\setminus\sigma),
      &\sigma\neq\mathbb N_{\ge 0},\\
    -1,&\sigma=\mathbb N_{\ge 0},
  \end{cases}
  \qquad
  q_\rho:=\max(\mathbb N_{\ge 0}\setminus\rho),
  \qquad
  r:=\min\rho,
  \label{eq:sigma-rho-thresholds}
\end{equation}
The value \(q_\rho\) is well-defined and nonnegative because \(0\notin\rho\).  By definition,
\begin{equation}
  q_\rho\notin\rho,
  \quad
  \{q_\rho+1,q_\rho+2,\ldots\}\subseteq\rho,
  \quad\text{and}\quad
  \{0,\ldots,r-1\}\cap\rho=\varnothing.
  \label{eq:rho-threshold-properties}
\end{equation}

Since \(0\notin\rho\), while \(q_\rho+1\in\rho\) and \(r=\min\rho\), we have
\(1\leq r\leq q_\rho+1\).  Choose a constant
\begin{equation}
	b\geq\max\{q_\rho+1,q_\sigma+1\}.
	\label{eq:reservoir-size}
\end{equation}
The constant depends only on the fixed pair \((\sigma,\rho)\), not on the formula.  Let \(U\) be a clique of size \(b\).  For every \(u\in U\), choose an \(r\)-element set \(R_u\subseteq U\) containing \(u\), and add two pairwise nonadjacent vertices \(f_u^0,f_u^1\) with
\[
  N(f_u^0)=N(f_u^1)=R_u.
\]
All vertices \(f_u^i\) are mutually nonadjacent.  They will force the whole reservoir \(U\) into every solution under the tight budget.
Let
\[
  F:=\{f_u^i:u\in U,\ i\in\{0,1\}\}.
\]

\begin{lemma}[Reservoir lower bound]
		\label{lem:reservoir-lower-bound}
		Let \(S\) be a \((\sigma,\rho)\)-set in any graph containing the
	reservoir described above, and assume that the vertices of \(F\) have no
	neighbors outside \(U\). If \(a:=|U\setminus S|\), then
	\begin{equation}
		|S\cap(U\cup F)|\geq b+a\geq b.
		\label{eq:reservoir-count}
	\end{equation}
	Moreover,
	\[
	|S\cap(U\cup F)|=b
	\]
	if and only if \(U\subseteq S\) and \(S\cap F=\varnothing\).
\end{lemma}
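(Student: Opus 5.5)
The plan is to charge each missing reservoir vertex to selected vertices of \(F\). Write \(W:=U\setminus S\), so \(|W|=a\) and \(|S\cap U|=b-a\). It then suffices to prove \(|S\cap F|\geq a\), since
\[
  |S\cap(U\cup F)|=|S\cap U|+|S\cap F|\geq (b-a)+a=b.
\]
This gives \eqref{eq:reservoir-count}, and \(b+a\geq b\) is trivial.

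The key observation concerns a fixed \(u\in W\). We have \(u\in R_u\) and \(|R_u|=r\), so \(|R_u\cap S|\leq r-1\). Consider an unselected leaf \(f_u^i\). By hypothesis its neighbors all lie in \(U\), so its selected-neighbor count is exactly \(|R_u\cap S|\in\{0,\ldots,r-1\}\). By \eqref{eq:rho-threshold-properties}, this count lies outside \(\rho\). Hence neither \(f_u^0\) nor \(f_u^1\) can be unselected, so both belong to \(S\).

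The leaf pairs \(\{f_u^0,f_u^1\}\) for distinct \(u\) are disjoint. This gives \(|S\cap F|\geq 2a\), which is stronger than needed. I would record the sharper inequality \(|S\cap(U\cup F)|\geq b+a\) anyway, because it is what the equality case uses.

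For the equality case, first suppose \(|S\cap(U\cup F)|=b\). Then \(b\geq b+a\), so \(a=0\), that is, \(U\subseteq S\). Then \(|S\cap U|=b\) already, so \(|S\cap F|=0\). Conversely, if \(U\subseteq S\) and \(S\cap F=\varnothing\), the count is exactly \(|U|=b\).

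I expect no real obstacle. The only delicate point is making sure the unselected leaf's count is \emph{exactly} \(|R_u\cap S|\), which is where the assumption that \(F\) has no neighbors outside \(U\) enters. The fact that \(u\in R_u\) is what pushes this count strictly below \(r=\min\rho\).
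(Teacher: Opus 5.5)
Your proof is correct and follows essentially the same route as the paper. Each \(u\in U\setminus S\) forces both \(f_u^0,f_u^1\) into \(S\), because an unselected leaf would see only \(|R_u\cap S|\le r-1\) selected neighbors. Disjointness then gives \(|S\cap F|\ge 2a\), which yields both the bound and the equality characterization. One small slip: your opening claim that \(|S\cap F|\ge a\) ``gives \eqref{eq:reservoir-count}'' is not right, since that yields only \(\ge b\); it is the \(2a\) bound you prove afterwards that delivers \(b+a\).
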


\begin{proof}
	Fix \(u\in U\setminus S\). If \(f_u^i\notin S\), then
	\[
	|N(f_u^i)\cap S|
	=|R_u\cap S|
	\leq r-1,
	\]
	which does not belong to \(\rho\) by
	\eqref{eq:rho-threshold-properties}. Hence both \(f_u^0\) and
	\(f_u^1\) belong to \(S\). Since these forcing vertices are distinct
	for different \(u\), we obtain
	\[
	|S\cap F|\geq 2a.
	\]
	Consequently,
	\[
	|S\cap(U\cup F)|
	=(b-a)+|S\cap F|
	\geq (b-a)+2a
	=b+a.
	\]
	
	Now suppose that \(|S\cap(U\cup F)|=b\). The preceding inequality implies \(a=0\), so \(U\subseteq S\). Therefore
	\[	|S\cap(U\cup F)|=b+|S\cap F|,\]
	and equality with \(b\) implies \(S\cap F=\varnothing\). The converse is immediate.
\end{proof}

\subsubsection{The split-graph construction}

Start with the split graph \(G_\varphi^{\mathrm{sp}}\) from Subsection~\ref{subsec:split-refinement}.  Add the reservoir gadget \(U\cup F\) and make every vertex of \(U\) adjacent to every assignment vertex.  Thus
\begin{equation}
  K_{\sigma,\rho}:=U\cup\bigcup_{h,a}A_{h,a}
  \label{eq:sigma-rho-clique}
\end{equation}
is a clique.

Choose fixed sets \(P,Q\subseteq U\) with
\[
  |P|=r-1
  \qquad\text{and}\qquad
  |Q|=q_\rho.
\]
The choices are possible by \eqref{eq:reservoir-size}; either set may be empty.  Join every guard \(g^i_{h,a}\) to every vertex of \(P\).  Join every clause vertex and every equality checker to every vertex of \(Q\). No other edges are added.  Denote the resulting graph by \(H_\varphi^{\sigma,\rho}\) and set
\begin{equation}
  d_{\sigma,\rho}:=b+km.
  \label{eq:sigma-rho-target}
\end{equation}

All vertices outside the clique \(K_{\sigma,\rho}\) are pairwise nonadjacent: this was true of the guards, clause vertices, and checkers, and the new vertices \(f_u^i\) only have neighbors in \(U\).  Therefore \(H_\varphi^{\sigma,\rho}\) is a split graph.

For an assignment matrix \(X\), define
\begin{equation}
  S_X:=U\cup D_X.
  \label{eq:sigma-rho-canonical-set}
\end{equation}
Here \(D_X\) is the canonical set defined in \eqref{eq:canonical-ds}, viewed in \(G_\varphi^{\mathrm{sp}}\).
\begin{lemma}[Generalized tight-budget normal form]
\label{lem:sigma-rho-normal-form}
If \(S\) is a \((\sigma,\rho)\)-set of \(H_\varphi^{\sigma,\rho}\) with \(|S|\leq d_{\sigma,\rho}\), then:
\begin{enumerate}[label=\textup{(\roman*)}]
  \item \(U\subseteq S\) and no vertex \(f_u^i\) belongs to \(S\);
  \item exactly one vertex of every \(A_{h,a}\) belongs to \(S\);
  \item no guard, clause vertex, or equality checker belongs to \(S\); and
  \item \(|S|=d_{\sigma,\rho}\).
\end{enumerate}
\end{lemma}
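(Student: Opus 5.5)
We follow the same disjoint-block counting as in Lemma~\ref{lem:zero-one-normal-form}, using Lemma~\ref{lem:reservoir-lower-bound} for the reservoir part. The blocks are \(U\cup F\) and the \(km\) choice blocks \(Q_{h,a}=A_{h,a}\cup\{g^0_{h,a},g^1_{h,a}\}\); these are pairwise disjoint. The vertices of \(F\) have neighbors only in \(U\) by construction, so Lemma~\ref{lem:reservoir-lower-bound} applies and gives \(|S\cap(U\cup F)|\geq b\). Equality holds exactly when \(U\subseteq S\) and \(S\cap F=\varnothing\).

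For a choice block, the guards now have neighborhood \(A_{h,a}\cup P\). Suppose \(S\cap A_{h,a}=\varnothing\) and some guard \(g^i_{h,a}\notin S\). Then its selected-neighbor count is \(|P\cap S|\leq|P|=r-1\), which is not in \(\rho\) by \eqref{eq:rho-threshold-properties}. Hence both guards must lie in \(S\), giving \(|S\cap Q_{h,a}|\geq2\). Otherwise \(S\cap A_{h,a}\neq\varnothing\) and the block contributes at least one vertex. This bound holds regardless of whether \(P\subseteq S\), so it does not depend on first establishing the reservoir claim. The main point to watch is that this guard argument uses only \(|P|\le r-1\) and never the \(\sigma\)-constraints or membership of \(U\) in \(S\). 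That is why \(P\) was chosen of size exactly \(r-1\).

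Summing over the disjoint blocks gives \(|S|\geq b+km=d_{\sigma,\rho}\). Together with \(|S|\leq d_{\sigma,\rho}\), every inequality is tight, and no vertex outside the blocks is selected. The excluded vertices are the clause vertices and the checkers, which gives part of (iii) and also (iv). Tightness in the reservoir block gives (i) through the equality case of Lemma~\ref{lem:reservoir-lower-bound}. Tightness in each choice block means exactly one selected vertex. That vertex cannot be a guard, since a block with no selected assignment vertex needs two vertices. Hence it is a single assignment vertex, which gives (ii) and completes (iii) for guards.

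No step looks genuinely hard. The only care needed is in the guard argument of the second paragraph, which must remain valid for an arbitrary \(S\) before any normal form is known.
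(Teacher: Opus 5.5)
Your proof is correct and takes essentially the same route as the paper: disjoint-block counting over \(U\cup F\) and the \(km\) choice blocks, with Lemma~\ref{lem:reservoir-lower-bound} for the reservoir and the two-guard argument using \(|P|=r-1\) for each block, followed by tightness of every block bound. Your remark that the guard bound needs only \(|P\cap S|\leq r-1\), and not \(P\subseteq S\), is accurate; the paper's proof relies on it in the same way without stating it explicitly.
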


\begin{proof}
Lemma~\ref{lem:reservoir-lower-bound} shows that the reservoir and its forcing vertices contribute at least \(b\) selected vertices.

Fix a choice group \((h,a)\).  If \(S\cap A_{h,a}=\varnothing\), then an unselected guard \(g^i_{h,a}\) would have at most \(|P|=r-1\) selected neighbors: its entire neighborhood is \(A_{h,a}\cup P\).  This count is not in \(\rho\), so both guards must be selected.  Consequently,
\begin{equation}
  \bigl|S\cap(A_{h,a}\cup\{g^0_{h,a},g^1_{h,a}\})\bigr|
  \geq1,
  \label{eq:sigma-rho-choice-lower-bound}
\end{equation}
and the contribution is at least two if no assignment vertex is selected. The \(km\) sets in \eqref{eq:sigma-rho-choice-lower-bound} are mutually disjoint and are disjoint from the reservoir part.  Hence
\[
  |S|\geq b+km=d_{\sigma,\rho}.
\]
The \(km\) choice blocks contribute at least \(km\) selected vertices,
so the reservoir block contributes at most \(b\). By
Lemma~\ref{lem:reservoir-lower-bound}, it contributes at least \(b\).
It therefore contributes exactly \(b\), and the final assertion of
Lemma~\ref{lem:reservoir-lower-bound} gives (i). Every choice group must contribute exactly one vertex; the preceding two-guard argument shows that this vertex lies in \(A_{h,a}\), proving (ii) and excluding all guards. The budget is now exhausted, so no clause vertex or checker can be selected.  This proves (iii)--(iv).
\end{proof}

\begin{lemma}[Bijection for cofinite pairs]
\label{lem:sigma-rho-bijection}
The map \(X\mapsto S_X\) is a bijection from the satisfying assignments of \(\varphi\) to the \((\sigma,\rho)\)-sets of \(H_\varphi^{\sigma,\rho}\) of size at most \(d_{\sigma,\rho}\).  Every such set has size exactly \(d_{\sigma,\rho}\).  Consequently,
\begin{equation}
  Z^{\sigma,\rho}_{\leq}
  (H_\varphi^{\sigma,\rho},d_{\sigma,\rho})
  =Z^{\sigma,\rho}_{=}
  (H_\varphi^{\sigma,\rho},d_{\sigma,\rho})
  =\#\operatorname{SAT}(\varphi).
  \label{eq:parsimonious-sigma-rho}
\end{equation}
\end{lemma}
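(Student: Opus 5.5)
The proof follows the pattern of Lemma~\ref{lem:zero-one-bijection}: first show that \(S_X\) is a \((\sigma,\rho)\)-set of size \(d_{\sigma,\rho}\) for every satisfying \(X\), then use the normal form of Lemma~\ref{lem:sigma-rho-normal-form} for the converse. Injectivity is immediate, since \(S_X\cap A_{h,a}=\{a_{h,a,x_a}\}\) recovers the rows of \(X\). The counting identity \eqref{eq:parsimonious-sigma-rho} then follows from the bijection together with item (iv) of the normal form.

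\emph{Forward direction.} Fix a satisfying \(X\); then \(|S_X|=b+km\). We check each vertex type in turn.

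Every selected vertex lies in the clique \(K_{\sigma,\rho}\), and its selected neighbors are all other vertices of \(S_X\). The extra neighbors outside \(K_{\sigma,\rho}\) (forcing vertices, guards, clause vertices, checkers) are all unselected. So each selected vertex has exactly \(b+km-1\geq b\geq q_\sigma+1\) selected neighbors, which lies in the cofinite tail of \(\sigma\) by the choice \eqref{eq:reservoir-size}. This is the one place where cofiniteness of \(\sigma\) is used.

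For unselected vertices:
\begin{itemize}
\item An unselected assignment vertex is adjacent to all of \(S_X\), giving \(b+km-1\geq q_\rho+1\) selected neighbors, which lies in \(\rho\).
\item A forcing vertex \(f_u^i\) has exactly \(|R_u|=r=\min\rho\) selected neighbors.
\item A guard has the selected vertex of its group plus all of \(P\), giving \(1+(r-1)=r\in\rho\).
\item A clause vertex or checker has all \(q_\rho\) vertices of \(Q\) plus \(j\geq1\) selected assignment neighbors, giving \(q_\rho+j\in\rho\) by \eqref{eq:rho-threshold-properties}. Here \(j\geq1\) comes from clause satisfaction for \(c_h\) and from Lemma~\ref{lem:basic-equality} for checkers, since all layers encode \(X\).
\end{itemize}
The guard count is the delicate point. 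It works precisely because \(|P|=r-1\) was chosen, and in the split graph the guards' neighborhoods are unchanged apart from \(P\).

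\emph{Converse.} Let \(S\) be a \((\sigma,\rho)\)-set with \(|S|\leq d_{\sigma,\rho}\). Lemma~\ref{lem:sigma-rho-normal-form} gives
\[
S=U\cup\{a_{h,a,x_{h,a}}:h\in[m],\ a\in[k]\},
\]
with every clause vertex and checker unselected. Such a test vertex has \(q_\rho+j\) selected neighbors, where \(j\) counts its selected assignment neighbors. If \(j=0\), this count is \(q_\rho\notin\rho\), a contradiction, so \(j\geq1\) for every test vertex. The clause vertices then give that each layer matrix \(X_h\) satisfies \(C_h\). Lemma~\ref{lem:basic-equality} gives \(X_1=\cdots=X_m=:X\), and therefore \(S=S_X\). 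This gives surjectivity onto the sets of size at most the target.

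The main obstacle is bookkeeping, not any new idea. It consists of verifying that no selected-neighbor count falls into a forbidden gap, that is, confirming exactly which vertices of \(S_X\) each unselected vertex sees in the split completion. The cases needing care are the guards (the \(P\)-edges combined with the two-guard normal form) and the selected vertices. For the selected vertices, I would explicitly confirm \(b+km-1\geq q_\sigma+1\) using \(km\geq1\).
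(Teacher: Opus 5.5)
Your proposal is correct and follows the paper's proof essentially step for step. The forward direction is the same vertex-by-vertex check: selected vertices via the clique \(K_{\sigma,\rho}\) and the tail of \(\sigma\), forcing vertices and guards via \(r=\min\rho\) and \(|P|=r-1\), and test vertices via \(Q\) and the tail of \(\rho\). The converse is likewise the same combination of Lemma~\ref{lem:sigma-rho-normal-form}, the \(q_\rho\)-padding argument and Lemma~\ref{lem:basic-equality}.

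There is one trivial slip. An unselected assignment vertex is adjacent to every vertex of \(S_X\), so it has \(b+km\) selected neighbors, not \(b+km-1\). This only strengthens the bound you need.
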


\begin{proof}
Suppose first that \(X\) satisfies \(\varphi\).  The set \(S_X\) has size \(b+km=d_{\sigma,\rho}\), and it lies in the clique \(K_{\sigma,\rho}\). Every selected vertex therefore has exactly
\[
  |S_X|-1=b+km-1\geq b>q_\sigma
\]
selected neighbors.  This number belongs to \(\sigma\).  We verify the \(\rho\)-constraint for each unselected vertex family:
\begin{itemize}
  \item An unselected assignment vertex is adjacent to every vertex of \(S_X\), and hence has \(b+km>q_\rho\) selected neighbors.
  \item A forcing vertex \(f_u^i\) has exactly \(|R_u|=r\) selected neighbors, and \(r\in\rho\).
  \item A guard has its \(r-1\) neighbors in \(P\) and the unique selected vertex of its choice group as neighbors in \(S_X\), for a total of \(r\in\rho\).
  \item A clause vertex or checker has its \(q_\rho\) neighbors in \(Q\) and at least one selected assignment neighbor.  For clause vertices this follows because \(X\) satisfies the clause; for checkers it follows from Lemma~\ref{lem:basic-equality}, since consecutive layers encode the same matrix.  Its selected-neighbor count is therefore strictly larger than \(q_\rho\), and hence belongs to \(\rho\).
\end{itemize}
Thus \(S_X\) is a \((\sigma,\rho)\)-set.

Conversely, let \(S\) be a \((\sigma,\rho)\)-set of size at most the target. Lemma~\ref{lem:sigma-rho-normal-form} gives one matrix \(X_h\) in every layer and excludes every clause vertex and checker from \(S\).  Consider one of these test vertices and let \(j\) be its number of selected assignment neighbors.  Since all of \(Q\) is selected, its total number of selected neighbors is \(q_\rho+j\).  If \(j=0\), this equals \(q_\rho\notin\rho\), contradicting feasibility.  Hence every clause vertex and every checker has at least one selected assignment neighbor. The clause vertices imply that \(X_h\) satisfies \(C_h\), while Lemma~\ref{lem:basic-equality} implies
\[
  X_1=X_2=\cdots=X_m.
\]
The common matrix \(X\) satisfies \(\varphi\), and the normal form gives \(S=U\cup D_X=S_X\).  Uniqueness of \(X\) follows from the unique selected assignment vertex in each choice group, so the correspondence is a bijection.
\end{proof}

\subsubsection{Width of the clique-reservoir construction}

Order the vertices of the reservoir block
\[
  \Omega:=U\cup\{f_u^i:u\in U, i\in\{0,1\}\}
\]
arbitrarily, with all of \(\Omega\) before the order \(\pi\) from \eqref{eq:basic-order}.  Call the resulting order \(\pi_{\sigma,\rho}\).

\begin{lemma}[Width of the clique-reservoir construction]
\label{lem:sigma-rho-width}
The supplied order satisfies
\begin{equation}
  \width_{H_\varphi^{\sigma,\rho}}(\pi_{\sigma,\rho})
  \leq \max\{3b,4k+6\}.
  \label{eq:sigma-rho-width}
\end{equation}
\end{lemma}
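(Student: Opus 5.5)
We split prefix cuts of \(\pi_{\sigma,\rho}\) into two kinds, exactly as in Lemma~\ref{lem:zero-one-width}, and bound each by rank subadditivity over a small number of explicitly identified crossing submatrices.

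\emph{Cuts inside \(\Omega\).} The processed side is a subset of \(\Omega\), which has \(|U|+|F|=3b\) vertices. The cut-rank is therefore at most the number of processed vertices, which is at most \(3b\).

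\emph{Cuts after \(\Omega\).} Now all of \(U\cup F\) lies on the processed side. Every vertex of \(F\) has all its neighbors in \(U\), so no edge incident with \(F\) crosses the cut. The crossing matrix is the sum, over disjoint row/column blocks, of the following pieces.
\begin{itemize}
\item The crossing edges of \(G_\varphi^{\mathrm{sp}}\) restricted to the suffix of \(\pi\). By Proposition~\ref{prop:split-completion}, this block has rank at most \(4k+3\); its bound is obtained cut by cut exactly as there.
\item Rows indexed by \(U\), whose only unprocessed neighbors are assignment vertices, guards, clause vertices, and checkers. The rows of a vertex \(u\in U\) have three possible patterns, determined by whether \(u\in P\) and whether \(u\in Q\). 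In every case the row is the indicator of the unprocessed assignment vertices, plus possibly the indicator of the unprocessed guards, plus possibly the indicator of the unprocessed clause vertices and checkers. All of these rows therefore lie in the span of three fixed vectors: the all-one vector on unprocessed assignment vertices, the one on unprocessed guards, and the one on unprocessed clause and checker vertices. This contributes rank at most \(3\).
\end{itemize}

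Adding these bounds gives at most \(4k+3+3=4k+6\). Combining with the first kind of cut gives \(\max\{3b,4k+6\}\).

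The only point needing care is the accounting of rows from \(U\). One might worry that these edges interact with the rows of \(G_\varphi^{\mathrm{sp}}\). They do not, because \(U\) is entirely processed. The crossing matrix therefore decomposes as a vertical stack: rows from \(U\) on top, and rows from processed vertices of \(G_\varphi^{\mathrm{sp}}\) below, with the same column set. The rank of such a stack is at most the sum of the ranks of the two blocks. It also helps that \(U\)'s adjacency to assignment vertices is uniform, because \(K_{\sigma,\rho}\) is a clique. A cruder estimate would use one rank unit each for the sets \(U\setminus(P\cup Q)\), \(P\setminus Q\), \(Q\setminus P\), and \(P\cap Q\), giving \(4\). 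The sharper count uses the observation above that three basis vectors suffice.
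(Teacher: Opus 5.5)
Your proof is correct and follows the paper's argument. Cuts inside \(\Omega\) are bounded by \(|\Omega|=3b\), and later cuts are bounded by the \(4k+3\) bound of Proposition~\ref{prop:split-completion} plus at most \(3\) for the edges leaving \(U\); the paper gets the latter from the column side (each unprocessed column restricted to \(U\) is one of \(\mathbf 1_U,\mathbf 1_P,\mathbf 1_Q\)), which is the transpose of your row-span argument. One small slip: the rows of \(U\) can take up to four patterns (according to membership in \(P\) and \(Q\)), not three, but your conclusion rests only on their lying in the span of three indicator vectors, which is right.
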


\begin{proof}
The block \(\Omega\) has \(3b\) vertices, so a cut inside this block has cut-rank at most \(3b\), independently of the vertices to its right.

Consider a later cut.  Restricted to the original vertices of \(G_\varphi^{\mathrm{sp}}\), its crossing matrix has rank at most \(4k+3\) by Proposition~\ref{prop:split-completion}.  Every forcing vertex \(f_u^i\) is already on the left and has no neighbor outside \(U\), so it creates no crossing edge.  The remaining new crossing edges have one endpoint in \(U\), which is entirely on the left.  A column belonging to an unprocessed original vertex has, on the rows indexed by \(U\), one of only three possible vectors:
\[
  \mathbf 1_U\quad\text{for an assignment vertex},
  \qquad
  \mathbf 1_P\quad\text{for a guard},
  \qquad
  \mathbf 1_Q\quad\text{for a clause vertex or checker}.
\]
Here \(\mathbf 1_S\) denotes the incidence vector of \(S\subseteq U\), with coordinates indexed by \(U\). Their span has dimension at most three.  Rank subadditivity therefore gives cut-rank at most \((4k+3)+3=4k+6\).
\end{proof}

\begin{proof}[Proof of Theorem~\ref{thm:sigma-rho-cofinite}]
Let \(\varphi\) have \(k^2\) variables.  Under condition~\textup{(a)}, construct \(H_\varphi^{0,1}\), use target \(d_{0,1}\), and supply \(\pi_{0,1}\).  Lemma~\ref{lem:zero-one-bijection} gives correctness and parsimony, Lemma~\ref{lem:zero-one-width} gives width \(O(k)\), and the construction is monopolar.  This proves every assertion attached to condition~\textup{(a)}.

Under condition~\textup{(b)}, construct \(H_\varphi^{\sigma,\rho}\), use target \(d_{\sigma,\rho}\), and supply \(\pi_{\sigma,\rho}\).  Lemma~\ref{lem:sigma-rho-bijection} gives correctness and parsimony, Lemma~\ref{lem:sigma-rho-width} gives width \(O(k)\), and the construction is a split graph.  This proves every assertion attached to condition~\textup{(b)}.  If both conditions hold, the two conclusions hold simultaneously on their respective graph classes.

In either construction the pair \((\sigma,\rho)\) is fixed, so the reservoir size is constant.  The output has \(m2^{O(k)}\) vertices and is constructible in \(2^{O(k)}(k+m)^{O(1)}\) time.  A decision or counting algorithm with the prohibited running time would contradict square-variable ETH or \(\#\mathrm{ETH}\), respectively.  Finally, \(\rw(G)\leq\lrw(G)\leq\width_G(\pi)\) for the supplied order of either construction gives the two parameterized formulations; converting the order into a caterpillar rank-decomposition of width \(O(k)\) gives the supplied-decomposition formulation.
\end{proof}

\section{Conclusion and future directions}
\label{sec:conclusion}

We proved that the \(2^{O(\rw(G)^2)}\) dependence of rank-width algorithms for domination is optimal at the level of the exponent under ETH, already for ordinary unweighted Dominating Set.  This resolves the open problem left by Bergougnoux, Korhonen, and Nederlof~\cite{bergougnoux-korhonen-nederlof-2023}.  The obstruction persists on split graphs and on bipartite graphs of diameter at most four.  The same normal form yields a lower bound for Independent Dominating Set on monopolar graphs and for Connected and Total Dominating Set on split graphs and, separately, on bipartite graphs of diameter at most four.  Because the reductions are bijective at the target, all of these statements have direct \(\#\mathrm{ETH}\) counting analogues.  More generally, our \((\sigma,\rho)\)-set theorem applies to every fixed decidable pair with cofinite \(\rho\), \(0\notin\rho\), and either cofinite \(\sigma\) or \(0\in\sigma,1\in\rho\); the latter branch imposes no finiteness assumption on \(\sigma\).  Within the finite/cofinite framework, this covers every cofinite--cofinite pair with \(0\notin\rho\) and every finite--cofinite pair satisfying \(0\in\sigma\) and \(1\in\rho\). Together with the trivial \(0\in\rho\) case, this settles the entire cofinite--cofinite minimization regime.

\subsection{Perfect Dominating Set and Perfect Code}

Two basic exact-neighbor problems remain outside our lower bounds.  Perfect Dominating Set and Perfect Code correspond, respectively, to
\[
(\sigma,\rho)=(\mathbb N_{\ge0},\{1\})
\qquad\text{and}\qquad
(\sigma,\rho)=(\{0\},\{1\}).
\]
In both problems, every unselected vertex must have \emph{exactly one} selected neighbor; Perfect Code additionally requires the selected vertices to form an independent set.  The reductions in this paper deliberately test only whether a clause vertex or equality checker has at least one selected assignment neighbor.  Cofinite padding is ideal for that monotone test: it separates zero from every positive count.  It cannot enforce that all positive counts collapse to the single allowed value \(1\).

A lower bound for Perfect Dominating Set therefore needs an \emph{exact-one checker}.  The challenge is to prevent multiple selected assignment neighbors without exposing \(\Theta(k^2)\) independent information across a cut and thereby losing the required \(O(k)\) rank bound.  To the best of our knowledge, no matching rank-width lower bound is known for Perfect Dominating Set.  The general upper bound follows from neighborhood-union dynamic programming~\cite{bui-xuan-telle-vatshelle-2013,oum-saether-vatshelle-2014}; in particular, the lower bounds of Bergougnoux, Korhonen, and Nederlof do not cover this problem~\cite{bergougnoux-korhonen-nederlof-2023}.

Perfect Code is likewise an open case for the present framework.  Split graphs, however, cannot provide hard instances: a split partition can be found in linear time~\cite{hammer-simeone-1981}, and Appendix~\ref{app:perfect-code} gives a complete characterization that permits implicit representation and counting of all perfect codes, as well as finding minimum- and maximum-cardinality perfect codes, in linear time.  Any extension of our lower-bound approach to Perfect Code must therefore use a different host graph class.

\subsection{The three remaining finite/cofinite regimes when
\texorpdfstring{\(0\notin\rho\)}{0 not in rho}}

Using neighborhood-union dynamic programming, the minimum-cardinality \((\sigma,\rho)\)-set problem for every fixed pair in which each of \(\sigma\) and \(\rho\) is finite or cofinite can be solved in \(2^{O(k^2)}|V(G)|^{O(1)}\) time from a supplied rank-decomposition of width \(k\)~\cite{bui-xuan-telle-vatshelle-2013,oum-saether-vatshelle-2014}.  For the minimization decision problem, the case \(0\in\rho\) is trivial because the empty set is feasible.  Thus, in the natural nontrivial setting \(0\notin\rho\), our matching lower bound completely settles the cofinite--cofinite minimization regime.  Exact-size and counting problems with \(0\in\rho\) form a separate classification question.  Under \(0\notin\rho\), the three remaining finite/cofinite regimes are not yet fully covered; their structural obstacles are as follows.

\begin{description}[style=nextline]
	\item[Finite--finite.] Both selected and unselected vertices must have bounded allowed numbers of neighbors in the solution.  Neither large-clique validation nor cofinite padding is available, so both sides of the local constraint must be controlled simultaneously while the cut-rank remains \(O(k)\).
	
	\item[Cofinite--finite.] The cofinite tail of \(\sigma\) can validate selected vertices, but the finite set \(\rho\) requires every unselected test vertex to have one of finitely many exact neighbor counts.  Thus the unresolved difficulty lies on the unselected side of the constraint.
	
	\item[Finite--cofinite.] The cofinite tail of \(\rho\) still permits the padding test used here, but selected vertices cannot in general be placed in a large clique.  Section~\ref{subsec:sigma-rho-zero-one} solves the subfamily \(0\in\sigma,1\in\rho\) by keeping the canonical solution independent.  For the remaining pairs, the selected side must be controlled without either a large clique or the assumptions \(0\in\sigma\) and \(1\in\rho\).
\end{description}

\clearpage
\appendix

\section{A standard-basis equality checker}
\label{app:standard-checker}

The checker in Section~\ref{subsec:basic-construction} can be made smaller. Let \(\mathbf e_j\) denote the \(j\)-th standard basis vector of \(\F^k\). For every transition \(h\in[m-1]\), replace \(E_h\) by the standard-basis checker block
\[
  E_h^{\mathrm{sb}}
  :=\{e_{h,j,p,r}:j\in[k],\ p\in\F^k,\
  r\in\F^k\setminus\{0\}\}.
\]
Join \(e_{h,j,p,r}\) to \(a_{h,a,x}\) when \(x_j\neq p_a\), and to \(a_{h+1,a,x}\) when \(x_j\neq p_a+r_a\).

\begin{lemma}
\label{lem:standard-checker}
If consecutive layers encode \(X,Y\in\F^{k\times k}\), then every standard-basis checker between them has a selected assignment neighbor if and only if \(X=Y\).
\end{lemma}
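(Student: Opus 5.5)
The plan is to mirror the proof of Lemma~\ref{lem:basic-equality}, with the standard basis vector \(\mathbf e_j\) playing the role of the general test vector \(t\). The observation that makes this work is \(\langle x,\mathbf e_j\rangle=x_j\). Hence the adjacency rules of \(e_{h,j,p,r}\) are exactly those of \(e_{h,\mathbf e_j,p,r}\) in \eqref{eq:checker-left}--\eqref{eq:checker-right}, and \(E_h^{\mathrm{sb}}\) is, up to renaming, the subfamily of \(E_h\) with \(t\in\{\mathbf e_1,\ldots,\mathbf e_k\}\).

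First, fix a checker \(e_{h,j,p,r}\) and characterize when it has no selected assignment neighbor. In layer \(h\), having no selected neighbor means \((x_a)_j=p_a\) for every \(a\in[k]\), i.e.\ \(X\mathbf e_j=p\), the \(j\)-th column of \(X\). Likewise, having none in layer \(h+1\) means \(Y\mathbf e_j=p+r\). So the checker is undominated by selected assignment vertices if and only if \(X\mathbf e_j=p\) and \(Y\mathbf e_j=p+r\), the analogue of \eqref{eq:undominated-checker}.

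Second, suppose \(X=Y\). Then the two equations force \(r=0\), which is excluded, so every checker has a selected neighbor.

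Third, suppose \(X\neq Y\). I would not use an arbitrary \(t\) here but pick a column index \(j\) on which \(X\) and \(Y\) differ; such a \(j\) exists because two matrices are equal exactly when all their columns agree. Set \(p:=X\mathbf e_j\) and \(r:=(X+Y)\mathbf e_j\neq0\). Working in characteristic two, \(p+r=Y\mathbf e_j\), so the checker \(e_{h,j,p,r}\) lies in \(E_h^{\mathrm{sb}}\) and is undominated.

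There is no real obstacle. The only point that needs care, and the reason the smaller family still suffices, is the existence of a distinguishing vector inside the standard basis. This holds because \(X+Y\neq0\) has a nonzero column, and that column is \((X+Y)\mathbf e_j\). Although not required by the lemma, I would note that the rank interfaces only shrink: the entries become \(x_j+p_a\) and \(x_j+p_a+r_a\), whose row spans lie in the span of the functions \([j=i]\) and \(p_i\) (respectively \(p_i+r_i\)), so each interface still has rank at most \(2k\).
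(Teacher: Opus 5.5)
Your proof is correct and matches the paper's argument essentially verbatim. You characterize an undominated checker by \(X\mathbf e_j=p\) and \(Y\mathbf e_j=p+r\), rule this out when \(X=Y\) because \(r\neq0\), and for \(X\neq Y\) pick a differing column \(j\) with \(p=X\mathbf e_j\) and \(r=(X+Y)\mathbf e_j\). Your observation that \(E_h^{\mathrm{sb}}\) is just the subfamily of \(E_h\) with \(t\in\{\mathbf e_1,\ldots,\mathbf e_k\}\) shows cleanly why the proof of Lemma~\ref{lem:basic-equality} carries over.
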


\begin{proof}
The checker \(e_{h,j,p,r}\) has no selected assignment neighbor precisely when
\[
  X\mathbf e_j=p\quad\text{and}\quad Y\mathbf e_j=p+r.
\]
If \(X=Y\), these equations force \(r=0\).  If \(X\neq Y\), some column \(j\) differs; setting \(p=X\mathbf e_j\) and \(r=(X+Y)\mathbf e_j\neq0\) produces a checker with no selected assignment neighbor.
\end{proof}

There are only \(k2^k(2^k-1)\) checkers per transition.  The left incidence matrix has entries \(x_j+p_a\); its rows lie in the span of the \(k\) functions \(\mathbf 1[j=\ell]\) and the \(k\) functions \(p_a\). Its rank is therefore at most \(2k\).  The right incidence matrix, with entries \(x_j+p_a+r_a\), has the same bound.  Replacing \(E_h\) by \(E_h^{\mathrm{sb}}\) in the proofs of Section~\ref{sec:unweighted-ds} therefore preserves the equality test, all target-size bijections, and the order-width bounds.  The graph-class arguments are unchanged; for the bipartite diameter bound, note that every standard-basis checker has an assignment neighbor in each endpoint layer.  The number of vertices drops to \(O(mk2^{2k})\), and every refinement can still be generated explicitly in \(2^{O(k)}(k+m)^{O(1)}\) time.  We use this checker below.

\section{A \texorpdfstring{\(3k+O(1)\)}{3k+O(1)} rank-decomposition}
\label{app:rank-decomposition}

\begin{theorem}
\label{thm:three-k-decomposition}
For a source formula on \(k^2\) variables, the basic graph constructed using the standard-basis checker from Appendix~\ref{app:standard-checker} has a polynomial-time constructible rank-decomposition of width at most \(3k+1\), and at most \(3k\) when \(k\geq2\).  The split and bipartite refinements have polynomial-time constructible rank-decompositions of width at most \(3k+2\).
\end{theorem}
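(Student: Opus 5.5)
The plan is to replace the caterpillar coming from the order \(\pi\) by a tree that orders each checker block \(E_h^{\mathrm{sb}}\) \emph{by the index \(j\) first}. The point is to exploit the structure of the standard-basis interfaces from Appendix~\ref{app:standard-checker}. In the left interface the entry at \(((a,x),(j,p,r))\) is \(x_j+p_a\). Viewed as functions of the checker index, these rows lie in the span of the indicators \(\mathbf 1[j=\ell]\), \(\ell\in[k]\), and the coordinates \(p_a\), \(a\in[k]\). Restricting the columns to checkers with \(j\in J'\) leaves only \(|J'|\) indicators, so this restriction has rank at most \(|J'|+k\). The analogous bound holds on the right, with \(p_a+r_a\) in place of \(p_a\).

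\textbf{Step 1 (the tree).} I would keep a spine that visits, in order, \(\Lambda_1,E_1^{\mathrm{sb}},\Lambda_2,\ldots\). Inside each \(E_h^{\mathrm{sb}}\), the spine visits the \(k\) subblocks \(B_{h,j}:=\{e_{h,j,p,r}\}\) in the order \(j=1,\ldots,k\). Each \(B_{h,j}\) is hung off the spine as a subcubic subtree, with leaves grouped first by \(p\) and then by \(r\). The layers \(\Lambda_h\) are handled as in Lemma~\ref{lem:basic-layout}: guards are placed next to their choice clique and \(c_h\) first. Alternatively, each \(\Lambda_h\) can be hung as a subtree so that the two checker interfaces meet it only from outside.

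\textbf{Step 2 (spine cuts inside \(E_h^{\mathrm{sb}}\)).} Consider a cut whose processed checkers are exactly \(\bigcup_{j\in J}B_{h,j}\) with \(J=[j_0]\). Layer \(h\) sees the unprocessed checkers through rows in the span of \(\{\mathbf 1[j=\ell]:\ell\notin J\}\cup\{p_a\}\), which has dimension at most \((k-|J|)+k\). Layer \(h+1\) sees the processed checkers with rank at most \(|J|+k\). The total is therefore \(3k\), and there are no other crossing edges in the basic graph.

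\textbf{Step 3 (cuts inside a subtree \(B_{h,j}\)).} Let \(C\subseteq B_{h,j}\) be the set of leaves below the cut edge. The complement contains both adjacent layers. The crossing matrix is then the union of two parts. The first is \(C\) against layer \(h\), with columns in the span of \(\{1,p_a\}\) on the checker side. The second is \(C\) against layer \(h+1\), spanned by \(\{1,p_a+r_a\}\). I would bound this carefully using the grouping. If \(C\) is contained in one \(p\)-class, the left part has rank at most \(1\) and the right part rank at most \(k+1\), since \(p\) is fixed and only \(r\) varies. If \(C\) is a union of whole \(p\)-classes (with all \(r\)), I would bound the rank by \(2k+2\) directly. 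Both bounds are \(\le 3k+1\). Equality \(3k\) should hold for \(k\geq2\) after absorbing the constant vector \(1\) in one of the spans.

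\textbf{Step 4 (layer cuts and refinements).} Layer cuts are as in Lemma~\ref{lem:basic-layout}: one checker interface on each side, each of rank at most \(k\) more than the indicator count, plus at most one each for the split choice group and for \(c_h\). Here I must verify that the total is \(\le 3k+1\). I would do this by hanging \(\Lambda_h\) as a side subtree, so that a cut inside it sees both interfaces only through layer-\(h\) rows. Those rows span at most \(2k\), covering both sides' \(\mathbf 1[j]\) and \(p\) or \(p+r\) terms, plus \(O(1)\). For the split and bipartite refinements, the global clique, respectively the hub \(z\), adds one all-one crossing pattern, giving \(+1\) and hence \(3k+2\).

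The main obstacle is Step 4 together with the constant bookkeeping in Step 3. The layer cuts naively pay \(2k\) on each interface, so I must rearrange the tree so that no cut simultaneously exposes the full left interface of \(E_h\) and the full right interface of \(E_{h-1}\). My first attempt would be to put layer \(h\) entirely on one side by hanging it as a leaf-subtree adjacent to the spine node between \(E_{h-1}\) and \(E_h\).
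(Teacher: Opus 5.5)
Your proposal is correct. For the layers, your final choice is the paper's: each \(\Lambda_h\) hangs off the backbone, so a cut inside it sees both checker blocks from the layer side. There the row of \(a_{h,a,x}\) is \(\sum_{\ell:x_\ell=1}f_\ell+q_a\), and the column selectors \(f_\ell\) are shared by \(E_{h-1}^{\mathrm{sb}}\) and \(E_h^{\mathrm{sb}}\), which gives \(2k+2\) in total. In the write-up, drop the linear option, which gives \(4k+2\). Also state that the layer subtree is a caterpillar rooted so that every non-leaf cut detaches a prefix of the order of Lemma~\ref{lem:basic-layout}; this is what limits the extra cost to one split choice group and one clause row.

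The real difference is in the checker blocks. The paper hangs each \(E_h^{\mathrm{sb}}\) as a single rooted caterpillar as well, so every cut inside it detaches a set of checkers with both endpoint layers on the other side. Each checker row is one column selector plus combinations of \(k\) row selectors common to both layers and \(k\) row selectors of layer \(h+1\) only. Hence any subset has rank at most \(3k\), with no internal ordering needed.

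You instead thread the subblocks \(B_{h,j}\) along the spine, so your spine cuts do separate layer \(h\) from layer \(h+1\). The ordering by \(j\) is what keeps these cuts at \((k-|J|)+k+|J|+k=3k\). Your hung subblocks are cheap whatever their internal shape: for fixed \(j\), every row lies in the span of one column selector and \(2k\) row selectors, so any subset of \(B_{h,j}\) has rank at most \(2k+1\). Neither the \(p\)-then-\(r\) grouping nor the ``absorption'' remark is needed.

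Your route proves the slightly stronger fact that cuts separating the two endpoint layers of a checker block can also be held at \(3k\). The price is in the split and bipartite refinements: the global clique, respectively the hub star, crosses those spine cuts, giving \(3k+1\) where the paper has \(3k\). Both stay within \(3k+2\). In both routes the \(k=1\) value \(3k+1\) comes only from the layer-cut bound \(2k+2\).
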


\begin{proof}
Use the block order
\[
  \Lambda_1,E_1^{\mathrm{sb}},\Lambda_2,E_2^{\mathrm{sb}},\ldots,
  E_{m-1}^{\mathrm{sb}},\Lambda_m.
\]
Make a caterpillar for these blocks and replace each block leaf by a rooted caterpillar containing its vertices.  Inside \(\Lambda_h\), use the order from Section~\ref{sec:unweighted-ds}; inside \(E_h^{\mathrm{sb}}\), use any order.  The roots are chosen so that a non-leaf cut inside a layer detaches a prefix of that order.  This gives a subcubic tree with one leaf per graph vertex.  The attachment edge of a rooted block detaches the entire block and is included in the corresponding inside-block case below.

Three kinds of cuts occur.  First, a backbone cut lies between two whole blocks.  Only one layer--checker interface crosses it, so its rank is at most \(2k\).

Second, a cut inside \(E_h^{\mathrm{sb}}\) detaches a subset of that checker block.  Over the two endpoint layers, every checker row is the sum of one of \(k\) column-selector vectors, a linear combination of \(k\) row-selector vectors present in both layers, and a linear combination of \(k\) row-selector vectors supported only on the later layer.  Hence the combined incidence matrix has rank at most \(3k\).

Third, consider a cut inside \(\Lambda_h\).  With coordinates indexed by \(E_{h-1}^{\mathrm{sb}}\cup E_h^{\mathrm{sb}}\), let \(f_\ell\) be the binary vector that records whether a checker has basis index \(j=\ell\), in either block.  Let \(q_a\) be the binary vector whose coordinate is \(p_a+r_a\) on \(E_{h-1}^{\mathrm{sb}}\) and \(p_a\) on \(E_h^{\mathrm{sb}}\).  The row of an assignment vertex \(a_{h,a,x}\) toward these two checker blocks is
\[
  \sum_{\ell:x_\ell=1} f_\ell+q_a,
\]
so both checker interfaces together have rank at most \(2k\).  A prefix cut inside the layer increases the rank by at most one for the split choice group and by at most one for the clause vertex, exactly as in Lemma~\ref{lem:basic-layout}; a leaf cut is no larger.  Its total rank is at most \(2k+2\).

The width is therefore
\[
  \max\{2k,3k,2k+2\}\leq3k+1,
\]
and is at most \(3k\) for \(k\geq2\).

For the split refinement, use the same tree.  A backbone cut has rank at most \(2k+1\): one checker interface contributes at most \(2k\), and the global assignment clique increases the rank by at most one.  A cut inside a checker block still has rank at most \(3k\).  A cut inside a layer has rank at most \(2k+3\): the two checker interfaces together contribute at most \(2k\), while the global assignment clique, the split guard--assignment incidence, and the clause vertex each increase the rank by at most one.

For the bipartite refinement, attach at one end of the backbone a rooted three-leaf block containing \(z,z^0,z^1\).  Cuts inside this block have rank at most one.  For every other cut, the hub star increases the rank by at most one; on a cut that detaches part of a checker block, the hub and all assignment vertices lie on the same side, so the hub contributes no crossing edge.  Thus backbone cuts have rank at most \(2k+1\), checker-block cuts have rank at most \(3k\), and layer-block cuts have rank at most \(2k+3\): besides the two checker interfaces, only the hub star, the split guard--assignment incidence, and the clause vertex need be counted.  Consequently both refinements have width at most
\[
  \max\{2k+1,3k,2k+3\}\leq3k+2.
\]
\end{proof}

\section{Quantitative \texorpdfstring{\(\#\mathrm{SETH}\)}{\#SETH} constants}
\label{app:seth-constants}

We record constants only for counting unweighted Dominating Set.  We use the standard counting SETH formulation: for every \(\delta\in(0,1)\), there exists a clause width \(q\) for which no \(2^{(1-\delta)N}N^{O(1)}\)-time algorithm counts the satisfying assignments of \(q\)-CNF formulas on \(N\) variables~\cite{curticapean-marx-2016,focke-et-al-2023}.

\begin{theorem}
\label{thm:quantitative-counting}
Assume \(\#\mathrm{SETH}\).  The following statements hold on the basic monopolar instances.
\begin{enumerate}[label=\textup{(\roman*)}]
\item For every \(\varepsilon\in(0,1/9)\), when a rank-decomposition of width \(w\) is supplied, neither \(\Zle(G,d)\) nor \(\Zeq(G,d)\) can be computed in time
\[
  2^{(1/9-\varepsilon)w^2}|V(G)|^{O(1)}.
\]
\item For every \(\varepsilon\in(0,1/16)\), when a vertex order of width \(w\) is supplied, neither quantity can be computed in time
\[
  2^{(1/16-\varepsilon)w^2}|V(G)|^{O(1)}.
\]
\end{enumerate}
\end{theorem}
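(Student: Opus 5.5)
The plan is a direct reduction from counting $q$-CNF under $\#\mathrm{SETH}$, reusing the basic monopolar construction $G_\varphi$ without change, and then tracking constants. First I would observe that nothing in Section~\ref{sec:unweighted-ds} uses clause width three. The clause vertex $c_h$ is joined to $a_{h,a,x}$ whenever $x$ satisfies some literal of $C_h$ in row $a$, and this makes sense for clauses of any width $q$. Lemmas~\ref{lem:basic-normal-form}, \ref{lem:basic-equality} and~\ref{lem:basic-bijection} only need the clauses to be nonempty. Hence $\Zle(G_\varphi,d)=\Zeq(G_\varphi,d)=\#\operatorname{SAT}(\varphi)$ holds for $q$-CNF formulas. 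The same is true with the standard-basis checker of Appendix~\ref{app:standard-checker}, by Lemma~\ref{lem:standard-checker}.

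Given $\varepsilon$, I set $\delta:=9\varepsilon/2$ for part~(i) and $\delta:=8\varepsilon$ for part~(ii), and let $q$ be the clause width supplied by $\#\mathrm{SETH}$ for this $\delta$. Take a $q$-CNF formula on $N$ variables with $m\le \binom{2N}{\le q}=N^{O(1)}$ clauses. I pad it parsimoniously exactly as after Lemma~\ref{lem:square-eth}: set $k=\lceil\sqrt N\rceil$ and add negative unit clauses. This gives $k^2\le N+2\sqrt N+1$. Next I build $G_\varphi$ with the standard-basis checkers. It has $m2^{O(k)}=2^{O(\sqrt N)}$ vertices and is constructible in that time, so any factor $|V(G)|^{O(1)}$ is only $2^{O(\sqrt N)}$.

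For part~(i), Theorem~\ref{thm:three-k-decomposition} supplies a rank-decomposition of width $w\le 3k$ for $k\ge2$. A hypothetical algorithm would then run in time
\[
2^{(1/9-\varepsilon)9k^2}2^{O(\sqrt N)}=2^{(1-9\varepsilon)N+O(\sqrt N)}\le 2^{(1-\delta)N}
\]
for all large $N$, and would count satisfying assignments of $q$-CNF formulas, contradicting $\#\mathrm{SETH}$. For part~(ii), I use the order $\pi$ of \eqref{eq:basic-order}, whose width is at most $4k+2$ by Lemma~\ref{lem:basic-layout}; the appendix notes that this width bound survives the checker replacement. Then $(1/16-\varepsilon)(4k+2)^2=(1-16\varepsilon)k^2+O(k)=(1-16\varepsilon)N+O(\sqrt N)$, and the same contradiction follows with $\delta=8\varepsilon$.

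The main point needing care is the bookkeeping of lower-order terms. The padding adds $O(\sqrt N)$ to $k^2$, the additive width constants contribute $O(k)$, and the polynomial factor in $|V(G)|$ contributes $2^{O(\sqrt N)}$. All of these are $o(N)$, and they are absorbed because $\delta$ was chosen strictly smaller than $9\varepsilon$ (respectively $16\varepsilon$). I will also verify that the generalization of the clause gadget to width $q$ leaves the rank bounds untouched, since $c_h$ still contributes a single crossing row. Small $N$, where $k<2$, is handled by brute force.
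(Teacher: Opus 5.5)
Your proposal is correct. It follows the paper's skeleton: parsimonious padding to $k^2$ variables, the standard-basis construction, width $3k$ (or $3k+1$) from Theorem~\ref{thm:three-k-decomposition} and width $4k+2$ from the order $\pi$, and then tracking of constants. It differs from the paper in one genuine respect: you never invoke disjoint counting sparsification.

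The paper splits each $q$-CNF formula into at most $2^{\eta N}$ sparse formulas with $m=O(N)$ clauses and runs the hypothetical algorithm on each. It must therefore divide the available saving between the sparsification overhead $\eta$ and the per-instance exponent. It takes $\delta=\eta=3\varepsilon$ (respectively $4\varepsilon$) and obtains total time $2^{(1-6\varepsilon)N+o(N)}$ (respectively $2^{(1-12\varepsilon)N+o(N)}$).

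You instead observe that, for fixed $q$, a deduplicated formula has at most $(2N)^q$ clauses. Hence $|V(G)|^{O(1)}=(m2^{O(k)})^{O(1)}=2^{O(\sqrt N)}$ already, and a single instance suffices. Any $\delta<9\varepsilon$ (respectively $\delta<16\varepsilon$) then gives the contradiction.

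Your route is shorter and shows that sparsification is not needed for this theorem. The construction is polynomial in $m$, so the linear clause bound the paper obtains buys nothing extra here.

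The only thing to add is the trivial case of formulas that contain an empty clause or have no clauses, where the count is $0$ or $2^N$. The paper disposes of these explicitly, and you should too.
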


\begin{proof}
For the first claim, set \(\eta=3\varepsilon\) and let \(q\) be the clause width guaranteed by \(\#\mathrm{SETH}\) for the saving \(\delta=3\varepsilon\).  Disjoint counting sparsification produces at most \(2^{\eta N}\) sparse \(q\)-CNF formulas on a common \(N\)-variable universe whose satisfying-assignment sets are pairwise disjoint~\cite[Appendix~A]{dell-et-al-2014}.  Formulas with an empty clause or with no clauses are counted directly; every remaining formula has \(m=O(N)\) nonempty clauses.  The construction does not use the bound three on clause size, so apply it to each remaining sparse formula and sum the resulting counts.  Pad each formula parsimoniously to \(k^2\) variables, where \(k=\lceil\sqrt N\rceil\).  By Lemma~\ref{lem:standard-checker} and the proof of Lemma~\ref{lem:basic-bijection}, the standard-basis construction is parsimonious; it has \(m2^{O(k)}=2^{o(N)}\) vertices.  Appendix~\ref{app:rank-decomposition} supplies width \(w\leq3k+1\).  On each sparse instance, the first hypothetical algorithm would therefore count satisfying assignments in time
\[
  2^{(1/9-\varepsilon)(3k+1)^2+o(N)}
  =2^{(1-9\varepsilon)N+o(N)}.
\]
The total running time over all sparsified instances is \(2^{(1-6\varepsilon)N+o(N)}\), which for sufficiently large \(N\) is at most \(2^{(1-3\varepsilon)N}\), contradicting the choice of \(q\).  For the second claim, repeat the argument with \(\eta=4\varepsilon\) and the clause width guaranteed by \(\#\mathrm{SETH}\) for \(\delta=4\varepsilon\).  The standard-basis analogue of the order in Lemma~\ref{lem:basic-layout} has width at most \(4k+2\), giving a total running time of \(2^{(1-12\varepsilon)N+o(N)}\), which is eventually at most \(2^{(1-4\varepsilon)N}\), again a contradiction.
\end{proof}

\section{Perfect Code on split graphs}
\label{app:perfect-code}

A perfect code is a \((\{0\},\{1\})\)-set.  Eschen and Wang observed the same two-case structure for perfect codes on split graphs~\cite{eschen-wang-2014}.  We state the characterization explicitly because it additionally yields a linear-time implicit representation and count, as well as minimum- and maximum-cardinality solutions.

\begin{theorem}
\label{thm:perfect-code-split}
Given a split graph with partition \(V(G)=C\mathbin{\dot\cup}I\), all its perfect codes can be represented implicitly and counted in \(O(|V(G)|+|E(G)|)\) time.  Within the same time bound, a minimum- and a maximum-cardinality perfect code can be found whenever one exists.
\end{theorem}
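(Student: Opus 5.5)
The plan is a direct structural case analysis on how a perfect code \(S\) meets the clique \(C\). Since a perfect code is independent and \(C\) is a clique, \(|S\cap C|\leq1\). We split into two cases and show that each case leaves essentially no freedom: the code is determined by its intersection with \(C\). The resulting explicit family is then enumerated implicitly in linear time.

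\emph{Case 1: \(S\cap C=\{c\}\).}
\begin{itemize}
\item A vertex \(v\in I\setminus N(c)\) has all its neighbors in \(C\), and the only selected vertex of \(C\) is \(c\). So \(v\) has no selected neighbor, which is impossible for an unselected vertex under \(\rho=\{1\}\). Hence \(I\setminus N(c)\subseteq S\).
\item Vertices of \(N(c)\cap I\) are adjacent to \(c\), so they cannot be selected without violating independence.
\item Together, these force \(S=\{c\}\cup(I\setminus N(c))\).
\item Each \(c'\in C\setminus\{c\}\) already has \(c\) as a selected neighbor. It must therefore have no neighbor in \(I\setminus N(c)\).
\item Independence additionally requires that \(c\) have no neighbor in \(I\setminus N(c)\), which is automatic.
\end{itemize}
Combining these, every vertex of \(I\setminus N(c)\) has no neighbor in \(C\). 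The conditions thus reduce to the following: every non-isolated vertex of \(I\) is adjacent to \(c\), and then \(S=\{c\}\cup I_0\), where \(I_0\) is the set of isolated vertices of \(I\). Conversely, I will check that under this condition the set is a perfect code. Every unselected vertex of \(C\) and of \(N(c)\cap I\) sees exactly \(c\), and the set is independent.

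\emph{Case 2: \(S\subseteq I\).}
\begin{itemize}
\item Any unselected \(v\in I\) has neighbors only in \(C\), none of which is selected. Therefore \(S=I\).
\item \(S=I\) is independent automatically.
\item The remaining requirement is that every vertex of \(C\) has exactly one neighbor in \(I\).
\end{itemize}

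It follows that the perfect codes are exactly the sets \(\{c\}\cup I_0\) for qualifying \(c\), together with \(I\) when Case 2 applies. These sets are pairwise distinct, since the Case 2 code misses \(C\) and distinct \(c\) give distinct Case 1 codes. This yields the implicit representation: the list of qualifying clique vertices, plus one bit for Case 2.

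For the algorithm, the steps in order are:
\begin{enumerate}
\item Scan the adjacency lists once to compute \(d_I(c):=|N(c)\cap I|\) for each \(c\in C\), together with the number \(t\) of non-isolated vertices of \(I\).
\item A vertex \(c\) qualifies for Case 1 iff \(d_I(c)=t\), since \(N(c)\cap I\) is contained in the set of non-isolated vertices of \(I\).
\item Case 2 holds iff \(d_I(c)=1\) for all \(c\in C\).
\end{enumerate}
The count is the number of qualifying \(c\) plus the Case 2 indicator. Every Case 1 code has size \(1+|I_0|\) and the Case 2 code has size \(|I|\), so minimum and maximum codes are read off directly.

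The main obstacle is being careful with degenerate situations rather than any deep step. These are \(C=\varnothing\) (then only \(S=I\) works, and the Case 2 condition is vacuous), \(I=\varnothing\), and vertices of \(I\) whose neighborhoods are all of \(C\). A related subtlety is that the split partition may place a vertex ambiguously, but the characterization is stated for the given partition, and the argument above never uses maximality of \(C\). I would verify each degenerate case against the two formulas explicitly.
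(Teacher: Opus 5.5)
Your proposal is correct and follows essentially the same route as the paper. The paper also splits on \(|S\cap C|\le1\) and obtains the same two-part list: \(I\) when every clique vertex has exactly one neighbor in \(I\), and \(\{c\}\cup I_0\) when every non-isolated vertex of \(I\) is adjacent to \(c\). Its linear scan likewise compares \(|N(c)\cap I|\) with the number of non-isolated vertices of \(I\). Your explicit check that the listed codes are pairwise distinct, which the count needs, is a small point the paper leaves implicit.
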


\begin{proof}
Let
\[
  I_0:=\{u\in I:N(u)=\varnothing\},
  \qquad I_+:=I\setminus I_0.
\]
The complete list of perfect codes is
\begin{equation}
  \begin{cases}
    \{I\},&|N(c)\cap I|=1\text{ for every }c\in C,\\
    \varnothing,&\text{otherwise}
  \end{cases}
  \;\cup\;
  \bigl\{\{c\}\cup I_0:c\in C,\ I_+\subseteq N(c)\bigr\}.
  \label{eq:split-perfect-codes}
\end{equation}

Indeed, a perfect code \(D\) is independent, so \(|D\cap C|\leq1\).  If \(D\cap C=\varnothing\), every vertex of \(I\) must belong to \(D\); hence \(D=I\), and every clique vertex must have exactly one neighbor in \(I\). If \(D\cap C=\{c\}\), every nonisolated vertex of \(I\) must be adjacent to \(c\): otherwise it must belong to \(D\), which would give one of its clique neighbors both \(c\) and that vertex in its closed neighborhood. Conversely, when \(I_+\subseteq N(c)\), the set \(\{c\}\cup I_0\) meets every closed neighborhood exactly once.  This proves \eqref{eq:split-perfect-codes}.

One adjacency-list scan finds \(I_0\), tests whether every \(c\in C\) has exactly one neighbor in \(I\), and computes \(|N(c)\cap I_+|\) for every \(c\).  It therefore enumerates the displayed candidates implicitly, counts them, and compares their sizes \(|I|\) and \(1+|I_0|\) to find minimum- and maximum-cardinality candidates or to test any given target.  The running time is linear.
\end{proof}

\end{document}